\documentclass[12pt]{article}
\usepackage{amsmath, amssymb, stmaryrd, setspace, nicefrac, multicol, amsthm, geometry,endnotes}
\usepackage{fullpage}
\usepackage[utf8]{inputenc}
\usepackage{mathrsfs}
\usepackage{stmaryrd}

\newtheorem{thm}{Theorem}

\newtheorem{lem}[thm]{Lemma}
\newtheorem{cor}[thm]{Corollary}

\theoremstyle{definition}
\newtheorem*{defn}{Definition}

\newtheorem{obs}[thm]{Observation}

\numberwithin{thm}{section}

\newcommand{\harp}{\upharpoonright}
\newcommand{\ent}{\textrm{H}}
\newcommand{\mi}{\textrm{I}}

\usepackage{xcolor}	
	\usepackage{soul}

\title{Finite-State Mutual Dimension}
\author{Adam Case\footnote{Part of this research was carried out while the authors participated in the program ``Equidistribution: Arithmetic, Computational and Probabilistic Aspects'' at the National University of Singapore Institute for Mathematical Sciences in 2019.}\\ \small Drake University
        \and 
        Jack H. Lutz\footnotemark[1] \footnote{This research was supported in part by National Science Foundation grants 1545028 and 1900716.}\\ \small Iowa State University}
\date{} % delete this line to display the current date
%%% BEGIN DOCUMENT
\begin{document}
% Header information
\maketitle
\abstract{In 2004, Dai, Lathrop, Lutz, and Mayordomo defined and investigated the \emph{finite-state dimension} (a finite-state version of algorithmic dimension) of a sequence $S \in \Sigma^\infty$ and, in 2018, Case and Lutz defined and investigated the \emph{mutual (algorithmic) dimension} between two sequences $S \in \Sigma^\infty$ and $T \in \Sigma^\infty$. In this paper, we propose a definition for the \emph{lower} and \emph{upper finite-state mutual dimensions} $mdim_{FS}(S:T)$ and $Mdim_{FS}(S:T)$ between two sequences $S \in \Sigma^\infty$ and $T \in \Sigma^\infty$ over an alphabet $\Sigma$. Intuitively, the finite-state dimension of a sequence $S \in \Sigma^\infty$ represents the density of finite-state information contained within $S$, while the finite-state mutual dimension between two sequences $S \in \Sigma^\infty$ and $T \in \Sigma^\infty$ represents the density of finite-state information shared by $S$ and $T$. Thus ``finite-state mutual dimension'' can be viewed as a ``finite-state'' version of mutual dimension and as a ``mutual'' version of finite-state dimension.

The main results of this investigation are as follows. First, we show that finite-state mutual dimension, defined using \emph{information-lossless finite-state compressors}, has all of the properties expected of a measure of \emph{mutual information}. Next, we prove that finite-state mutual dimension may be characterized in terms of \emph{block mutual information rates}. Finally, we provide necessary and sufficient conditions for two normal sequences to achieve $mdim_{FS}(S:T) = Mdim_{FS}(S:T) = 0$.}
%\tableofcontents
%\pagebreak
%\onehalfspacing
%\let\footnote=\endnote

\section{Introduction}

The study of \emph{algorithmic dimension} has yielded various mechanisms for quantifying the density of information contained within infinite objects, such as points in Euclidean space \cite{jLutMay08} and sequences \cite{jLutz03a}. Recent investigations into the dimensions of points and sequences have produced new characterizations of classical Hausdorff dimension \cite{jHitc05,jLutLut18,jLutLut21} and insights into self-similar fractal geometry \cite{jLutMay08,jDLMT13,jXLMP14}, among other results. Originally defined in terms of \emph{gales} (a generalization of \emph{martingales}) \cite{jLutz03a}, the \emph{dimension} $dim(S)$ and \emph{strong dimension} $Dim(S)$ of a sequence $S \in \Sigma^\infty$ were shown to have the characterizations
\[
dim(S) = \displaystyle\liminf_{n \rightarrow \infty}\frac{K(S \harp n)}{n \log |\Sigma|}
\]
and
\[
Dim(S) = \displaystyle\limsup_{n \rightarrow \infty}\frac{K(S \harp n)}{n\log |\Sigma|},
\]
where $K(S \harp n)$ is the \emph{Kolmogorov complexity} of the first $n$ symbols of $S$ \cite{jMayo02,jAHLM07}. These characterizations show that $dim(S)$ and $Dim(S)$ can be thought of as the lower and upper densities of \emph{algorithmic information} contained within $S$. The algorithmic dimension and \emph{algorithmic randomness} of sequences have been shown to have interesting relationships. For example, if a sequence $S \in \Sigma^\infty$ is \emph{(algorithmically) random}, then $dim(S) = 1$. However, not all sequences that achieve $dim(S) = 1$ are necessarily random \cite{jLutz03a}.

The notion of the dimension of a sequence has been adapted to operate within different contexts in the fields of computability and information theory. For example, Dai, Lathrop, Lutz, and Mayordomo developed the notion of \emph{finite-state dimension}, which is a finite-state version of algorithmic dimension \cite{jDaLaLuMa04}. In their paper, the authors define finite-state dimension in terms of \emph{finite-state gamblers}. In \cite{jDaLaLuMa04} and \cite{jAHLM07} the authors show that the \emph{finite-state dimension} $dim_{FS}(S)$ and \emph{finite-state strong dimension} $Dim_{FS}(S)$ of a sequence $S \in \Sigma^\infty$ may be characterized by
\begin{align}\label{compratio}
dim_{FS}(S) = \inf\bigg\{\displaystyle\liminf_{n \rightarrow \infty}\frac{|C(S \harp n)|}{n \log \Sigma} \, \bigg | \, C \text{ is an ILFSC} \bigg\}
\end{align}
and
\begin{align}\label{scompratio}
Dim_{FS}(S) = \inf\bigg\{\displaystyle\limsup_{n \rightarrow \infty}\frac{|C(S \harp n)|}{n \log \Sigma} \, \bigg | \, C \text{ is an ILFSC} \bigg\},
\end{align}
where $C$ is an \emph{information-lossless finite-state compressor} (ILFSC) and $|C(S \harp n)|$ is the length of the output that $C$ produces when given the first $n$ symbols of $S$ as input. These quantities can be thought of as the lower and upper densities of \emph{finite-state information} contained within $S$ and are also known as the \emph{lower} and \emph{upper compression ratios} of $S$ as studied by Ziv and Lempel \cite{jZivLem78}.

Other characterizations of finite-state dimension have been shown. For example, Bourke, Hitchcock, and Vinodchandran proved that the lower and upper finite-state dimensions of a sequence $S \in \Sigma^\infty$ are equal to the lower and upper \emph{block entropy rates} of $S$, respectively (i.e., the lower and upper limiting normalized entropies of the frequencies of aligned blocks of symbols contained within $S$) \cite{jBoHiVi05}. In a recent paper, Kozachinskiy and Shen show that finite-state dimension can also be characterized in terms of the entropy rates of non-aligned blocks of symbols and in terms of superadditive calibrated functions on strings \cite{cKozShe2019}.

There have been several interesting explorations into the relationships between finite-state dimension and the concept of \emph{normality}, which was introduced by Borel in 1909 \cite{jBor09}. A sequence $S \in \Sigma^\infty$ is \emph{normal} if every string of the same length occurs with the same limiting frequency within $S$. Normality can be viewed as a weaker form of randomness, since every algorithmically random sequence is also normal but not vice-versa. In fact, it has been shown that a sequence is normal if and only if $dim_{FS}(S) = 1$ \cite{jDaLaLuMa04,jBoHiVi05}. Thus the normal sequences can be completely characterized as the sequences that achieve finite-state dimension one. This equivalence has recently been quantitatively refined using the Kullback-Leibler divergence \cite{jXiLuMaSt21}.

Another way in which the dimensions of sequences has been adapted to fit other contexts within information theory can be found in the development of \emph{mutual dimension}, which was introduced in 2015 by the present authors in \cite{jCasLut15}. In this paper, the authors defined the mutual dimension between two points in Euclidean space and showed that it has all the properties expected of a measure of mutual information, including several \emph{data processing inequalities}. In 2018, the same authors extended this framework to sequences and defined the \emph{lower} and \emph{upper mutual dimensions}, $mdim(S:T)$ and $Mdim(S:T)$, respectively, between two sequences $S \in \Sigma^\infty$ and $T \in \Sigma^\infty$ by
\[
mdim(S) = \displaystyle\liminf_{n \rightarrow \infty}\frac{I(S \harp n:T \harp n)}{n \log |\Sigma|}
\]
and
\[
Mdim(S) = \displaystyle\limsup_{n \rightarrow \infty}\frac{I(S \harp n:T \harp n)}{n\log |\Sigma|},
\]
where $I(S \harp n: T \harp n)$ is the \emph{algorithmic mutual information} between the first $n$ bits of $S$ and $T$ \cite{jCasLut18}. The algorithmic mutual information $I(u:w)$ between two strings $u \in \Sigma^*$ and $w \in \Sigma^*$ is
\[
I(u:w) = K(w) - K(w\,|\,u),
\]
where $K(w\,|\,u)$ is the \emph{Kolmogorov complexity} of $w$ \emph{given} $u$. However, this quantity can also be characterized by
\[
I(u:w) = K(u) + K(w) - K(u,w) + o(|u|),
\]
where $K(u,w)$ is the \emph{joint Kolmogorov complexity} of $u$ and $w$. (The interested reader may refer to \cite{bLiVit08} for an in-depth discussion on algorithmic mutual information.) Therefore, we can view the lower and upper mutual dimensions as the lower and upper densities of algorithmic mutual information shared by two sequences. In the same paper, the authors demonstrate that, if two sequences $S \in \Sigma^\infty$ and $T \in \Sigma^\infty$ are \emph{independently random}, then $Mdim(S:T) = 0$. However, they also show that not all pairs of sequences that achieve mutual dimension zero are necessarily independently random \cite{jCasLut18}.

The purpose of this article is to develop a notion of \emph{finite-state mutual dimension}, which includes defining it using information-lossless finite-state compressors, proving that it can be characterized in terms of block entropy rates, and exploring its relationship with normal sequences. The outline of this article is as follows. In Section 2, we define the \emph{joint} compression ratio of two strings as well as the \emph{mutual compression ratio} between two strings. Using Ziv and Lempel's Generalized Kraft Inequality \cite{jZivLem78}, we establish several relationships between the \emph{Shannon entropy} of the \emph{joint block frequency} of two strings $u \in \Sigma^n$ and $w \in \Sigma^n$ and the joint compression ratio of $u$ and $w$. Using these relationships, we are able to prove the basic properties of the mutual compression ratio between finite strings. In Section 3, we extend the notion of the mutual compression ratio to infinite sequences and use it to define the lower and upper finite-state mutual dimensions. We prove an important theorem regarding the interchangeability of the iterated limits within the definition of finite-state mutual dimension, which we then use to prove the basic properties of finite-state mutual dimension. In Section 4, we introduce the \emph{lower} and \emph{upper block mutual information rates} between two sequences $S \in \Sigma^\infty$ and $T \in \Sigma^\infty$ and show that they are equal to the lower and upper finite-state mutual dimensions, respectively. In Section 5, we obtain a result regarding the independence of sequences at the finite-state level. Specifically, we prove that, if $R_1 \in \Sigma^\infty$ and $R_2 \in \Sigma^\infty$ are normal, then the sequence $(R_1,R_2) \in (\Sigma \times \Sigma)^\infty$ is normal if and only if $Mdim_{FS}(R_1:R_2) = 0$, where $(R_1,R_2)$ is the sequence obtained by pairing the symbols of $R_1$ and $R_2$ at the same index.

%%%%%%%%%%%%%%%%%%%%%%%%%%%%%%%%%%%%%%%%%%%%%%%%%%%%%%%%%%%%

\section{Joint and Mutual Compression Ratios of Strings}
In this section, we define and investigate the \emph{joint compression ratio} of two strings. To do this, we make use of some relationships between the compression ratios and entropies of the relative frequencies of strings that were originally established by Ziv and Lempel \cite{jZivLem78} and further examined by Sheinwald \cite{jShei94}. We also introduce the \emph{mutual compression ratio} between two strings and explore its properties.

In this paper, we assume that $\Sigma$ is an alphabet consisting of $k$ symbols. We write $\Sigma^*$ to represent the set of all strings over $\Sigma$ and $\Sigma^\infty$ to represent the set of all sequences over $\Sigma$. The \emph{length} of a string $u \in \Sigma^*$ is denoted by $|u|$ and we represent the set of all strings of length $n \in \mathbb{N}$ by $\Sigma^n$. For any sequence $S \in \Sigma^\infty$, we write $S \harp n$ for the first $n \in \mathbb{N}$ symbols of $S$. For any string $u \in \Sigma^*$ and sequence $S \in \Sigma^\infty$, we write $u[i]$ and $S[i]$ for the $i^{th}$ bit of $u$ and the $i^{th}$ bit of $S$, respectively. For any two strings $u \in \Sigma^n$ and $w \in \Sigma^n$, we write $(u,w)$ to represent the string
\[
(u,w) = (u[1],w[1])(u[2],w[2])\cdots(u[n],w[n]) \in (\Sigma \times \Sigma)^n.
\]
Note that the lengths of $u$ and $w$ must be equal in order to use the notation $(u,w)$ for strings. Similarly, for any two sequences $S \in \Sigma^\infty$ and $T \in \Sigma^\infty$, we write $(S,T)$ to represent the sequence
\[
(S,T) = (S[1],T[1])(S[2],T[2])\cdots \in (\Sigma \times \Sigma)^\infty.
\]
We will write $\log$ for the base-2 logarithm function and $\log_k$ for the base-k logarithm function.

A \emph{discrete probability measure} $\alpha$ on a finite set $\mathcal{X}$ is a function $\alpha: \mathcal{X} \rightarrow [0,1]$ such that
\[
\displaystyle\sum_{x \in \mathcal{X}}\alpha(x) = 1.
\]
\begin{defn}
Let $\alpha$ be a discrete probability measure on $\mathcal{X}$. The \emph{Shannon entropy} of $\alpha$ is
\[
\ent(\alpha) = \displaystyle\sum_{x \in \mathcal{X}}\alpha(x)\log\frac{1}{\alpha(x)}.
\]
\end{defn}

If $\alpha$ is a discrete probability measure on $\mathcal{X} \times \mathcal{X}$ (sometimes called a \emph{joint probability measure} on $\mathcal{X}$), we will write $\alpha(x,y)$ to denote the value $\alpha((x,y))$ assigned to the pair $(x,y)$ by $\alpha$. The \emph{first} and \emph{second marginal probability measures} of $\alpha$ are the probability measures $\alpha_1$ and $\alpha_2$ on $\mathcal{X}$ defined by
\[
\alpha_1(a) = \displaystyle\sum_{b \in \mathcal{X}}\alpha(a,b) \text{\,\,\,\, and \,\,\,\,} \alpha_2(b) = \displaystyle\sum_{a \in \mathcal{X}}\alpha(a,b),
\]
respectively.

The following theorem states well-known inequalities regarding Shannon entropy \cite{bCovTho06}.
\begin{thm}\label{Sha}
Let $\alpha$ be a probability measure on $\mathcal{X} \times \mathcal{X}$.
\begin{enumerate}
\item $\emph{\ent}(\alpha) \leq \emph{\ent}(\alpha_1) + \emph{\ent}(\alpha_2)$.
\item $\max\{\emph{\ent}(\alpha_1), \emph{\ent}(\alpha_2)\} \leq \emph{\ent}(\alpha)$.
\item If $\sum_{a \in \Sigma}\alpha(a,a) = 1$, then $\emph{\ent}(\alpha) = \emph{\ent}(\alpha_1) = \emph{\ent}(\alpha_2)$.
\end{enumerate}
\end{thm}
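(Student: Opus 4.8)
The plan is to derive all three inequalities from the single elementary fact that $\ln x \le x-1$ for $x>0$ (equivalently, the nonnegativity of Kullback--Leibler divergence), together with some bookkeeping on the marginals; these are textbook facts \cite{bCovTho06}, so the write-up is short and the only care needed is the handling of zero-probability pairs, for which I would adopt the standard convention $0\log\tfrac10 = 0$. First I would rewrite the marginal entropies as sums over $\mathcal X\times\mathcal X$: since $\alpha_1(a)=\sum_b\alpha(a,b)$, we have $\ent(\alpha_1)=\sum_{a,b}\alpha(a,b)\log\frac{1}{\alpha_1(a)}$, and likewise $\ent(\alpha_2)=\sum_{a,b}\alpha(a,b)\log\frac{1}{\alpha_2(b)}$. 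This puts $\ent(\alpha)$, $\ent(\alpha_1)$, and $\ent(\alpha_2)$ on a common footing as sums over the same index set, which makes the comparisons termwise-transparent.

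For part~2, observe that $\alpha_1(a)=\sum_{b'}\alpha(a,b')\ge\alpha(a,b)$ for every pair with $\alpha(a,b)>0$, hence $\log\frac{1}{\alpha(a,b)}\ge\log\frac{1}{\alpha_1(a)}$ for those pairs; multiplying by $\alpha(a,b)$ and summing gives $\ent(\alpha)\ge\ent(\alpha_1)$, and the estimate for $\ent(\alpha_2)$ is identical with the two coordinates interchanged (one could alternatively invoke the chain rule $\ent(\alpha)=\ent(\alpha_1)+\ent(\alpha_2\mid\alpha_1)$ and nonnegativity of conditional entropy). For part~1, subtracting the identities above yields $\ent(\alpha_1)+\ent(\alpha_2)-\ent(\alpha)=\sum_{a,b}\alpha(a,b)\log\frac{\alpha(a,b)}{\alpha_1(a)\alpha_2(b)}$, the divergence of $\alpha$ from the product measure $\alpha_1\times\alpha_2$; I would show this is $\ge 0$ by applying $\ln x\le x-1$ with $x=\frac{\alpha_1(a)\alpha_2(b)}{\alpha(a,b)}$ over the pairs with $\alpha(a,b)>0$, which bounds the negative of the sum (times $\ln 2$) above by $\sum_{a,b}\alpha_1(a)\alpha_2(b)-\sum_{a,b}\alpha(a,b)\le 1-1=0$.

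Finally, part~3 is a direct computation: the hypothesis $\sum_a\alpha(a,a)=1$ forces $\alpha(a,b)=0$ whenever $a\ne b$, hence $\alpha_1(a)=\alpha(a,a)=\alpha_2(a)$ for every $a$, and all three entropy sums collapse to $\sum_a\alpha(a,a)\log\frac{1}{\alpha(a,a)}$, giving $\ent(\alpha)=\ent(\alpha_1)=\ent(\alpha_2)$. There is no genuine obstacle here — the theorem is included only to fix notation and conventions for the mutual compression-ratio arguments that follow; the one point requiring a moment's attention is restricting the logarithmic sums (and the domain of $\ln x\le x-1$) to pairs of positive probability so that no $\log 0$ or division by zero arises.
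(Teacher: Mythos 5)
Your proof is correct. The paper does not prove Theorem \ref{Sha} at all --- it states the three facts as ``well-known inequalities'' and cites \cite{bCovTho06} --- and your argument (termwise marginal domination for part~2, nonnegativity of the divergence from the product measure via $\ln x \le x-1$ for part~1, and the diagonal collapse for part~3, with the convention $0\log\frac{1}{0}=0$ handling the degenerate terms) is exactly the standard textbook proof being invoked.
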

 
For any $n,\ell \in \mathbb{Z}^+$ such that $n$ is a multiple of $\ell$ and all $x \in \Sigma^\ell$ and $u \in \Sigma^n$, we denote the number of \emph{block occurrences} of $x$ in $u$ to be
\[
\#_\Box(x,u) = \bigg|\bigg\{m \leq \frac{n}{\ell} \,\Big|\, u[m\ell\ldots (m+1)\ell - 1] = x\bigg\}\bigg|
\]
and the \emph{block frequency} of $x$ in $u$ by the function $\pi_u: \Sigma^* \rightarrow \mathbb{Q}_{[0,1]}$ defined by
\[
\pi_u(x) = \frac{l}{n}\#_\Box(x,u),
\]
where $\mathbb{Q}_{[0,1]}$ is the set of all rationals in $[0,1]$. For all $n,\ell \in \mathbb{Z}^+$ such that $n$ is a multiple of $\ell$ and $u \in \Sigma^n$, we denote the restriction of $\pi_u$ to strings in $\Sigma^{\ell}$ by $\pi_u^{(\ell)}$. It is important to note that $\pi_u^{(\ell)}$ represents a discrete probability measure on the finite set $\Sigma^{\ell}$.

For all $x,y \in \Sigma^\ell$ and $u,w \in \Sigma^n$, we denote the \emph{joint block frequency} of $x$ in $u$ and $y$ in $w$ by the function $\pi_{u,w}: \Sigma^* \times \Sigma^* \rightarrow \mathbb{Q}_{[0,1]}$ defined by
\[
\pi_{u,w}(x,y) = \frac{\ell}{n}\#_\Box((x,y),(u,w)).
\]
We denote the restriction of $\pi_{u,w}$ to the pairs of strings in $\Sigma^{\ell} \times \Sigma^{\ell}$ by $\pi_{u,w}^{(\ell)}$. Once again, we note that $\pi_{u,w}^{(\ell)}$ is a discrete probability measure on $\Sigma^\ell \times \Sigma^\ell$. It is easy see that, for all $x,y \in \Sigma^\ell$, $\pi_{(u,w)}^{(\ell)}((x,y)) = \pi_{u,w}^{(\ell)}(x,y)$. Also, it is also important to observe that the first and second marginal probability measures of $\pi_{u,w}^{(\ell)}$ are $\pi_u^{(\ell)}$ and $\pi_w^{(\ell)}$, respectively.

By applying Theorem \ref{Sha} to $\pi_{u,w}^{(\ell)}$, where $\alpha = \pi_{u,w}^{(\ell)}$, $\alpha_1 = \pi_u^{(\ell)}$, and $\alpha_2 = \pi_w^{(\ell)}$, we obtain the following corollary.

\begin{cor}\label{Shac}
For every $\ell \in \mathbb{Z}^+$ and every $n \in \mathbb{N}$ and $u,w \in \Sigma^n$ such that $n$ is a multiple of $\ell$, the following hold.
\begin{enumerate}
\item $\emph{\ent}(\pi_{u,w}^{(\ell)}) \leq \emph{\ent}(\pi_u^{(\ell)}) + \emph{\ent}(\pi_w^{(\ell)})$.
\item $\max\{\emph{\ent}(\pi_u^{(\ell)}), \emph{\ent}(\pi_w^{(\ell)})\} \leq \emph{\ent}(\pi_{u,w}^{(\ell)})$.
\item $\emph{\ent}(\pi_{u,u}^{(\ell)}) = \emph{\ent}(\pi_u^{(\ell)})$.
\item $\emph{\ent}(\pi_{u,w}^{(\ell)}) = \emph{\ent}(\pi_{w,u}^{(\ell)})$.
\end{enumerate}
\end{cor}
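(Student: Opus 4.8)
The plan is to verify that $\pi_{u,w}^{(\ell)}$ meets the hypotheses of Theorem \ref{Sha} and then simply read off the four conclusions. First I would invoke the observations made just before the corollary: $\pi_{u,w}^{(\ell)}$ is a discrete probability measure on $\Sigma^\ell \times \Sigma^\ell$ whose first and second marginals are exactly $\pi_u^{(\ell)}$ and $\pi_w^{(\ell)}$. Taking $\mathcal{X} = \Sigma^\ell$, $\alpha = \pi_{u,w}^{(\ell)}$, $\alpha_1 = \pi_u^{(\ell)}$, and $\alpha_2 = \pi_w^{(\ell)}$, parts (1) and (2) of the corollary are then nothing more than parts (1) and (2) of Theorem \ref{Sha} transcribed into this notation.

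For part (3) I would check that $\pi_{u,u}^{(\ell)}$ is concentrated on the diagonal of $\Sigma^\ell \times \Sigma^\ell$. Indeed, for $x \neq y$ in $\Sigma^\ell$, the $m$-th length-$\ell$ block of $(u,u)$ has the form $(v,v)$ with $v = u[m\ell\ldots(m+1)\ell - 1]$, so it can never equal $(x,y)$; hence $\#_\Box((x,y),(u,u)) = 0$ and $\pi_{u,u}^{(\ell)}(x,y) = 0$. Therefore $\sum_{a \in \Sigma^\ell}\pi_{u,u}^{(\ell)}(a,a) = \sum_{(x,y) \in \Sigma^\ell\times\Sigma^\ell}\pi_{u,u}^{(\ell)}(x,y) = 1$, and part (3) of Theorem \ref{Sha} applies to give $\ent(\pi_{u,u}^{(\ell)}) = \ent\big((\pi_{u,u}^{(\ell)})_1\big) = \ent(\pi_u^{(\ell)})$.

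Part (4) does not use Theorem \ref{Sha} at all. The swap map $(x,y)\mapsto(y,x)$ is a bijection on $\Sigma^\ell \times \Sigma^\ell$, and since the $m$-th block of $(u,w)$ equals $(x,y)$ exactly when the $m$-th block of $(w,u)$ equals $(y,x)$, we get $\#_\Box((x,y),(u,w)) = \#_\Box((y,x),(w,u))$ and hence $\pi_{u,w}^{(\ell)}(x,y) = \pi_{w,u}^{(\ell)}(y,x)$. Reindexing the defining sum for Shannon entropy along this bijection yields $\ent(\pi_{u,w}^{(\ell)}) = \ent(\pi_{w,u}^{(\ell)})$. The whole argument is routine once the two structural facts about $\pi_{u,w}^{(\ell)}$ are in hand; the only place that needs a moment's care is confirming in part (3) that the off-diagonal frequencies really do vanish for every $x \neq y$, which is immediate from the definition of $\#_\Box$, so I anticipate no genuine obstacle.
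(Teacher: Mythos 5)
Your proposal is correct and follows exactly the route the paper intends: the corollary is obtained by applying Theorem \ref{Sha} with $\alpha = \pi_{u,w}^{(\ell)}$, $\alpha_1 = \pi_u^{(\ell)}$, $\alpha_2 = \pi_w^{(\ell)}$, using the already-noted fact that these are the marginals. You merely fill in the routine verifications (the diagonal support of $\pi_{u,u}^{(\ell)}$ for part (3) and the swap bijection for part (4)) that the paper leaves implicit.
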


We now proceed to discuss the finite-state compressibility of strings. A \emph{finite-state compressor} (FSC) $C$ on $\Sigma$ is a 4-tuple
\[
C = (Q, \delta, \nu, q_0),
\]
where $Q$ is a nonempty finite set of \emph{states}, $\delta: Q \times \Sigma \rightarrow Q$ is the \emph{transition function}, $\nu: Q \times \Sigma \rightarrow \{0,1\}^*$ is the \emph{output function}, and $q_0$ is the \emph{initial state}. We define the \emph{extended transition function} $\delta^*: Q \times \Sigma^* \rightarrow Q$ by the recursion
\begin{align*}
\delta^*(q,\lambda) &= q,\\
\delta^*(q,wa) &= \delta(\delta^*(q,w),a),
\end{align*}
for all $q \in Q$, $u \in \Sigma^*$, and $a \in \Sigma$. The output function $\nu$ is defined by the recursion
\begin{align*}
\nu(q_0,\lambda) &= \lambda,\\
\nu(q_0,wa) &= \nu(q_0,w)\nu(\delta^*(q_0,w), a),
\end{align*}
for all $u \in \Sigma^*$ and $a \in \Sigma$. The \emph{output} of $C$ on the input string $u \in \Sigma^*$ is denoted by $C(u) = \nu(q_0,u)$. An \emph{information-lossless finite-state compressor} (ILFSC) $C$ is an FSC where the function $f: \Sigma^* \rightarrow \{0,1\}^* \times Q$, defined by $f(u) = (C(u),\delta^*(u))$, is one-to-one.

The \emph{compression ratio of} $u \in \Sigma^n$ attained by an ILFSC $C$ on $\Sigma$ is
\[
\rho_C(u) = \frac{|C(u)|}{n\log k}.
\]

Likewise, the \emph{joint compression ratio of} $u \in \Sigma^n$ and $w \in \Sigma^n$ attained by an ILFSC $C$ on $\Sigma \times \Sigma$ is
\[
\rho_C(u,w) = \frac{|C((u,w))|}{n\log k}.
\]

\begin{defn}
The $r$-\emph{state compression ratio of} $u \in \Sigma^n$ is
\[
\rho_r(u) = \min\big\{\rho_C(u) \,\bigg|\, C \text{ is an ILFSC on } \Sigma \text{ that has } r \text{ states} \big\}.
\]
\end{defn}

\begin{defn}
The $r$-\emph{state joint compression ratio of} $u \in \Sigma^n$ and $w \in \Sigma^n$ is
\[
\rho_r(u,w) = \min\big\{\rho_C(u,w) \,\bigg|\, C \text{ is an ILFSC on } \Sigma \times \Sigma \text{ that has } r \text{ states} \big\}.
\]
\end{defn}
It is important to note that $\rho_r((u,w))$ is the $r$-state compression ratio of the string $(u,w) \in (\Sigma \times \Sigma)^n$ and $\rho_r(u,w)$ is the $r$-state joint compression ratio of $u \in \Sigma^n$ and $w \in \Sigma^n$.

The following lemma was proven by Ziv and Lempel in \cite{jZivLem78}.

\begin{lem}[Generalized Kraft Inequality \cite{jZivLem78}]\label{kraft}
For any ILFSC $C$ on $\Sigma$ with a state set $Q = \{q_1, q_2, \ldots, q_s\}$,
\[
\displaystyle\sum_{w \in \Sigma^{r}}2^{-L_C(w)} \leq s^2 \bigg (1 + \log{\frac{s^2 + k^{r}}{s^2}} \bigg ),
\]
where
\[
L_C(w) = \displaystyle\min_{q \in Q}\{|C_q(w)|\}
\]
and $C_q$ is the ILFSC that is like $C$ except that it uses $q$ as the start state.
\end{lem}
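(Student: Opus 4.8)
The plan is to derive the inequality directly from the information-losslessness of $C$, using two elementary ingredients: a Kraft-type bound for finite sets of distinct binary strings, and the concavity of the logarithm.

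First I would, for each $w \in \Sigma^r$, fix a state $q_{i(w)} \in Q$ realizing the minimum in $L_C(w) = \min_{q \in Q}|C_q(w)|$, and set $q_{j(w)} = \delta^*(q_{i(w)}, w)$. This assigns to every $w \in \Sigma^r$ a pair $(i(w), j(w)) \in \{1,\dots,s\}^2$. For $1 \le i,j \le s$ put $G_{ij} = \{w \in \Sigma^r : (i(w), j(w)) = (i,j)\}$ and $N_{ij} = |G_{ij}|$, so $\sum_{i,j}N_{ij} = k^r$, and note that $L_C(w) = |C_{q_i}(w)|$ for every $w \in G_{ij}$. The key observation is then that for each pair $(i,j)$ the outputs $\{C_{q_i}(w) : w \in G_{ij}\}$ are pairwise distinct: if $w_1, w_2 \in G_{ij}$ with $w_1 \ne w_2$, then $\delta^*(q_i, w_1) = q_j = \delta^*(q_i, w_2)$, so $C_{q_i}(w_1) = C_{q_i}(w_2)$ would contradict the information-losslessness of $C_{q_i}$, i.e.\ the injectivity of $u \mapsto (C_{q_i}(u), \delta^*(q_i, u))$. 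Consequently $\sum_{w \in G_{ij}} 2^{-L_C(w)} = \sum_{w \in G_{ij}} 2^{-|C_{q_i}(w)|}$ is a sum of the form $\sum_{v}2^{-|v|}$ over a set of $N_{ij}$ distinct binary strings $v$.

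Next I would invoke the elementary bound: if $V \subseteq \{0,1\}^*$ with $|V| = N \ge 1$, then $\sum_{v \in V}2^{-|v|} \le 1 + \log N$. This is proved by observing that the sum is maximized when $V$ consists of the $N$ shortest binary strings and then computing, using that the least $t$ with $N \le 2^{t+1} - 1$ satisfies $2^t \le N$. Allowing $N = 0$ as well, this gives $\sum_{w \in G_{ij}}2^{-L_C(w)} \le 1 + \log(1 + N_{ij})$ for every $(i,j)$. Summing over all $s^2$ pairs and applying Jensen's inequality to the concave function $\log$ evaluated at the $s^2$ numbers $1 + N_{ij}$, whose average is $(s^2 + k^r)/s^2$, yields
\[
\sum_{w \in \Sigma^r}2^{-L_C(w)} \;\le\; \sum_{i,j}\bigl(1 + \log(1 + N_{ij})\bigr) \;\le\; s^2 + s^2\log\frac{s^2 + k^r}{s^2} \;=\; s^2\Bigl(1 + \log\frac{s^2 + k^r}{s^2}\Bigr),
\]
which is the claim.

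I expect the main obstacle to be pinning down the per-block Kraft bound with the sharp constant, together with the bookkeeping that makes the final expression come out exactly as stated: it is the choice to bound each block by $1 + \log(1 + N_{ij})$ rather than the naive $1 + \log N_{ij}$ that simultaneously absorbs empty blocks and makes the averaging step produce precisely $\frac{s^2 + k^r}{s^2}$ inside the logarithm. One also tacitly uses that $C_q$ is information-lossless whenever $C$ is, which is immediate under the usual assumption that every state of $C$ is reachable from $q_0$.
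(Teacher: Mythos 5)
Your proof is correct: the partition of $\Sigma^r$ into the $s^2$ classes $G_{ij}$ indexed by (minimizing start state, resulting end state), the distinctness of the outputs within each class via information-losslessness, the elementary bound $\sum_{v\in V}2^{-|v|}\le 1+\log|V|$ for distinct binary strings, and the concavity/Jensen step averaging $1+N_{ij}$ over the $s^2$ classes together give exactly the stated constant. The paper itself imports this lemma from Ziv and Lempel without proof, and your reconstruction is essentially their original argument (including the correctly flagged technicality that losslessness from every state, or at least every reachable one, is what is actually used).
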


For the remainder of this article, we will make use of the following family of functions. For each $s,k,r \in \mathbb{Z}^+$, let
\[
f_s^k(r) = \frac{\log\bigg(s^2 \big (1 + \log{\frac{s^2 + k^{r}}{s^2}} \big )\bigg)}{r\log k}.
\]
It is easy to see that, for any fixed $s,k \in \mathbb{Z}^+$,
\[
\displaystyle\lim_{r \rightarrow \infty}f_s^k(r) = 0.
\]
For the sake of reducing notation, we write $u_r$ for the prefix
\[
u \harp \Big \lfloor \frac{|u|}{r} \Big \rfloor \cdot r,
\]
where $r \in \mathbb{Z}^+$, $u \in \Sigma^*$ such that $|u| \geq r$. We also write $S^n_r$ for the prefix
\[
S \harp \Big \lfloor \frac{n}{r} \Big \rfloor \cdot r,
\]
where $S \in \Sigma^{\infty}$.

\begin{obs}\label{obs1}
Let $C_1$ be an ILFSC on $\Sigma$ and $C_2$ be an ILFSC on $\Sigma \times \Sigma$. For every $r,n \in \mathbb{Z}^+$ and every $u \in \Sigma^n$ and $w \in \Sigma^n$ such that $r \leq n$,
\[
\rho_{C_1}(u_r) \leq \rho_{C_1}(u) + \Big \lfloor \frac{n}{r}  \Big \rfloor^{-1}.
\]
and
\[
\rho_{C_2}(u_r, w_r) \leq \rho_{C_2}(u,w) + \Big \lfloor \frac{n}{r}  \Big \rfloor^{-1},
\]
where $u_r = u \harp \Big \lfloor \frac{n}{r} \Big \rfloor \cdot r$ and $w_r = w \harp \Big \lfloor \frac{n}{r} \Big \rfloor \cdot r$.
\end{obs}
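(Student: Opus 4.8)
The plan is to use the fact that a finite-state compressor produces its output online, one symbol at a time. From the recursions defining $\delta^*$ and $\nu$ one obtains, by an easy induction on $|v|$, the identity $C(xv) = C(x)\,C_{\delta^*(q_0,x)}(v)$ for all $x,v \in \Sigma^*$, where (as in Lemma \ref{kraft}) $C_q$ denotes the copy of $C$ started in state $q$. In particular $C(x)$ is a prefix of $C(xv)$, so $|C(x)| \le |C(xv)|$, and the same holds verbatim over the alphabet $\Sigma \times \Sigma$. Now $u_r$ is by definition the prefix $u \harp m$ of $u$ with $m = \lfloor n/r\rfloor \cdot r$, and $(u_r,w_r)$ is precisely the length-$m$ prefix of $(u,w)$; hence
\[
|C_1(u_r)| \le |C_1(u)| \qquad\text{and}\qquad |C_2((u_r,w_r))| \le |C_2((u,w))| .
\]
This is the only non-arithmetical ingredient of the argument.

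It then remains to unwind the normalizing factors. Put $q = \lfloor n/r\rfloor$ and $m = qr$, so that $|u_r| = |w_r| = m$ and $qr \le n < (q+1)r$; note that $r \le n$ forces $q \ge 1$, so $u_r$ is nonempty and division by $q$ is legitimate. Using the length bound just obtained,
\[
\rho_{C_1}(u_r) = \frac{|C_1(u_r)|}{m \log k} \le \frac{|C_1(u)|}{m \log k} = \frac{n}{m}\cdot\frac{|C_1(u)|}{n \log k} = \frac{n}{m}\,\rho_{C_1}(u),
\]
and since $m = qr$ and $n < (q+1)r$ we have $\tfrac{n}{m} < \tfrac{q+1}{q} = 1 + \tfrac1q$. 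Therefore
\[
\rho_{C_1}(u_r) \le \Big(1 + \tfrac1q\Big)\rho_{C_1}(u) = \rho_{C_1}(u) + \tfrac1q\,\rho_{C_1}(u),
\]
which gives the claimed inequality. The computation for $C_2$ is identical, with $|C_1(u)|, |C_1(u_r)|$ replaced by $|C_2((u,w))|, |C_2((u_r,w_r))|$ and $\rho_{C_1}$ by $\rho_{C_2}$, since the joint compression ratio of a pair of length-$m$ strings is normalized by the same factor $m \log k$.

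There is no serious obstacle, but two points want attention. First, the online identity $C(xv) = C(x)\,C_{\delta^*(q_0,x)}(v)$ should be verified directly from the recursion rather than taken for granted: it is exactly what converts ``truncating the input'' into ``truncating the output.'' Second, one must resist inferring $\rho_{C_1}(u_r) \le \rho_{C_1}(u)$ immediately from $|C_1(u_r)| \le |C_1(u)|$ --- the ratio $\rho_{C_1}(u_r)$ carries the denominator $m \log k$, not $n \log k$, and it is precisely this mismatch, pinned down by $qr \le n < (q+1)r$, that produces the correction term $\lfloor n/r\rfloor^{-1}$ (more precisely $\lfloor n/r\rfloor^{-1}\rho_{C_1}(u)$, which collapses to the displayed form when $\rho_{C_1}(u) \le 1$; in every downstream use only the fact that the correction tends to $0$ as $n \to \infty$ is needed).
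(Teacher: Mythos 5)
Your argument is, in essence, the paper's argument: both rest on the prefix monotonicity $|C_1(u_r)|\le|C_1(u)|$ (which the paper leaves implicit and you rightly derive from the output recursion $C(xv)=C(x)\,C_{\delta^*(q_0,x)}(v)$), followed by arithmetic relating the two normalizations $m\log k$ and $n\log k$, where $m=\lfloor n/r\rfloor\cdot r$. The one place you diverge is that arithmetic, and it leaves a residue you yourself flag: from $\rho_{C_1}(u_r)\le\frac{n}{m}\rho_{C_1}(u)$ and $\frac{n}{m}\le 1+\frac1q$ you get the additive error $\frac1q\,\rho_{C_1}(u)$ rather than $\frac1q$, which is the stated bound only when $\rho_{C_1}(u)\le 1$ --- and the definition of an ILFSC does not guarantee this, since $\nu$ may emit more than $\log k$ bits per input symbol. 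The paper instead subtracts $(n\bmod r)\log k$ from numerator and denominator via $\frac{a-x}{b-x}\le\frac{a}{b}$ and then discards the subtracted mass from the numerator at a cost of $\frac{(n\bmod r)\log k}{m\log k}\le\frac1q$, yielding the clean constant $\lfloor n/r\rfloor^{-1}$; note, however, that this manipulation requires $a\le b$, i.e.\ $|C_1(u_r)|\le n\log k$, which is essentially the same hidden no-expansion hypothesis. So the two proofs stand or fall together, and since every downstream use (Lemmas \ref{hc} and \ref{hc2}) needs only that the correction vanishes as $n\to\infty$ for fixed $r$, the discrepancy is harmless; if you want the statement literally as displayed without a side condition on $\rho_{C_1}(u)$, swap your multiplicative step for the paper's subtractive one.
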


\begin{proof}
We proceed to prove the first inequality. The following inequality holds for all $a,b,x \in \mathbb{R}^+$ such that $a \leq b$ and $x \leq b$,
\[
\frac{a - x}{b - x} \leq \frac{a}{b}.
\]
Using this inequality, observe that
\begingroup
\allowdisplaybreaks
\begin{align*}
\rho_{C_1}(u) &= \frac{|C_1(u)|}{n \log k}\\
					&\geq \frac{|C_1(u \harp \Big \lfloor \frac{n}{r}  \Big \rfloor \cdot r)|}{n \log k}\\
					&\geq \frac{|C_1(u \harp \Big \lfloor \frac{n}{r}  \Big \rfloor \cdot r)| - (n \bmod r)\log k}{n \log k - (n \bmod r)\log k}\\
					&= \frac{|C_1(u \harp \Big \lfloor \frac{n}{r}  \Big \rfloor \cdot r)| - (n \bmod r)\log k}{\Big \lfloor \frac{n}{r}  \Big \rfloor \cdot r  \log k}\\
					&\geq \frac{|C_1(u \harp \Big \lfloor \frac{n}{r}  \Big \rfloor \cdot r)|}{\Big \lfloor \frac{n}{r}  \Big \rfloor \cdot r  \log k} -  \frac{1}{\Big \lfloor \frac{n}{r}  \Big \rfloor}\\
					&= \frac{|C_1(u \harp \Big \lfloor \frac{n}{r}  \Big \rfloor \cdot r)|}{\Big \lfloor \frac{n}{r}  \Big \rfloor \cdot r  \log k} - \Big \lfloor \frac{n}{r}  \Big \rfloor^{-1}\\
					&= \rho_{C_1}(u \harp \Big \lfloor \frac{n}{r} \Big \rfloor \cdot r) - \Big \lfloor \frac{n}{r}  \Big \rfloor^{-1}\\
					&= \rho_{C_1}(u_r) - \Big \lfloor \frac{n}{r}  \Big \rfloor^{-1}
\end{align*}
\endgroup
The proof of the second inequality is identical to the proof of the first inequality. \qedhere
\end{proof}

The following lemma describes an inequality that was noted by Sheinwald in \cite{jShei94}. Originally, Ziv and Lempel noted a similar inequality in \cite{jZivLem78}.
\begin{lem}[Sheinwald \cite{jShei94}]\label{low}
Let $C$ be an ILFSC on $\Sigma$. For every $\ell,n \in \mathbb{Z}^+$ and $u \in \Sigma^n$ such that $n$ is a multiple of $\ell$,
\[
\rho_C(u) \geq \frac{1}{\ell \log k}\displaystyle\sum_{x \in \Sigma^{\ell}}\pi_u^{(\ell)}(x)L_C(x).
\]
\end{lem}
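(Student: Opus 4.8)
The plan is to unwind both sides of the inequality and reduce it to an entirely elementary block-bookkeeping statement. Since $\rho_C(u) = |C(u)|/(n\log k)$ and, for $x\in\Sigma^\ell$, $\pi_u^{(\ell)}(x) = (\ell/n)\,\#_\Box(x,u)$, the claimed inequality is equivalent to
\[
|C(u)| \;\geq\; \sum_{x\in\Sigma^\ell}\#_\Box(x,u)\,L_C(x).
\]
Because $\ell$ divides $n$, I would write $m = n/\ell$ and factor $u$ into its consecutive aligned length-$\ell$ blocks, $u = x^{(1)}x^{(2)}\cdots x^{(m)}$. Then for each $x\in\Sigma^\ell$ the count $\#_\Box(x,u)$ is exactly the number of indices $j$ with $x^{(j)}=x$, so $\sum_{x}\#_\Box(x,u)L_C(x) = \sum_{j=1}^{m}L_C(x^{(j)})$, and it suffices to prove $|C(u)| \geq \sum_{j=1}^{m}L_C(x^{(j)})$.

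Next I would follow the run of $C$ on $u$ one block at a time. Put $p_0 = q_0$ and $p_j = \delta^*(p_{j-1}, x^{(j)})$ for $1\leq j\leq m$, so $p_j = \delta^*(q_0,\, x^{(1)}\cdots x^{(j)})$. From the recursion defining the output function one obtains, by a routine induction on the length of $v$, the concatenation identity $C_q(wv) = C_q(w)\,C_{\delta^*(q,w)}(v)$ for all states $q$ and strings $w,v$, where $C_q$ is the compressor $C$ started in state $q$ (as in the statement of Lemma~\ref{kraft}, with $C_{q_0}=C$). Applying this identity $m-1$ times gives
\[
C(u) \;=\; C_{p_0}(x^{(1)})\,C_{p_1}(x^{(2)})\cdots C_{p_{m-1}}(x^{(m)}),
\]
hence $|C(u)| = \sum_{j=1}^{m}\bigl|C_{p_{j-1}}(x^{(j)})\bigr|$. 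Since $L_C(x) = \min_{q\in Q}|C_q(x)|$, each summand satisfies $\bigl|C_{p_{j-1}}(x^{(j)})\bigr| \geq L_C(x^{(j)})$, and adding these up yields $|C(u)| \geq \sum_{j=1}^{m}L_C(x^{(j)})$, as required.

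Reassembling, $|C(u)| \geq \sum_{j=1}^{m}L_C(x^{(j)}) = \sum_{x\in\Sigma^\ell}\#_\Box(x,u)L_C(x) = (n/\ell)\sum_{x\in\Sigma^\ell}\pi_u^{(\ell)}(x)L_C(x)$, and dividing by $n\log k$ gives the lemma. I do not expect a genuine obstacle here: the whole content is the pair of elementary observations that (i) a finite-state compressor's output on a concatenation is the concatenation of its outputs, each produced starting from the appropriate intermediate state, and (ii) whatever state $C$ happens to be in when it begins reading a block $x$, the output it emits on that block has length at least the best-case length $L_C(x)$ over all start states. Note that information-losslessness of $C$ is not actually used for this direction — it is only the reverse estimate, via the Generalized Kraft Inequality, where losslessness is essential. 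The only points requiring care are the induction establishing the concatenation identity for the output function and keeping the block indexing consistent with the definition of $\#_\Box$.
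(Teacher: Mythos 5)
Your proof is correct, and it is the standard argument for this inequality (decompose $u$ into its $n/\ell$ aligned blocks, use the concatenation identity for the output function, and bound each block's output length from below by $L_C$ of that block); the paper itself gives no proof of Lemma~\ref{low}, citing Sheinwald, and your argument matches the one in the cited sources. Your side remark that information-losslessness is not needed for this direction is also accurate.
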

It is worth noting that Ziv and Lempel and Sheinwald originally used the notation $P(x,u)$ in place of $\pi_u^{(\ell)}(x)$.

\begin{lem}\label{hc}
Let $C$ be an ILFSC on $\Sigma$ with $s \in \mathbb{Z}^+$ states. For every $\ell,n \in \mathbb{N}$ and $u \in \Sigma^n$ such that $\ell \leq n$,
\[
\frac{\emph{\ent}(\pi_{u_\ell}^{(\ell)})}{\ell \log k} - \rho_C(u) \leq \Big \lfloor \frac{n}{\ell}  \Big \rfloor^{-1} + f_s^k(\ell),
\]
where $u_\ell = u \harp \Big \lfloor \frac{n}{\ell} \Big \rfloor \cdot \ell$ and $\lim_{m \rightarrow \infty}f_s^k(m) = 0$.
\end{lem}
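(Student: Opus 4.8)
The plan is to bound $\rho_C(u)$ from below by $\ent(\pi_{u_\ell}^{(\ell)})/(\ell\log k)$ minus the two error terms, in three moves: pass to a prefix whose length is divisible by $\ell$, invoke Sheinwald's lemma, and then bound the resulting average codeword length by the block entropy via a Kraft-plus-Gibbs argument.

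First I would apply the first inequality of Observation~\ref{obs1} with $r=\ell$ (legitimate since $\ell\le n$, so $\lfloor n/\ell\rfloor\ge 1$ and $u_\ell := u\harp\lfloor n/\ell\rfloor\cdot\ell$ is a nonempty prefix of length divisible by $\ell$), which gives
\[
\rho_C(u) \geq \rho_C(u_\ell) - \big\lfloor n/\ell \big\rfloor^{-1}.
\]
Since $|u_\ell|$ is a multiple of $\ell$, Lemma~\ref{low} applies to $u_\ell$ and yields
\[
\rho_C(u_\ell) \geq \frac{1}{\ell\log k}\sum_{x\in\Sigma^{\ell}}\pi_{u_\ell}^{(\ell)}(x)\,L_C(x).
\]
It therefore suffices to prove $\sum_{x\in\Sigma^{\ell}}\pi_{u_\ell}^{(\ell)}(x)L_C(x)\ge \ent(\pi_{u_\ell}^{(\ell)})-\ell\log k\cdot f_s^k(\ell)$: dividing by $\ell\log k$ and chaining the two displays gives exactly the lemma, using the identity $\ell\log k\cdot f_s^k(\ell)=\log\!\big(s^2(1+\log\frac{s^2+k^{\ell}}{s^2})\big)$.

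The core step is this last inequality, which is the standard fact that entropy lower-bounds expected code length. Set $p=\pi_{u_\ell}^{(\ell)}$, a probability measure on $\Sigma^{\ell}$, and $Z=\sum_{x\in\Sigma^{\ell}}2^{-L_C(x)}$; since every $L_C(x)$ is finite, $q(x):=2^{-L_C(x)}/Z$ defines a probability measure on $\Sigma^{\ell}$ with $q(x)>0$ for all $x$. Nonnegativity of the Kullback--Leibler divergence $D(p\|q)=\sum_x p(x)\log\frac{p(x)}{q(x)}\ge 0$ rearranges to $\sum_x p(x)L_C(x)\ge \ent(p)-\log Z$. Finally, the Generalized Kraft Inequality (Lemma~\ref{kraft} with $r=\ell$) gives $Z\le s^2\big(1+\log\frac{s^2+k^{\ell}}{s^2}\big)$, and since $\log$ is increasing I may replace $\log Z$ by $\ell\log k\cdot f_s^k(\ell)$, which closes the argument.

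I expect no serious obstacle. The only points needing care are: verifying that $q$ is a genuine probability distribution with full support so that $D(p\|q)$ is well defined (this is exactly where finiteness of $L_C(x)$ enters); making sure Lemma~\ref{low} is applied only to a string whose length is a multiple of $\ell$, which is the reason for passing to $u_\ell$; and confirming the arithmetic identity between $\ell\log k\cdot f_s^k(\ell)$ and the logarithm of the Kraft bound. Everything else is routine algebraic rearrangement.
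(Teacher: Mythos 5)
Your proposal is correct and follows essentially the same route as the paper's proof: truncate to $u_\ell$ via Observation~\ref{obs1}, apply Lemma~\ref{low} to the truncated prefix, bound the gap between block entropy and expected code length, and finish with the Generalized Kraft Inequality. The only cosmetic difference is that you package the entropy-versus-codelength step as nonnegativity of the Kullback--Leibler divergence $D(p\,\|\,q)$ with $q(x)=2^{-L_C(x)}/Z$, whereas the paper applies Jensen's inequality directly to $\sum_x \pi_{u_\ell}^{(\ell)}(x)\log\frac{2^{-L_C(x)}}{\pi_{u_\ell}^{(\ell)}(x)}$ --- these are the same inequality.
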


\begin{proof}
The following proof uses similar reasoning as Sheinwald's proof that the upper compression ratio of a sequence $S \in \Sigma^{\infty}$ is equal to the upper block entropy rate of $S$ \cite{jShei94}. Let $C$ be an ILFSC on $\Sigma$ with $s \in \mathbb{Z}^+$ states. By the first inequality stated in Observation \ref{obs1} and Lemma \ref{low},
\begin{align}\label{firstineq}
\frac{\ent(\pi_{u_\ell}^{(\ell)})}{\ell \log k} - \rho_C(u) &\leq \frac{1}{\ell \log k}\displaystyle\sum_{x \in \Sigma^{\ell}}\pi_{u_\ell}^{(\ell)}(x)\log{\frac{1}{\pi_{u_\ell}^{(\ell)}(x)}} - \frac{1}{\ell \log k}\displaystyle\sum_{x \in \Sigma^{\ell}}\pi_{u_\ell}^{(\ell)}(x)L_C(x) + \Big \lfloor \frac{n}{\ell}  \Big \rfloor^{-1} \nonumber\\
&= \frac{1}{\ell \log k}\displaystyle\sum_{x \in \Sigma^{\ell}}\pi_{u_\ell}^{(\ell)}(x)\bigg[\log \frac{1}{\pi_{u_\ell}^{(\ell)}(x)} - \log 2^{L_C(x)} \bigg] + \Big \lfloor \frac{n}{\ell}  \Big \rfloor^{-1} \\
&= \frac{1}{\ell \log k}\displaystyle\sum_{x \in \Sigma^{\ell}}\pi_{u_\ell}^{(\ell)}(x)\log \frac{2^{-L_C(x)}}{\pi_{u_\ell}^{(\ell)}(x)} + \Big \lfloor \frac{n}{\ell}  \Big \rfloor^{-1}.  \nonumber
\end{align}
By (\ref{firstineq}), Jensen's Inequality, and the Generalized Kraft Inequality,
\begin{align*}
\frac{\ent(\pi_{u_\ell}^{(\ell)})}{\ell \log k} - \rho_C(u) &\leq \frac{1}{\ell \log k}\log \bigg ( \displaystyle\sum_{x \in \Sigma^{\ell}}2^{-L_C(x)} \bigg ) + \Big \lfloor \frac{n}{\ell}  \Big \rfloor^{-1}\\
&\leq \frac{1}{\ell \log k}\log \bigg (s^2 \bigg (1 + \log{\frac{s^2 + k^{\ell}}{s^2}} \bigg )  \bigg ) + \Big \lfloor \frac{n}{\ell}  \Big \rfloor^{-1}\\
&= \Big \lfloor \frac{n}{\ell}  \Big \rfloor^{-1} + f_s^k(\ell). \qedhere
\end{align*}
\end{proof}

The following lemma is a ``joint'' version of Lemma \ref{low}.

\begin{lem}\label{low2}
Let $C$ be an ILFSC on $\Sigma \times \Sigma$. For every $\ell,n \in \mathbb{Z}^+$ and every $u,w \in \Sigma^n$ such that $n$ is a multiple of $\ell$,
\[
\rho_C(u,w) \geq \frac{1}{\ell \log k}\displaystyle\sum_{(x,y) \in \Sigma^{\ell}}\pi_{u,w}^{(\ell)}(x,y)L_C((x,y)).
\]
\end{lem}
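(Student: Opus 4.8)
The plan is to derive Lemma~\ref{low2} from Lemma~\ref{low} by viewing the paired string $(u,w) \in (\Sigma \times \Sigma)^n$ as a single string over the alphabet $\Sigma \times \Sigma$, which has $k^2$ symbols. Since $C$ is by hypothesis an ILFSC on $\Sigma \times \Sigma$, and Sheinwald's argument underlying Lemma~\ref{low} does not depend on the particular alphabet, the $(\Sigma \times \Sigma)$-instance of Lemma~\ref{low} applies verbatim, with $\Sigma \times \Sigma$ in place of $\Sigma$ and $k^2$ in place of $k$.

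Concretely, I would first instantiate that version of Lemma~\ref{low} at the string $(u,w)$ and at block length $\ell$ (using that $n$ is a multiple of $\ell$). This yields
\[
\frac{|C((u,w))|}{n \log (k^2)} \;\geq\; \frac{1}{\ell \log (k^2)} \sum_{(x,y) \in \Sigma^\ell \times \Sigma^\ell} \pi_{(u,w)}^{(\ell)}\big((x,y)\big)\, L_C\big((x,y)\big).
\]
Next I would invoke the identity $\pi_{(u,w)}^{(\ell)}((x,y)) = \pi_{u,w}^{(\ell)}(x,y)$ recorded in Section~2 to rewrite the sum in terms of the joint block frequency. Finally, since the paper normalizes the joint compression ratio by $\log k$ rather than $\log(k^2)$, i.e.\ $\rho_C(u,w) = |C((u,w))|/(n\log k) = 2\,|C((u,w))|/(n\log(k^2))$, I would multiply the displayed inequality through by $2$; because $\log(k^2) = 2\log k$, the bound collapses to $\frac{1}{\ell\log k}\sum_{(x,y)} \pi_{u,w}^{(\ell)}(x,y)\, L_C((x,y))$, which is exactly the claimed inequality.

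There is no genuine mathematical obstacle here — all the content lives in Lemma~\ref{low} — and the only point requiring care is the bookkeeping of normalization constants: one must check that the two factors of $2$ arising from $\log(k^2)=2\log k$ (one in the definition of the joint compression ratio, one in Sheinwald's bound) cancel, so that the conclusion lands with the paper's convention of dividing by $\log k$. If one preferred to avoid the appeal to Lemma~\ref{low} altogether, one could instead re-run Sheinwald's argument directly over the alphabet $\Sigma \times \Sigma$: parse $(u,w)$ into consecutive $\ell$-blocks, use information-losslessness of $C$ to bound the total output length from below by a sum of $L_C$-costs over the blocks read, and regroup that sum according to block value; but the reduction makes this unnecessary.
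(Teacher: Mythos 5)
Your proposal is correct and follows essentially the same route as the paper: the paper's proof likewise writes $\rho_C(u,w) = 2\rho_C((u,w))$, applies Lemma~\ref{low} over the product alphabet $\Sigma \times \Sigma$ to the paired string $(u,w)$, invokes the identity $\pi_{(u,w)}^{(\ell)}((x,y)) = \pi_{u,w}^{(\ell)}(x,y)$, and lets the two factors of $2$ from $\log(k^2) = 2\log k$ cancel. Your bookkeeping of the normalization constants matches the paper's exactly.
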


\begin{proof}
First, recall that that, for all $x,y \in \Sigma^{\ell}$, $u_{(u,w)}^{(\ell)}((x,y))$ = $u_{u,w}^{(\ell)}(x,y)$. By Lemma \ref{low},
\begin{align*}
\rho_C(u,w) &= \frac{2}{2}\rho_C(u,w)\\
						&=2\rho_C((u,w))\\
						&\geq 2\frac{1}{\ell \log k^2}\displaystyle\sum_{(x,y) \in \Sigma^{\ell}}\pi_{(u,w)}^{(\ell)}((x,y))L_C((x,y))\\
						&= \frac{1}{\ell \log k}\displaystyle\sum_{(x,y) \in \Sigma^{\ell}}\pi_{u,w}^{(\ell)}(x,y)L_C((x,y)). \qedhere
\end{align*}
\end{proof}

\begin{lem}\label{hc2}
Let $C$ be an ILFSC on $\Sigma \times \Sigma$ with $s \in \mathbb{Z}^+$ states. For every $\ell,n \in \mathbb{N}$ and every $u \in \Sigma^n$ and $w \in \Sigma^n$ such that $\ell \leq n$,
\[
\frac{\emph{\ent}(\pi_{u_\ell,w_\ell}^{(\ell)})}{\ell \log k} - \rho_C(u,w) \leq \Big \lfloor \frac{n}{\ell}  \Big \rfloor^{-1} + f_s^{k^2}(\ell),
\]
where $u_\ell = u \harp \Big \lfloor \frac{n}{\ell}  \Big \rfloor \cdot \ell$, $w_\ell = w \harp \Big \lfloor \frac{n}{\ell}  \Big \rfloor \cdot \ell$, and $\lim_{m \rightarrow \infty}f_s^{k^2}(m) = 0$.
\end{lem}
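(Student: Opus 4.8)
The plan is to transcribe the proof of Lemma~\ref{hc} almost verbatim, replacing the alphabet $\Sigma$ (of $k$ symbols) by the alphabet $\Sigma\times\Sigma$ (of $k^2$ symbols) throughout and substituting the ``joint'' analogues of the ingredients used there. First I would apply the \emph{second} inequality of Observation~\ref{obs1} to the ILFSC $C$ on $\Sigma\times\Sigma$, which gives $\rho_C(u,w)\ge\rho_C(u_\ell,w_\ell)-\big\lfloor n/\ell\big\rfloor^{-1}$, so that it suffices to bound $\frac{\ent(\pi_{u_\ell,w_\ell}^{(\ell)})}{\ell\log k}-\rho_C(u_\ell,w_\ell)$. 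Since the common length $\big\lfloor n/\ell\big\rfloor\cdot\ell$ of $u_\ell$ and $w_\ell$ is a multiple of $\ell$, Lemma~\ref{low2} applies and yields $\rho_C(u_\ell,w_\ell)\ge\frac{1}{\ell\log k}\sum_{(x,y)\in\Sigma^\ell\times\Sigma^\ell}\pi_{u_\ell,w_\ell}^{(\ell)}(x,y)\,L_C((x,y))$.

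Next I would expand $\ent(\pi_{u_\ell,w_\ell}^{(\ell)})$ by its definition and combine the two sums, exactly as in the proof of Lemma~\ref{hc}, to obtain
\[
\frac{\ent(\pi_{u_\ell,w_\ell}^{(\ell)})}{\ell\log k}-\rho_C(u,w)\;\le\;\frac{1}{\ell\log k}\sum_{(x,y)\in\Sigma^\ell\times\Sigma^\ell}\pi_{u_\ell,w_\ell}^{(\ell)}(x,y)\log\frac{2^{-L_C((x,y))}}{\pi_{u_\ell,w_\ell}^{(\ell)}(x,y)}\;+\;\Big\lfloor\frac{n}{\ell}\Big\rfloor^{-1}.
\]
Jensen's inequality then moves the logarithm outside the probability-weighted average, leaving $\frac{1}{\ell\log k}\log\big(\sum_{(x,y)\in(\Sigma\times\Sigma)^\ell}2^{-L_C((x,y))}\big)$ plus the $\big\lfloor n/\ell\big\rfloor^{-1}$ term; enlarging the index set of the Kraft sum to all of $(\Sigma\times\Sigma)^\ell$ is legitimate because each summand is nonnegative and $\pi_{u_\ell,w_\ell}^{(\ell)}$ is supported on a subset of those pairs. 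Finally, applying the Generalized Kraft Inequality (Lemma~\ref{kraft}) to $C$ regarded as an ILFSC over the $k^2$-letter alphabet $\Sigma\times\Sigma$, with state set of size $s$ and block length $\ell$, bounds the inner sum by $s^2\big(1+\log\frac{s^2+(k^2)^\ell}{s^2}\big)$; taking its logarithm and dividing by $\ell\log k$ gives the desired error term $f_s^{k^2}(\ell)$ (the normalization by $\log k$ versus $\log(k^2)$ being exactly what Lemma~\ref{low2} was set up to absorb). The assertion $\lim_{m\to\infty}f_s^{k^2}(m)=0$ is the case $k\mapsto k^2$ of the general limit $\lim_{r\to\infty}f_s^{k}(r)=0$ recorded after Lemma~\ref{kraft}.

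I do not expect a genuine obstacle here: this is a mechanical adaptation of the proof of Lemma~\ref{hc}. The only points demanding a little care are the bookkeeping between the alphabet sizes $k$ and $k^2$ (which enters only through the replacement $k^{r}\mapsto(k^2)^{\ell}$ in the Generalized Kraft bound, since the $\log k$-normalization of $\rho_C(u,w)$ and of Lemma~\ref{low2} already handles the rest) and the harmless enlargement of the index set of the Kraft sum.
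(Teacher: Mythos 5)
Your proposal is correct and is exactly the paper's approach: the paper's entire proof of Lemma \ref{hc2} is the remark that it is identical to the proof of Lemma \ref{hc} with the second inequality of Observation \ref{obs1} in place of the first and Lemma \ref{low2} in place of Lemma \ref{low}, which is precisely the substitution you carry out in detail. (The only quibble, inherited from the paper's own sketch, is that after Jensen the Kraft term remains normalized by $\ell\log k$ rather than $\ell\log k^2$, so the literal bound obtained is $2f_s^{k^2}(\ell)$ rather than $f_s^{k^2}(\ell)$; since every later use of this lemma relies only on $\lim_{m\to\infty}f_s^{k^2}(m)=0$, the discrepancy is immaterial.)
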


\begin{proof}
The proof of this lemma is identical to the proof of Lemma \ref{hc}, except that it uses the second inequality from Observation \ref{obs1} instead of the first inequality and it also uses Lemma \ref{low2} instead of Lemma \ref{low}.
\end{proof}

If $|u|$ is a multiple of $\ell$, we denote $C_{F(\ell,u)}$ as the ILFSC on $\Sigma$ consisting of $k^{\ell}$ states that encodes strings of length $\ell$ according to Huffman's algorithm on the frequencies of each string in $\Sigma^{\ell}$ within $u$. The following lemma is an inequality that was noted by Sheinwald in \cite{jShei94}.

\begin{lem}[Sheinwald \cite{jShei94}]\label{huff}
Let $C$ be an ILFSC on $\Sigma$. For every $\ell,n \in \mathbb{Z}^+$ and $u \in \Sigma^n$ such that $n$ is a multiple of $\ell$,
\[
\rho_{C_{F(\ell,u)}}(u) \leq \frac{\emph{\ent}(\pi_u^{(\ell)})}{\ell \log k} + \frac{1}{\ell}.
\]
\end{lem}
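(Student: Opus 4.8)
The plan is to construct $C_{F(\ell,u)}$ explicitly and then read the bound off the elementary Huffman coding inequality. First I would fix an optimal binary prefix code $\kappa\colon \Sigma^{\ell}\to\{0,1\}^{*}$ for the weights $\pi_u^{(\ell)}$, produced by Huffman's algorithm on the block frequencies of $u$, and describe $C_{F(\ell,u)}$ as the machine that processes its input in consecutive blocks of $\ell$ symbols: it keeps the partial block read so far in its state (there are fewer than $k^{\ell}$ strings over $\Sigma$ of length less than $\ell$, so $k^{\ell}$ states suffice, with unused states padded in if needed), it outputs $\lambda$ on each of the first $\ell-1$ symbols of a block, and on the $\ell$-th symbol of a block $x\in\Sigma^{\ell}$ it outputs $\kappa(x)$ and returns to the initial state $q_0$.

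Next I would verify that $C_{F(\ell,u)}$ is information-lossless, i.e.\ that $f(v)=(C_{F(\ell,u)}(v),\delta^{*}(q_0,v))$ is one-to-one on $\Sigma^{*}$. Because $\kappa$ is prefix-free, the output $C_{F(\ell,u)}(v)$ parses in exactly one way as a concatenation of codewords, and this recovers all of the complete blocks of $v$; the final state $\delta^{*}(q_0,v)$ is precisely the trailing partial block of $v$; concatenating these reconstructs $v$. Hence $C_{F(\ell,u)}$ is an ILFSC on $\Sigma$ with $k^{\ell}$ states, so it is a legitimate competitor in the minimum defining the left-hand side.

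Then I would compute the output length. Since $n$ is a multiple of $\ell$, running $C_{F(\ell,u)}$ on $u$ emits exactly one codeword per block and nothing else, so
\[
|C_{F(\ell,u)}(u)| \;=\; \sum_{x\in\Sigma^{\ell}}\#_\Box(x,u)\,|\kappa(x)| \;=\; \frac{n}{\ell}\sum_{x\in\Sigma^{\ell}}\pi_u^{(\ell)}(x)\,|\kappa(x)|.
\]
The standard upper bound in Shannon's source coding theorem for optimal (hence Huffman) codes gives $\sum_{x\in\Sigma^{\ell}}\pi_u^{(\ell)}(x)\,|\kappa(x)| \le \ent(\pi_u^{(\ell)})+1$, with entropy measured in bits. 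Dividing by $n\log k$ and using $\log k\ge 1$ yields
\[
\rho_{C_{F(\ell,u)}}(u) \;=\; \frac{|C_{F(\ell,u)}(u)|}{n\log k} \;\le\; \frac{\ent(\pi_u^{(\ell)})+1}{\ell\log k} \;=\; \frac{\ent(\pi_u^{(\ell)})}{\ell\log k}+\frac{1}{\ell\log k} \;\le\; \frac{\ent(\pi_u^{(\ell)})}{\ell\log k}+\frac{1}{\ell},
\]
which is the claimed inequality.

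There is no genuinely hard step here; the work is all bookkeeping. The two points that need care are: (i) confirming that the state set can be taken of size $k^{\ell}$ and that retaining the final-state coordinate in $f$ really does make $C_{F(\ell,u)}$ information-lossless on every input — even though, when run on $u$ itself, the trailing partial block is empty and the final state is $q_0$; and (ii) invoking the non-strict Huffman bound $\le \ent+1$ rather than its sharper strict form, since the strict version can fail when a block of frequency zero is nevertheless assigned a codeword in the Huffman tree on all of $\Sigma^{\ell}$.
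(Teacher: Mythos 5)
The paper does not actually prove this lemma; it is stated as a known inequality and attributed to Sheinwald \cite{jShei94}, so there is no in-paper argument to compare against. Your proposal supplies the standard proof of that cited fact, and it is correct: the block-buffering automaton with $k^{\ell}$ states (enough to store any partial block of length $<\ell$), the information-losslessness argument via prefix-freeness of $\kappa$ plus recovery of the trailing partial block from the final state, the exact output-length identity $|C_{F(\ell,u)}(u)|=\frac{n}{\ell}\sum_{x}\pi_u^{(\ell)}(x)|\kappa(x)|$, and the final division by $n\log k$ with $\log k\ge 1$ all check out. The one step you assert rather than prove is the redundancy bound $\sum_{x\in\Sigma^{\ell}}\pi_u^{(\ell)}(x)|\kappa(x)|\le \ent(\pi_u^{(\ell)})+1$ for an optimal prefix code that is required to assign codewords to \emph{all} of $\Sigma^{\ell}$, including zero-frequency blocks; the textbook Shannon-code argument only codes the support, and when the Kraft sum of the Shannon lengths on the support is exactly $1$ there is no slack left to accommodate the unused blocks. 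You correctly flag this as the delicate point, and the bound does survive: either the support is a singleton (then $\ent=0$ and length $1$ suffices), or one can lengthen the codeword of a least-frequent block by one bit, costing at most $\tfrac12$ in expected length while freeing Kraft mass for the zero-frequency blocks. Spelling out that half-sentence would make the proof self-contained; otherwise nothing is missing.
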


\begin{lem}\label{huffc}
For each $r,n \in \mathbb{Z}^+$ and each $u \in \Sigma^n$,
\[
\rho_{r}(u) \leq \frac{\emph{\ent}(\pi_{u_{r'}}^{(r')})}{r' \log k} + \frac{1}{r'},
\]
where $r' = \lfloor log_k r \rfloor$ and $u_{r'} = u \harp \Big \lfloor \frac{n}{r'} \Big \rfloor \cdot r'$.
\end{lem}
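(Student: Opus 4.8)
The plan is to exhibit a single ILFSC with $r$ states whose compression ratio on $u$ already meets the claimed bound; taking the minimum in the definition of $\rho_r(u)$ then finishes the argument. Set $r' = \lfloor \log_k r \rfloor$ and let $C = C_{F(r',u_{r'})}$ be the block-Huffman compressor of block length $r'$ built from the block frequencies $\pi_{u_{r'}}^{(r')}$, exactly as in Lemma \ref{huff}. First I would verify the state count: $C$ has $k^{r'}$ states, and since $r' \le \log_k r$ we have $k^{r'} \le k^{\log_k r} = r$, so after adjoining $r - k^{r'}$ unreachable dummy states (which change neither the output nor information-losslessness) we may regard $C$ as an ILFSC on $\Sigma$ with exactly $r$ states. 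Hence $\rho_r(u) \le \rho_C(u)$. Throughout I assume $1 \le r' \le n$; when $r < k$ or $n < r'$ the statement is vacuous or trivial.

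The key observation is that a block-Huffman compressor emits a codeword only upon completing a full block of $r'$ input symbols, and produces no output while it is partway through a block. Consequently its output on $u$ coincides with its output on the truncation $u_{r'} = u \harp \lfloor n/r'\rfloor \cdot r'$, i.e.\ $C(u) = C(u_{r'})$ and so $|C(u)| = |C(u_{r'})|$. (This is also precisely why the machine stays information-lossless: the final state records the unprocessed trailing $n \bmod r'$ symbols, while prefix-freeness of the Huffman code recovers the completed blocks from $C(u)$.) Using this together with $|u_{r'}| = \lfloor n/r'\rfloor \cdot r' \le n$,
\[
\rho_C(u) = \frac{|C(u)|}{n\log k} = \frac{|C(u_{r'})|}{n\log k} \le \frac{|C(u_{r'})|}{|u_{r'}|\log k} = \rho_C(u_{r'}).
\]

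Finally I would apply Lemma \ref{huff} to the string $u_{r'}$, whose length is a multiple of $r'$, with block length $\ell = r'$; this gives $\rho_C(u_{r'}) = \rho_{C_{F(r',u_{r'})}}(u_{r'}) \le \frac{\ent(\pi_{u_{r'}}^{(r')})}{r'\log k} + \frac{1}{r'}$. Chaining $\rho_r(u) \le \rho_C(u) \le \rho_C(u_{r'}) \le \frac{\ent(\pi_{u_{r'}}^{(r')})}{r'\log k} + \frac{1}{r'}$ yields the claim. The only step that needs genuine care is the middle one: confirming that the block-Huffman machine produces no output on an incomplete trailing block, so that passing from $u$ to $u_{r'}$ only shrinks the denominator of the compression ratio while leaving the numerator untouched. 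Everything else — the inequality $k^{r'} \le r$, the dummy-state padding, and the invocation of Lemma \ref{huff} — is routine.
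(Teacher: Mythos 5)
Your proof is correct and follows essentially the same route as the paper's: reduce to the block-Huffman compressor $C_{F(r',u_{r'})}$ with $k^{r'}\le r$ states, use the fact that its output on $u$ equals its output on the truncation $u_{r'}$ so that shrinking the denominator from $n$ to $\lfloor n/r'\rfloor\cdot r'$ only increases the ratio, and then invoke Lemma \ref{huff} on $u_{r'}$. The paper leaves the state-count monotonicity and the "no output on a partial block" facts implicit, whereas you spell them out, but the argument is the same.
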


\begin{proof}
Since, $\rho_{k^{r'}}(u) \leq \rho_{F(r',u_{r'})}(u)$, we have
\begin{align}\label{f}
\rho_r(u) &= \rho_{k^{log_k r}}(u) \nonumber \\
					&\leq \rho_{k^{r'}}(u)\\
					&\leq \rho_{C_{F(r',u_{r'})}}(u). \nonumber
\end{align}
By (\ref{f}), and since $|C_{F(r',u_{r'})}(u)| = |C_{F(r',u_{r'})}(u_{r'})|$, we have
\begin{align}\label{secondp}
\rho_r(u) &\leq \frac{|C_{F(r',u_{r'})}(u)|}{n\log k} \nonumber \\
					&= \frac{|C_{F(r',u_{r'})}(u_{r'})|}{n\log k}\\
					&\leq \frac{|C_{F(r',u_{r'})}(u_{r'})|}{\Big \lfloor \frac{n}{r'} \Big \rfloor \cdot r'\log k} \nonumber \\
					&= \rho_{C_{F(r',u_{r'})}}(u_{r'}). \nonumber
\end{align}
Finally, by (\ref{secondp}) and Lemma \ref{huff}, we have
\begin{align*}
\rho_r(u)	&\leq \frac{\ent(u_{r'}^{(r')})}{r' \log k} + \frac{1}{r'}. \qedhere
\end{align*}
\end{proof}

\begin{lem}\label{maxin}
For each $r,t,n \in \mathbb{Z}^+$ and $(u,w) \in (\Sigma \times \Sigma)^n$ such that $n \geq t'$,
\[
\max\{\rho_t(u),\rho_t(w)\} \leq \rho_r(u,w) + \Big \lfloor \frac{n}{t'}  \Big \rfloor^{-1} + g_r^k(t'),
\]
where $t' = \lfloor \log_k t \rfloor$ and $\displaystyle\lim_{m \rightarrow \infty}g_r^k(m) = 0$.
\end{lem}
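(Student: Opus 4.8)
The plan is to sandwich $\max\{\rho_t(u),\rho_t(w)\}$ between the two marginal block entropy rates and the joint block entropy rate of $(u,w)$, and then to dominate the joint block entropy rate by $\rho_r(u,w)$ using Lemma~\ref{hc2}. Write $t' = \lfloor\log_k t\rfloor$ and put $u_{t'} = u\harp\lfloor n/t'\rfloor\cdot t'$ and $w_{t'} = w\harp\lfloor n/t'\rfloor\cdot t'$. First I would apply Lemma~\ref{huffc} (with its parameter $r$ instantiated as $t$) to $u$ and to $w$ separately, obtaining $\rho_t(u) \le \ent(\pi_{u_{t'}}^{(t')})/(t'\log k) + 1/t'$ and $\rho_t(w) \le \ent(\pi_{w_{t'}}^{(t')})/(t'\log k) + 1/t'$. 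Since $\pi_{u_{t'}}^{(t')}$ and $\pi_{w_{t'}}^{(t')}$ are precisely the first and second marginals of $\pi_{u_{t'},w_{t'}}^{(t')}$, part~(2) of Corollary~\ref{Shac} gives $\max\{\ent(\pi_{u_{t'}}^{(t')}),\ent(\pi_{w_{t'}}^{(t')})\} \le \ent(\pi_{u_{t'},w_{t'}}^{(t')})$, so
\[
\max\{\rho_t(u),\rho_t(w)\} \;\le\; \frac{\ent(\pi_{u_{t'},w_{t'}}^{(t')})}{t'\log k} + \frac{1}{t'}.
\]

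Next I would fix an ILFSC $C$ on $\Sigma\times\Sigma$ with $r$ states for which $\rho_r(u,w) = \rho_C(u,w)$, and apply Lemma~\ref{hc2} with $\ell = t'$ --- permissible because $n\ge t'$ --- to get
\[
\frac{\ent(\pi_{u_{t'},w_{t'}}^{(t')})}{t'\log k} \;\le\; \rho_C(u,w) + \Big\lfloor\frac{n}{t'}\Big\rfloor^{-1} + f_r^{k^2}(t') \;=\; \rho_r(u,w) + \Big\lfloor\frac{n}{t'}\Big\rfloor^{-1} + f_r^{k^2}(t').
\]
Chaining this with the previous display yields $\max\{\rho_t(u),\rho_t(w)\} \le \rho_r(u,w) + \lfloor n/t'\rfloor^{-1} + f_r^{k^2}(t') + 1/t'$, so setting $g_r^k(m) = f_r^{k^2}(m) + 1/m$ completes the proof: $f_r^{k^2}(m)\to 0$ (as noted after Lemma~\ref{kraft}) forces $g_r^k(m)\to 0$.

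There is no real obstacle here --- the argument is just a concatenation of three inequalities already established --- so the only care needed is in the bookkeeping. First, I must check that the truncated prefixes $u_{t'},w_{t'}$ coming out of Lemma~\ref{huffc} are literally the same objects used in Corollary~\ref{Shac} and Lemma~\ref{hc2}; they are, since in each case the relevant prefix is the truncation of the string to the largest multiple of $t'$ that does not exceed $n$, so the marginal/joint entropy relationship applies verbatim. Second, I must track the additive errors so that $\lfloor n/t'\rfloor^{-1}$ survives exactly once in the final estimate while all the other vanishing contributions are folded into the single function $g_r^k$. (As is standard here, I tacitly assume $t$ is large enough that $t' = \lfloor\log_k t\rfloor \ge 1$, so that the $k^{t'}$-state Huffman compressor invoked by Lemma~\ref{huffc} and the quantity $1/t'$ are meaningful.)
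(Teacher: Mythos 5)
Your proof is correct and follows essentially the same route as the paper's: Lemma \ref{huffc} to bound $\rho_t(u)$ (and $\rho_t(w)$) by the marginal block entropy, Corollary \ref{Shac}(2) to pass to the joint block entropy, and Lemma \ref{hc2} to dominate that by $\rho_r(u,w)$ plus the error terms, arriving at the same $g_r^k(m) = f_r^{k^2}(m) + \frac{1}{m}$. The only difference is cosmetic: you carry the maximum through in one pass, whereas the paper treats $\rho_t(u)$ and then says ``similarly'' for $\rho_t(w)$.
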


\begin{proof}
By Lemma \ref{huffc} and Corollary \ref{Shac}, we observe that
\begin{align}\label{hfirst}
\rho_t(u) - \rho_r(u,w) &\leq \frac{\ent(\pi_{u_{t'}}^{(t')})}{t' \log k} - \rho_r(u,w) + \frac{1}{t'} \nonumber \\
															&\leq \frac{\ent(\pi_{u_{t'},w_{t'}}^{(t')})}{t' \log k} - \rho_r(u,w) + \frac{1}{t'},
\end{align}
where $t' = \lfloor \log_k t \rfloor$. Finally, by (\ref{hfirst}) and Lemma \ref{hc2}, we have
\begin{align*}
\rho_t(u) - \rho_r(u,w) &\leq \frac{\ent(\pi_{u_{t'},w_{t'}}^{(t')})}{t' \log k} - \frac{\ent(\pi_{u_{t'},w_{t'}}^{(t')})}{t' \log k} + \Big \lfloor \frac{n}{t'} \Big \rfloor^{-1} + f_r^{k^2}(t') + \frac{1}{t'}\nonumber \\
												&\leq \Big \lfloor \frac{n}{t'}  \Big \rfloor^{-1} + f_r^{k^2}(t') + \frac{1}{t'},
\end{align*}
which immediately implies that
\begin{align*}
\rho_t(u) &\leq \rho_r(u,w) + \Big \lfloor \frac{n}{t'}  \Big \rfloor^{-1} + f_r^{k^2}(t') + \frac{1}{t'}\\
					&= \rho_r(u,w) + \Big \lfloor \frac{n}{t'}  \Big \rfloor^{-1} + g_r^k(t'),
\end{align*}
where $g_r^k(t') = f_r^{k^2}(t') + \frac{1}{t'}$. Similarly, we can show that
\[
\rho_t(w) \leq \rho_r(u,w) + \Big \lfloor \frac{n}{t'}  \Big \rfloor^{-1} + g_r^k(t'),
\]
which implies the conclusion. \qedhere
\end{proof}

\begin{lem}\label{1to2}
For every $r,t,n \in \mathbb{Z}^+$ and every $(u,w) \in (\Sigma \times \Sigma)^n$ such that $n \geq r'$,
\[
\rho_r(u,w) \leq \rho_t(u) + \rho_t(w) + 2\Big \lfloor \frac{n}{r'}  \Big \rfloor^{-1} + h_t^k(r').
\]
where $r' = \lfloor \log_k r \rfloor$ and $\displaystyle\lim_{m \rightarrow \infty}h_t^k(m) = 0$.
\end{lem}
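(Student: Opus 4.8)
The plan is to establish a ``joint'' version of Lemma~\ref{huffc}, bounding $\rho_r(u,w)$ above by a joint block entropy; then to split that joint entropy into its two marginals using Corollary~\ref{Shac}(1); and finally to convert each marginal block entropy back into a finite-state compression ratio of $u$ (respectively $w$) realized by a $t$-state ILFSC, via Lemma~\ref{hc}. Throughout, write $r' = \lfloor \log_k r \rfloor$ and set $\ell := \lfloor \log_{k^2} r \rfloor = \lfloor r'/2 \rfloor$. The point of this choice is that the Huffman block-encoder on the alphabet $\Sigma \times \Sigma$, reading blocks of length $\ell$, uses at most $(k^2)^\ell = k^{2\ell} \le k^{r'} \le r$ states, hence is an admissible competitor in the definition of $\rho_r(u,w)$. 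We may assume $r \ge k^2$, so that $\ell \ge 1$; the finitely many smaller $r$ are handled trivially, assigning $h_t^k$ a large enough value there.

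The first step is the inequality
\[
\rho_r(u,w) \;\le\; \frac{\ent\big(\pi_{u_\ell,w_\ell}^{(\ell)}\big)}{\ell \log k} + \frac{2}{\ell} .
\]
This is Lemma~\ref{huffc} ``in disguise'': applying that lemma with $\Sigma \times \Sigma$ (of size $k^2$) in place of $\Sigma$ to the string $(u,w) \in (\Sigma \times \Sigma)^n$ gives $\rho_r\big((u,w)\big) \le \ent\big(\pi_{(u,w)_\ell}^{(\ell)}\big)/(\ell \log k^2) + 1/\ell$, and one then passes to the joint compression ratio using $\rho_r(u,w) = 2\,\rho_r\big((u,w)\big)$ (immediate from the definitions; compare the proof of Lemma~\ref{low2}) together with $\ent\big(\pi_{(u,w)_\ell}^{(\ell)}\big) = \ent\big(\pi_{u_\ell,w_\ell}^{(\ell)}\big)$, the latter because $(u,w)_\ell = (u_\ell,w_\ell)$ and $\pi_{(u_\ell,w_\ell)}^{(\ell)}((x,y)) = \pi_{u_\ell,w_\ell}^{(\ell)}(x,y)$.

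For the second step, the first and second marginals of $\pi_{u_\ell,w_\ell}^{(\ell)}$ are $\pi_{u_\ell}^{(\ell)}$ and $\pi_{w_\ell}^{(\ell)}$, so Corollary~\ref{Shac}(1) gives $\ent(\pi_{u_\ell,w_\ell}^{(\ell)}) \le \ent(\pi_{u_\ell}^{(\ell)}) + \ent(\pi_{w_\ell}^{(\ell)})$. For the third step, fix a $t$-state ILFSC $C_1$ on $\Sigma$ with $\rho_{C_1}(u) = \rho_t(u)$ and a $t$-state ILFSC $C_2$ on $\Sigma$ with $\rho_{C_2}(w) = \rho_t(w)$; since $\ell \le r' \le n$, Lemma~\ref{hc} (with $s = t$) applies to each and yields $\ent(\pi_{u_\ell}^{(\ell)})/(\ell\log k) \le \rho_t(u) + \lfloor n/\ell \rfloor^{-1} + f_t^k(\ell)$ and the analogous bound for $w$. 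Chaining the three steps gives
\[
\rho_r(u,w) \;\le\; \rho_t(u) + \rho_t(w) + 2\Big\lfloor \tfrac{n}{\ell} \Big\rfloor^{-1} + 2 f_t^k(\ell) + \tfrac{2}{\ell} .
\]
Since $\ell \le r'$ we have $\lfloor n/\ell \rfloor^{-1} \le \lfloor n/r' \rfloor^{-1}$; and defining $h_t^k(m) := 2 f_t^k(\lfloor m/2 \rfloor) + 2/\lfloor m/2 \rfloor$ (with a large value for the few small $m$), we get $h_t^k(r') = 2 f_t^k(\ell) + 2/\ell$ and $\lim_{m \to \infty} h_t^k(m) = 0$, because $\lim_{m \to \infty} f_t^k(m) = 0$. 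This is the asserted inequality.

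The only step needing genuine care is the first. The subtlety is that the Huffman encoder has to run on the $k^2$-symbol alphabet $\Sigma \times \Sigma$, so on a budget of $r$ states it can only afford blocks of length about $\tfrac{1}{2} \log_k r$ rather than $\log_k r$; one then has to confirm that this shorter block length still does the job, which it does, since $\ell \le r'$ keeps the $\lfloor n/\cdot \rfloor^{-1}$ term bounded by $2\lfloor n/r' \rfloor^{-1}$ and $\ell \to \infty$ as $r' \to \infty$ sends the remaining error to $0$. Steps two and three are a routine concatenation of Corollary~\ref{Shac}, Lemma~\ref{hc}, and the arithmetic of the $f$-functions already set up in the paper.
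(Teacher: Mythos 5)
Your proof is correct and follows the same route as the paper's: write $\rho_r(u,w)=2\rho_r((u,w))$, bound $\rho_r((u,w))$ from above by a normalized joint block entropy via the Huffman construction behind Lemma \ref{huffc}, split that entropy into its marginals with Corollary \ref{Shac}(1), and convert each marginal back into $\rho_t(u)$ and $\rho_t(w)$ with Lemma \ref{hc}. The one place where you genuinely sharpen the argument is the block length for the product alphabet. The paper applies the Huffman bound to $(u,w)\in(\Sigma\times\Sigma)^n$ with block length $r'=\lfloor\log_k r\rfloor$, but the block Huffman encoder on the $k^2$-symbol alphabet with that block length has $k^{2r'}\approx r^2$ states, so it is not an admissible competitor in the minimum defining $\rho_r((u,w))$; as written, that step only yields a bound on $\rho_{r^2}((u,w))$, which is the wrong direction. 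Your choice $\ell=\lfloor\log_{k^2}r\rfloor=\lfloor r'/2\rfloor$ keeps the state count at $(k^2)^\ell\le r$, and your bookkeeping --- $\lfloor n/\ell\rfloor^{-1}\le\lfloor n/r'\rfloor^{-1}$ and $h_t^k(m)=2f_t^k(\lfloor m/2\rfloor)+2/\lfloor m/2\rfloor\to 0$ --- correctly absorbs the halved block length into the error terms, so the lemma's conclusion is unchanged. In short: same decomposition, same three lemmas, but with a real (if benign) gap in the paper's own proof explicitly repaired.
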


\begin{proof}
By Lemma \ref{huffc} and Corollary \ref{Shac}, we observe that
\begin{align}\label{fapp}
\rho_t(u) + \rho_t(w) - \rho_r(u,w) &= \rho_t(u) + \rho_t(w) - 2\rho_r((u,w)) \nonumber\\
																					 &\geq \rho_t(u) + \rho_t(w) - 2\bigg(\frac{\ent\big(\pi_{(u_{r'},w_{r'})}^{(r')}\big)}{r' \log k^2} + \frac{1}{r'}\bigg) \\
																					 &= \rho_t(u) + \rho_t(w) - \frac{\ent(\pi_{u_{r'},w_{r'}}^{(r')})}{r' \log k} - \frac{2}{r'} \nonumber\\
																					 &\geq \rho_t(u) + \rho_t(w) - \frac{\ent(\pi_{u_{r'}}^{(r')})}{r' \log k} - \frac{\ent(\pi_{w_{r'}}^{(r')})}{r' \log k} - \frac{2}{r'}. \nonumber
\end{align}
Finally, by (\ref{fapp}) and Lemma \ref{hc}, we have
\begin{align*}
\rho_t(u) + \rho_t(w) - \rho_r(u,w) &\geq -2\Big \lfloor \frac{n}{r'}  \Big \rfloor^{-1} - 2f_t^k(r') - \frac{2}{r'}\\
																		&= -2\Big \lfloor \frac{n}{r'}  \Big \rfloor^{-1} - h_t^k(r'),
\end{align*}
where $h_t^k(r') = 2f_t^k(r') + \frac{2}{r'}$, which immediately implies the conclusion. \qedhere
\end{proof}

\begin{lem}\label{eq}
For each $r,t,n \in \mathbb{Z}^+$ and $u \in \Sigma^n$ such that $n \geq \max\{r',t'\}$,
\[
\rho_{r}(u,u) \leq \rho_t(u) + \Big \lfloor \frac{n}{r'}  \Big \rfloor^{-1} + i_t^k(r')
\]
and
\[
\rho_{t}(u) \leq \rho_{r}(u,u) + \Big \lfloor \frac{n}{t'} \Big \rfloor^{-1} + j_r^k(t'),
\]
where $r' = \lfloor log_k r \rfloor$, $t' = \lfloor log_k t \rfloor$, $\displaystyle\lim_{m \rightarrow \infty}i_t^k(m) = 0$, and $\displaystyle\lim_{m \rightarrow \infty}j_r^k(m) = 0$.
\end{lem}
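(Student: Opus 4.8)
The plan is to deduce both inequalities from arguments already carried out: the second inequality is the diagonal case $w=u$ of Lemma~\ref{maxin}, and the first inequality is the diagonal case of the argument in Lemma~\ref{1to2}, sharpened by Corollary~\ref{Shac}(3).

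For the \emph{second} inequality there is essentially nothing to prove. Apply Lemma~\ref{maxin} to the pair $(u,u)\in(\Sigma\times\Sigma)^n$ (which needs $n\ge t'$, supplied by the hypothesis): since $\max\{\rho_t(u),\rho_t(u)\}=\rho_t(u)$, we get
\[
\rho_t(u)\;\le\;\rho_r(u,u)+\Big\lfloor\tfrac{n}{t'}\Big\rfloor^{-1}+g_r^k(t'),
\]
so we may set $j_r^k:=g_r^k$, which tends to $0$.

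For the \emph{first} inequality I would repeat the opening steps of the proof of Lemma~\ref{1to2} with $w=u$. Applying Lemma~\ref{huffc} to the string $(u,u)$ over $\Sigma\times\Sigma$, and using $\rho_r(u,u)=2\rho_r((u,u))$, $\log k^2=2\log k$, and $\pi^{(r')}_{(u_{r'},u_{r'})}((x,y))=\pi^{(r')}_{u_{r'},u_{r'}}(x,y)$, yields
\[
\rho_r(u,u)\;\le\;\frac{\ent\big(\pi^{(r')}_{u_{r'},u_{r'}}\big)}{r'\log k}+\frac{2}{r'}.
\]
Now Corollary~\ref{Shac}(3) gives $\ent(\pi^{(r')}_{u_{r'},u_{r'}})=\ent(\pi^{(r')}_{u_{r'}})$, and Lemma~\ref{hc}, applied with $\ell:=r'$ to a $t$-state ILFSC $C$ on $\Sigma$ witnessing $\rho_C(u)=\rho_t(u)$, gives $\ent(\pi^{(r')}_{u_{r'}})/(r'\log k)\le \rho_t(u)+\lfloor n/r'\rfloor^{-1}+f_t^k(r')$. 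Combining these,
\[
\rho_r(u,u)\;\le\;\rho_t(u)+\Big\lfloor\tfrac{n}{r'}\Big\rfloor^{-1}+f_t^k(r')+\frac{2}{r'},
\]
and we take $i_t^k(r'):=f_t^k(r')+2/r'$, which tends to $0$ since $f_t^k(m)\to 0$.

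I do not expect a genuine obstacle here; the content is bookkeeping that mirrors Lemmas~\ref{maxin}, \ref{1to2}, \ref{huffc}, and \ref{hc}. The only genuinely new ingredient is the invocation of Corollary~\ref{Shac}(3): it is exactly what collapses the general bound $\rho_r(u,w)\le\rho_t(u)+\rho_t(w)+\cdots$ of Lemma~\ref{1to2} to $\rho_r(u,u)\le\rho_t(u)+\cdots$ in the diagonal case, and, symmetrically, it is what lets Lemma~\ref{maxin} be applied to the pair $(u,u)$ without loss. Beyond that, one must simply keep the two scale parameters $r'=\lfloor\log_k r\rfloor$ and $t'=\lfloor\log_k t\rfloor$ straight, track the alphabet-size factors ($\log k$ versus $\log k^2$), and note that the hypothesis $n\ge\max\{r',t'\}$ delivers $n\ge r'$ for the first inequality (needed in Lemmas~\ref{huffc} and \ref{hc}) and $n\ge t'$ for the second (needed in Lemma~\ref{maxin}).
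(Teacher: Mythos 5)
Your proposal is correct and matches the paper's proof in all essentials: for the first inequality you do exactly what the paper does (apply Lemma~\ref{huffc} to the string $(u,u)$ over $\Sigma\times\Sigma$, use $\rho_r(u,u)=2\rho_r((u,u))$ and $\log k^2=2\log k$, collapse the joint entropy via Corollary~\ref{Shac}(3), and finish with Lemma~\ref{hc} applied to a $t$-state witness for $\rho_t(u)$), arriving at the same error term $i_t^k(r')=f_t^k(r')+2/r'$. For the second inequality you cite Lemma~\ref{maxin} with $w=u$ instead of re-running its computation as the paper does, but this yields the identical bound with $j_r^k(t')=g_r^k(t')=f_r^{k^2}(t')+1/t'$, so the two arguments coincide.
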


\begin{proof}
By Lemma \ref{huffc} and Corollary \ref{Shac} we know that
\begin{align}\label{firstp}
\rho_{r}(u,u) - \rho_t(u) &= 2\rho_{r}((u,u)) - \rho_t(u) \nonumber \\
																					 &\leq 2\bigg(\frac{\ent\big(\pi_{(u_{r'},u_{r'})}^{(r')}\big)}{r' \log k^2} + \frac{1}{r'} \bigg ) - \rho_t(u) \nonumber\\
																					 &= \frac{\ent(\pi_{u_{r'},u_{r'}}^{(r')})}{r' \log k} + \frac{2}{r'} - \rho_t(u)\\
																					 &= \frac{\ent(\pi_{u_{r'}}^{(r')})}{r' \log k} - \rho_t(u)
 + \frac{2}{r'}. \nonumber
\end{align}
By (\ref{firstp}) and Lemma \ref{hc}, we have
\begin{align*}
\rho_{r}(u,u) - \rho_t(u) &\leq \Big \lfloor \frac{n}{r'}  \Big \rfloor^{-1} + f_t^k(r') + \frac{2}{r'}\\
													&= \Big \lfloor \frac{n}{r'} \Big \rfloor^{-1} + i_t^k(r'),
\end{align*}
where $i_t^k(r') = f_t^k(r') + \frac{2}{r'}$, which implies the first inequality. Now, by Lemma \ref{hc2}, Lemma \ref{huffc}, and Corollary \ref{Shac}, we know that
\begin{align*}
\rho_r(u,u) - \rho_{t}(u) &\geq \frac{\ent(\pi_{u_{t'},u_{t'}}^{(t')})}{t' \log k} - \rho_{t}(u) - \Big \lfloor \frac{n}{t'}  \Big \rfloor^{-1} - f_r^{k^2}(t')\\
															&\geq \frac{\ent(\pi_{u_{t'},u_{t'}}^{(t')})}{t' \log k} - \frac{\ent(\pi_{u_{t'}}^{(t')})}{t' \log k} - \Big \lfloor \frac{n}{t'}  \Big \rfloor^{-1} - f_r^{k^2}(t') - \frac{1}{t'}\\
															&= \frac{\ent(\pi_{u_{t'}}^{(t')})}{t' \log k} - \frac{\ent(\pi_{u_{t'}}^{(t')})}{t' \log k} - \Big \lfloor \frac{n}{t'}  \Big \rfloor^{-1} - f_r^{k^2}(t') - \frac{1}{t'}\\
															&= - \Big \lfloor \frac{n}{t'}  \Big \rfloor^{-1} - j_r^k(t'),
\end{align*}
where $j_r^k(t') = f_r^{k^2}(t') + \frac{1}{t'}$, which implies the second inequality. \qedhere
\end{proof}

\begin{obs}\label{obs2}
For any ILFSC $C$ on $\Sigma \times \Sigma$ and any $u \in \Sigma^n$ and $w \in \Sigma^n$, there exists another ILFSC $C'$ such that
\[
\rho_C(u,w) = \rho_{C'}(w,u).
\]
and $C$ and $C'$ have the same number of states.
\end{obs}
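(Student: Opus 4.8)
The plan is to construct $C'$ by precomposing $C$ with the coordinate-swap bijection on the input alphabet $\Sigma \times \Sigma$. Write $C = (Q,\delta,\nu,q_0)$, and let $\sigma \colon \Sigma \times \Sigma \to \Sigma \times \Sigma$ be the bijection $\sigma(a,b) = (b,a)$. Define $C' = (Q,\delta',\nu',q_0)$ by $\delta'(q,(a,b)) = \delta(q,\sigma(a,b))$ and $\nu'(q,(a,b)) = \nu(q,\sigma(a,b))$ for all $q \in Q$ and $(a,b) \in \Sigma \times \Sigma$. By construction $C'$ is an FSC on $\Sigma \times \Sigma$ with the same state set $Q$ and the same initial state as $C$, so in particular $C$ and $C'$ have the same number of states.

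Next I would verify, by a routine induction on the length of $v$, that if $\tilde\sigma \colon (\Sigma \times \Sigma)^* \to (\Sigma \times \Sigma)^*$ denotes the symbolwise extension of $\sigma$, then $(\delta')^*(q_0,v) = \delta^*(q_0,\tilde\sigma(v))$ and $C'(v) = C(\tilde\sigma(v))$ for every $v \in (\Sigma \times \Sigma)^*$. The base case is immediate, and the inductive step follows directly from the recursions defining $\delta^*$ and $\nu$ together with the defining equations for $\delta'$ and $\nu'$. Applying this with $v = (w,u) \in (\Sigma \times \Sigma)^n$ and noting that $\tilde\sigma((w,u)) = (u,w)$, we get $C'((w,u)) = C((u,w))$, hence $|C'((w,u))| = |C((u,w))|$, and dividing by $n \log k$ yields $\rho_{C'}(w,u) = \rho_C(u,w)$, as required.

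It remains to check that $C'$ is information-lossless. Since $\sigma$ is a bijection on $\Sigma \times \Sigma$, its symbolwise extension $\tilde\sigma$ is a bijection on $(\Sigma \times \Sigma)^*$. The map $f_{C'}(v) = (C'(v),(\delta')^*(q_0,v))$ equals $f_C(\tilde\sigma(v))$ by the identities from the previous paragraph, so $f_{C'} = f_C \circ \tilde\sigma$ is a composition of a bijection with a one-to-one function and is therefore one-to-one; thus $C'$ is an ILFSC. I do not expect a genuine obstacle here: the proof is entirely mechanical, and the only point requiring a small amount of care is keeping the bookkeeping straight in the induction establishing $C'(v) = C(\tilde\sigma(v))$, while the observation that a coordinate swap is a bijection is precisely what guarantees that information-losslessness is preserved.
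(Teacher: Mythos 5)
Your construction is exactly the paper's: relabeling each $(a,b)$-transition of $C$ as a $(b,a)$-transition is precisely precomposition with the coordinate-swap bijection $\sigma$, and your induction and information-losslessness check simply make explicit what the paper leaves as clear. The proposal is correct and takes essentially the same approach.
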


\begin{proof}
It is clear that we may relabel the input symbol $(a,b) \in \Sigma \times \Sigma$ along each transition of $C$ by interchanging the symbols within the pair. So each each $(a,b)$-transition becomes a $(b,a)$-transition. After performing this relabeling, we receive an ILFSC $C'$ on $\Sigma \times \Sigma$ with the same number of states as $C$. Since the output string along each transition has not changed, $C$ compresses $(u,w)$ just as well as $C'$ compresses $(w,u)$. \qedhere 
\end{proof}

\begin{lem}\label{inter}
For every $r,n \in \mathbb{Z}^+$ and every $u \in \Sigma^n$ and $w \in \Sigma^n$,
\[
\rho_r(u,w) = \rho_r(w,u).
\]
\end{lem}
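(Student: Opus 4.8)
The plan is to deduce this immediately from Observation~\ref{obs2}, which already supplies the only nontrivial ingredient: a state-count-preserving transformation of compressors that swaps the two coordinates of the input pair. Since $\rho_r(u,w)$ and $\rho_r(w,u)$ are each defined as a minimum of $\rho_C$ over the (finite, nonempty) family of $r$-state ILFSCs on $\Sigma \times \Sigma$, it suffices to show the two inequalities $\rho_r(u,w) \le \rho_r(w,u)$ and $\rho_r(w,u) \le \rho_r(u,w)$, and by the evident symmetry of the argument in $u$ and $w$ we only need one of them.

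So first I would fix $r,n \in \mathbb{Z}^+$ and $u,w \in \Sigma^n$, and let $C$ be an $r$-state ILFSC on $\Sigma \times \Sigma$ witnessing the minimum $\rho_r(w,u) = \rho_C(w,u)$. Applying Observation~\ref{obs2} to $C$ (with the roles of the two strings taken to be $w$ and $u$) yields an ILFSC $C'$ on $\Sigma \times \Sigma$ with the same number of states $r$ such that $\rho_C(w,u) = \rho_{C'}(u,w)$. Then $C'$ is an $r$-state competitor in the minimization defining $\rho_r(u,w)$, so $\rho_r(u,w) \le \rho_{C'}(u,w) = \rho_C(w,u) = \rho_r(w,u)$. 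Swapping the names of $u$ and $w$ gives the reverse inequality, and the two together give the claimed equality.

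There is essentially no obstacle here beyond making sure Observation~\ref{obs2} is invoked with the correct pairing and that the state count is genuinely preserved (which it is, since the transformation only relabels input symbols along transitions and leaves $Q$, $\delta$ up to relabeling, $\nu$, and $q_0$ structurally intact); one should also note in passing that the minima in the two definitions are attained because, for fixed $u,w$, there are only finitely many distinct functions $C \mapsto (C(\cdot),\delta^*(\cdot))$ realizable by $r$-state machines relevant to a string of length $n$, so ``$\min$'' is well defined. With these remarks the proof is a two-line application and needs no further computation.
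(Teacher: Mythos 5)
Your proof is correct and is essentially the paper's own argument: both take the minimizing $r$-state ILFSC for one ordering, apply Observation~\ref{obs2} to obtain a same-size compressor for the swapped ordering, conclude one inequality, and then invoke the symmetry of the argument for the other. The extra remark about the minima being attained is fine but not needed beyond what the paper already assumes.
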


\begin{proof}
let $C$ be the ILFSC on $\Sigma \times \Sigma$ such that $\rho_C(u,w) = \rho_r(u,w)$. By Observation \ref{obs2}, there exists another ILFSC $C'$ on $\Sigma \times \Sigma$ such that $\rho_C(u,w) = \rho_{C'}(w,u)$ and $C$ and $C'$ have the same number of states. Therefore,
\begin{align*}
\rho_r(u,w) &= \rho_{C}(u,w)\\
						&= \rho_{C'}(w,u)\\
						&\geq \rho_r(w,u).
\end{align*}
Using similar reasoning, we can show that $\rho_r(u,w) \leq \rho_r(w,u)$. \qedhere
\end{proof}

We proceed to explore finite-state mutual compression ratios between strings.

\begin{defn}
Let $r,t \in \mathbb{Z}^+$. The $r,t$-\emph{state mutual compression ratio} between $u \in \Sigma^n$ and $w \in \Sigma^n$ is
\[
\rho_{r,t}(u:w) = \rho_t(u) + \rho_t(w) - \rho_r(u,w).
\]
\end{defn}

We now present the main theorem of this section, which lists the basic properties of mutual compression ratios for finite-length strings.

\begin{thm}[Properties of Mutual Compression Ratios between Strings]\label{mcr}
For every $r,t,n \in \mathbb{Z}^+$ and every $u \in \Sigma^n$ and $w \in \Sigma^n$ such that $n \geq \max\{t',r'\}$,
\begin{enumerate}
\item $\rho_{r,t}(u:w) \leq \min\{\rho_{t}(u),\rho_{t}(w)\} + \Big\lfloor \frac{n}{t'} \Big\rfloor^{-1} + g_r^k(t')$ and $\displaystyle\lim_{m \rightarrow \infty}g_r^k(m) = 0$,
\item $\rho_{r,t}(u:w) + 2\Big\lfloor \frac{n}{r'} \Big\rfloor^{-1} + h_t^k(r') \geq 0$ and $\displaystyle\lim_{m \rightarrow \infty}h_t^k(m) = 0$,
\item $\rho_{r,t}(u:u) + \Big\lfloor \frac{n}{r'} \Big\rfloor^{-1} + i_t^k(r') \geq \rho_t(u)$ and $\displaystyle\lim_{m \rightarrow \infty}i_t^k(m) = 0$,
\item $\rho_{r,t}(u:u) \leq \rho_t(u) + \Big\lfloor \frac{n}{t'} \Big\rfloor^{-1} + j_r^k(t')$ and $\displaystyle\lim_{m \rightarrow \infty}j_r^k(m) = 0$,
\item $\rho_{r,t}(u:w) = \rho_{r,t}(w:u)$, and
\item $\rho_{r,t}(u:w) \leq \rho_{t,r}(u:w) + 3\Big \lfloor \frac{n}{t'} \Big \rfloor^{-1} + e_r^k(t')$ and $\displaystyle\lim_{m \rightarrow \infty}e_r^k(m) = 0$,
\end{enumerate}
where $r' = \lfloor log_k r \rfloor$ and $t' = \lfloor log_k t \rfloor$.
\end{thm}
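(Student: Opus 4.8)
The plan is to obtain each of the six items by a short deduction from the lemmas already established in this section, with item 6 carrying essentially all of the work. Items 1--5 are rearrangements. For item 1, write $\rho_t(u)+\rho_t(w)=\min\{\rho_t(u),\rho_t(w)\}+\max\{\rho_t(u),\rho_t(w)\}$ and bound the maximum by $\rho_r(u,w)+\lfloor n/t'\rfloor^{-1}+g_r^k(t')$ via Lemma~\ref{maxin}; the $\rho_r(u,w)$ terms then cancel against the $-\rho_r(u,w)$ in $\rho_{r,t}(u:w)$. Item 2 is immediate from Lemma~\ref{1to2}, which gives $\rho_r(u,w)\le\rho_t(u)+\rho_t(w)+2\lfloor n/r'\rfloor^{-1}+h_t^k(r')$ and hence $\rho_{r,t}(u:w)\ge-2\lfloor n/r'\rfloor^{-1}-h_t^k(r')$. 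For items 3 and 4, use $\rho_{r,t}(u:u)=2\rho_t(u)-\rho_r(u,u)$ and substitute, respectively, the first and second inequalities of Lemma~\ref{eq}. Item 5 follows from $\rho_r(u,w)=\rho_r(w,u)$ (Lemma~\ref{inter}) together with the symmetry of $\rho_t(u)+\rho_t(w)$ in $u$ and $w$. In each of these five cases the error function named in the statement can be taken to be exactly the one produced by the cited lemma, and the standing hypothesis $n\ge\max\{t',r'\}$ subsumes the side conditions of Lemmas~\ref{maxin}, \ref{1to2}, and~\ref{eq}.

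Item 6 is the substantive one. Expanding the definitions,
\[
\rho_{r,t}(u:w)-\rho_{t,r}(u:w)=\big(\rho_t(u)-\rho_r(u)\big)+\big(\rho_t(w)-\rho_r(w)\big)+\big(\rho_t(u,w)-\rho_r(u,w)\big),
\]
so it suffices to bound each of the three displayed differences from above by a quantity that vanishes as $t'\to\infty$. For $\rho_t(u)-\rho_r(u)$ (and symmetrically $\rho_t(w)-\rho_r(w)$), combine the Huffman upper bound $\rho_t(u)\le\ent(\pi_{u_{t'}}^{(t')})/(t'\log k)+1/t'$ of Lemma~\ref{huffc} with the bound $\ent(\pi_{u_{t'}}^{(t')})/(t'\log k)\le\rho_r(u)+\lfloor n/t'\rfloor^{-1}+f_r^k(t')$ obtained by applying Lemma~\ref{hc} (with block length $\ell=t'$, $s=r$) to an $r$-state ILFSC witnessing $\rho_r(u)$; together these give $\rho_t(u)-\rho_r(u)\le\lfloor n/t'\rfloor^{-1}+f_r^k(t')+1/t'$. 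For the joint difference $\rho_t(u,w)-\rho_r(u,w)$ there is no ready-made ``joint Huffman'' bound in the section, so I would manufacture one by treating $(u,w)$ as a single string over the $k^2$-symbol alphabet $\Sigma\times\Sigma$: using $\rho_t(u,w)=2\rho_t((u,w))$ and $\ent(\pi_{(u,w)}^{(\ell)})=\ent(\pi_{u,w}^{(\ell)})$, Lemma~\ref{huffc} over that alphabet with block length $\hat t=\lfloor\log_{k^2}t\rfloor$ gives $\rho_t(u,w)\le\ent(\pi_{u_{\hat t},w_{\hat t}}^{(\hat t)})/(\hat t\log k)+2/\hat t$, while Lemma~\ref{hc2} (with $\ell=\hat t$, $s=r$, applied to an $r$-state joint ILFSC witnessing $\rho_r(u,w)$) bounds $\ent(\pi_{u_{\hat t},w_{\hat t}}^{(\hat t)})/(\hat t\log k)$ by $\rho_r(u,w)+\lfloor n/\hat t\rfloor^{-1}+f_r^{k^2}(\hat t)$, yielding $\rho_t(u,w)-\rho_r(u,w)\le\lfloor n/\hat t\rfloor^{-1}+f_r^{k^2}(\hat t)+2/\hat t$. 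Summing the three bounds and using $\hat t\le t'\le n$ (so $\lfloor n/\hat t\rfloor^{-1}\le\lfloor n/t'\rfloor^{-1}$) collapses the floor terms into $3\lfloor n/t'\rfloor^{-1}$, and the residue $2f_r^k(t')+2/t'+f_r^{k^2}(\hat t)+2/\hat t$ is absorbed into a single function $e_r^k(t')$ — for instance $e_r^k(m)=2f_r^k(m)+2/m+\sup_{j\ge\lfloor m/2\rfloor}\!\big(f_r^{k^2}(j)+2/j\big)$ — which tends to $0$ as $m\to\infty$ since $f_s^k(j)\to0$ and $\hat t\ge\lfloor t'/2\rfloor$.

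The main obstacle I anticipate is exactly this joint difference in item 6: the section supplies a single-letter compression-versus-entropy upper bound (Lemma~\ref{huffc}) and a joint entropy-versus-compression lower bound (Lemma~\ref{hc2}), but no joint upper bound, so one is forced through the product-alphabet reinterpretation, after which care is needed with the base-$k^2$ logarithm, with the relationship between the shifted block length $\hat t$ and $t'$, and with packaging several independent error terms into the single $e_r^k$ required by the statement. Everything else — checking that $\ell\le n$ each time Lemmas~\ref{hc} and~\ref{hc2} are invoked (guaranteed since $\hat t\le t'\le n$), and that the error functions of items 1--5 may be inherited verbatim — is routine.
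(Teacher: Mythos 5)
Your proposal is correct and follows essentially the same route as the paper: items 1--5 are read off from Lemmas \ref{maxin}, \ref{1to2}, \ref{eq}, and \ref{inter} exactly as you describe, and item 6 is handled by the same combination of Lemmas \ref{huffc}, \ref{hc}, and \ref{hc2} after expanding $\rho_{r,t}(u:w)-\rho_{t,r}(u:w)$ (the paper cancels the entropy terms directly rather than grouping them into three telescoping differences, but the algebra is identical). Your extra care in using the block length $\hat t=\lfloor\log_{k^2}t\rfloor$ when invoking the Huffman bound over the product alphabet $\Sigma\times\Sigma$ is a legitimate refinement of a point the paper glosses over by writing $t'$ there; it changes only the exact form of the error function $e_r^k$, not the conclusion.
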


\begin{proof}
By Lemma \ref{maxin},
\begin{align*}
\rho_{r,t}(u:w) &= \rho_{t}(u) + \rho_{t}(w) - \rho_r(u,w)\\
									&\leq \min\{\rho_{t}(u), \rho_{t}(w)\} + \bigg\lfloor \frac{n}{t'} \bigg\rfloor^{-1} + g_r^k(t'),
\end{align*}
which proves the first statement. By Lemma \ref{1to2},
\begin{align*}
\rho_{r,t}(u:w) &= \rho_t(u) + \rho_t(w) - \rho_{r}(u,w)\\
								&\geq - 2\Big\lfloor \frac{n}{r'} \Big\rfloor^{-1} - h_t^k(r'),
\end{align*}
which proves the second statement.  By the first inequality of Lemma \ref{eq},
\begin{align*}
\rho_{r,t}(u:u) &= \rho_{t}(u) + \rho_{t}(u) - \rho_r(u,u)\\
									 &\geq \rho_{t}(u) - \Big\lfloor \frac{n}{r'} \Big\rfloor^{-1} - i_t^k(r'),
\end{align*}
which proves the third statement. By the second inequality of Lemma \ref{eq},
\begin{align*}
\rho_{r,t}(u:u) &= \rho_t(u) + \rho_t(u) - \rho_{r}(u,u)\\
										 &\leq \rho_t(u) + \Big \lfloor \frac{n}{t'} \Big \rfloor^{-1} + j_r^k(t'),
\end{align*}
which proves the fourth statement. By Lemma \ref{inter},
\begin{align*}
\rho_{r,t}(u:w) &= \rho_t(u) + \rho_t(w) - \rho_{r}(u,w)\\
							  &= \rho_t(w) + \rho_t(u) - \rho_{r}(w,u)\\
								&= \rho_{r,t}(w:u),
\end{align*}
which proves the fifth statement. Finally, to prove the sixth statement, observe that
\begin{align*}
\rho_{r,t}(u:w) - \rho_{t,r}(u:w) 	&= \rho_t(u) + \rho_t(w) - \rho_r(u, w) - \rho_r(u) - \rho_r(w) + \rho_t(u,w)\\
&= \rho_t(u) + \rho_t(w) - \rho_r(u,w) - \rho_r(u) - \rho_r(w) + 2\rho_t((u,w))
\end{align*}
By the above inequality and Lemmas \ref{hc}, \ref{hc2}, and \ref{huffc},
\begin{align*}
&\rho_{r,t}(u:w) - \rho_{t,r}(u:w)\\
&\leq \frac{\ent(\pi_{u_{t'}}^{(t')})}{t' \log k} + \frac{\ent(\pi_{w_{t'}}^{(t')})}{t' \log k} - \frac{\ent(\pi_{u_{t'},w_{t'}}^{(t')})}{t' \log k} - \frac{\ent(\pi_{u_{t'}}^{(t')})}{t' \log k} - \frac{\ent(\pi_{w_{t'}}^{(t')})}{t' \log k} + 2\bigg(\frac{\ent\big(\pi_{(u,w)_{t'}}^{(t')}\big)}{t' \log k^2} + \frac{1}{t'}\bigg)\\
&\hspace*{10mm} + \frac{2}{t'} + 3\Big \lfloor \frac{n}{t'} \Big \rfloor^{-1} + 2f_r^{k}(t') + f_r^{k^2}(t')\\
&= \frac{\ent(\pi_{u_{t'}}^{(t')})}{t' \log k} + \frac{\ent(\pi_{w_{t'}}^{(t')})}{t' \log k} - \frac{\ent(\pi_{u_{t'},w_{t'}}^{(t')})}{t' \log k} - \frac{\ent(\pi_{u_{t'}}^{(t')})}{t' \log k} - \frac{\ent(\pi_{w_{t'}}^{(t')})}{t' \log k} + \frac{\ent(\pi_{u_{t'},w_{t'}}^{(t')})}{t' \log k}\\
&\hspace*{10mm} + \frac{4}{t'} + 3\Big \lfloor \frac{n}{t'} \Big \rfloor^{-1} + 2f_r^{k}(t') + f_r^{k^2}(t')\\
&= \frac{4}{t'} + 3\Big \lfloor \frac{n}{t'} \Big \rfloor^{-1} + 2f_r^{k}(t') + f_r^{k^2}(t')\\
&= 3\Big \lfloor \frac{n}{t'} \Big \rfloor^{-1} + e_r^k(t'),
\end{align*}
where $e_r^k(t') = \frac{4}{t'} + 2f_r^{k}(t') + f_r^{k^2}(t')$. \qedhere
\end{proof}

%%%%%%%%%%%%%%%%%%%%%%%%%%%%%%%%%%%%%%%%%%%%%%%%%%%%%%%%%%%%

\section{Finite-State Mutual Dimension}
In this section we define the \emph{lower} and \emph{upper mutual compression ratios} and the \emph{lower} and \emph{upper finite-state mutual dimensions} between sequences and explore their properties.

We begin by discussing the \emph{finite-state dimension} $dim_{FS}(S)$ of a sequence $S \in \Sigma^{\infty}$, which was originally defined in 2003 by Dai, Lathrop, Lutz, and Mayordomo in \cite{jDaLaLuMa04} using finite-state gamblers. In the same paper, the authors proved a characterization of finite-state dimension using finite-state compressors. In 2007, Athreya, Hitchcock, Lutz, and Mayordomo defined the \emph{finite-state strong dimension} of a sequence using finite-state gamblers and proved that it can also be characterized using finite-state compressors \cite{jAHLM07}. In this section, we will use the compressor characterization of finite-state dimension and finite-state strong dimension and refer to them as the \emph{lower} and \emph{upper finite-state dimensions}, respectively.

\begin{defn}
Let $r\in \mathbb{Z}^+$ and $S \in \Sigma^\infty$. The \emph{lower} and \emph{upper} $r$-\emph{state compression ratios} of $S$ are
\[
\rho_r(S) = \displaystyle\liminf_{n \rightarrow \infty}\rho_r(S \harp n)
\]
and
\[
\hat{\rho}_r(S) = \displaystyle\limsup_{n \rightarrow \infty}\rho_r(S \harp n),
\]
respectively.
\end{defn}

\begin{defn}
Let $r\in \mathbb{Z}^+$ and $S,T \in \Sigma^\infty$. The \emph{lower} and \emph{upper} $r$-\emph{state joint compression ratios} of $S$ and $T$ are
\[
\rho_r(S,T) = \displaystyle\liminf_{n \rightarrow \infty}\rho_r(S \harp n,T \harp n)
\]
and
\[
\hat{\rho}_r(S,T) = \displaystyle\limsup_{n \rightarrow \infty}\rho_r(S \harp n,T \harp n),
\]
respectively.
\end{defn}

\begin{defn}
Let $r,t \in \mathbb{Z}^+$. The \emph{lower} and \emph{upper} $r,t$-\emph{state mutual compression ratios} between $S \in \Sigma^{\infty}$ and $T \in \Sigma^{\infty}$ are
\[
\rho_{r,t}(S:T) = \displaystyle\liminf_{n \rightarrow \infty}\rho_{r,t}(S \harp n:T \harp n)
\]
and
\[
\hat{\rho}_{r,t}(S:T) = \displaystyle\limsup_{n \rightarrow \infty}\rho_{r,t}(S \harp n:T \harp n),
\]
respectively.
\end{defn}

We now present and prove the properties of the lower and upper $r,t$-state mutual compression ratio between sequences.

\begin{lem}[Properties of Mutual Compression Ratios between Sequences]\label{mcrs}
Let $r,t \in \mathbb{Z}^+$. For all $S,T \in \Sigma^\infty$,
{\footnotesize
\begin{enumerate}
\item $\rho_t(S) + \rho_t(T) - \hat{\rho}_r(S,T) \leq \rho_{r,t}(S:T) \leq \hat{\rho}_t(S) + \hat{\rho}_t(T) - \hat{\rho}_r(S,T)$
\item $\rho_t(S) + \rho_t(T) - \rho_r(S,T) \leq \hat{\rho}_{r,t}(S:T) \leq \hat{\rho}_t(S) + \hat{\rho}_t(T) - \rho_r(S,T)$
\item $\rho_{r,t}(S:T) \leq \min\{\rho_t(S),\rho_t(T)\} + g_r^k(t')$, $\hat{\rho}_{r,t}(S:T) \leq \min\{\hat{\rho}_{t}(S),\hat{\rho}_{t}(T)\} + g_r^k(t')$,\\ and $\displaystyle\lim_{m \rightarrow \infty}g_r^k(m) = 0$,
\item $\rho_{r,t}(S:T) + h_t^k(r') \geq 0$, $\hat{\rho}_{r,t}(S:T) + h_t^k(r') \geq 0$, and $\displaystyle\lim_{m \rightarrow \infty}h_t^k(m) = 0$,
\item $\rho_{r,t}(S:S) + i_t^k(r') \geq \rho_t(S)$, $\hat{\rho}_{r,t}(S:S) + i_t^k(r') \geq \hat{\rho}_t(S)$, and $\displaystyle\lim_{m \rightarrow \infty}i_t^k(m) = 0$,
\item $\rho_{r,t}(S:S) \leq \rho_t(S) + j_r^k(t')$, $\hat{\rho}_{r,t}(S:S) \leq \hat{\rho}_t(S) + j_r^k(t')$, and $\displaystyle\lim_{m \rightarrow \infty}j_r^k(m) = 0$,
\item $\rho_{r,t}(S:T) = \rho_{r,t}(T:S)$, $\hat{\rho}_{r,t}(S:T) = \hat{\rho}_{r,t}(T:S)$, and
\item $\rho_{r,t}(S:T) \leq \rho_{t,r}(S:T) + e_r^k(t')$, and $\displaystyle\lim_{m \rightarrow \infty}e_r^k(m) = 0$,
\end{enumerate}
}
where $r' = \lfloor log_k r \rfloor$ and $t' = \lfloor log_k t \rfloor$.
\end{lem}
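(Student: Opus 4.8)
The plan is to obtain each of the eight items by passing to $\liminf_{n\to\infty}$ or $\limsup_{n\to\infty}$ in the corresponding finitary statement of Theorem~\ref{mcr}, applied with $u = S\harp n$ and $w = T\harp n$, and then invoking elementary arithmetic of limits inferior and superior. The only analytic facts I will need are the standard chain
\[
\liminf a_n + \liminf b_n \;\le\; \liminf(a_n+b_n) \;\le\; \liminf a_n + \limsup b_n \;\le\; \limsup(a_n+b_n) \;\le\; \limsup a_n + \limsup b_n,
\]
the complementary ``middle'' bound $\limsup(a_n+b_n) \ge \liminf a_n + \limsup b_n$, the identities $\liminf(-a_n) = -\limsup a_n$ and $\limsup(-a_n) = -\liminf a_n$, the fact that adding a sequence that tends to $0$ alters neither the $\liminf$ nor the $\limsup$, and the observations $\liminf\min\{a_n,b_n\} \le \min\{\liminf a_n,\liminf b_n\}$ and $\limsup\min\{a_n,b_n\} \le \min\{\limsup a_n,\limsup b_n\}$, both immediate from $\min\{a_n,b_n\}\le a_n,b_n$. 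Note that $t' = \lfloor\log_k t\rfloor$ and $r' = \lfloor\log_k r\rfloor$ are fixed once $t,r$ are fixed, so every error term $\lfloor n/t'\rfloor^{-1}$ or $\lfloor n/r'\rfloor^{-1}$ vanishes as $n\to\infty$, whereas $g_r^k(t'),h_t^k(r'),i_t^k(r'),j_r^k(t'),e_r^k(t')$ are constants in $n$ and simply carry through.

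For item~1, write $\rho_{r,t}(S\harp n:T\harp n) = \bigl(\rho_t(S\harp n)+\rho_t(T\harp n)\bigr) - \rho_r(S\harp n,T\harp n)$ and apply the chain with $a_n = \rho_t(S\harp n)+\rho_t(T\harp n)$ and $b_n = -\rho_r(S\harp n,T\harp n)$: the lower bound follows from $\liminf a_n \ge \rho_t(S)+\rho_t(T)$ and $\liminf b_n = -\hat\rho_r(S,T)$, and the upper bound from $\liminf(a_n+b_n) \le \limsup a_n + \liminf b_n$ together with $\limsup a_n \le \hat\rho_t(S)+\hat\rho_t(T)$. Item~2 is the identical computation with the outer $\liminf$ replaced by $\limsup$, using $\limsup(a_n+b_n)\ge\liminf a_n+\limsup b_n$ for the lower bound and ordinary subadditivity of $\limsup$ for the upper bound. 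Items~3--6 and~8 all share one template: start from the relevant inequality of Theorem~\ref{mcr} with $u=S\harp n$, $w=T\harp n$ (taking $u=w=S\harp n$ for items~5 and~6), take $\liminf_n$ and then $\limsup_n$ of both sides, and absorb the $\lfloor n/\cdot\rfloor^{-1}$ term into the limit since it tends to $0$; for item~3 one additionally applies $\liminf\min\le\min\liminf$ (resp.\ its $\limsup$ analogue) to the right-hand side. Item~7 is immediate, since Theorem~\ref{mcr}(5) gives $\rho_{r,t}(S\harp n:T\harp n) = \rho_{r,t}(T\harp n:S\harp n)$ for every $n$, so the two sequences coincide and hence have equal $\liminf$ and $\limsup$.

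I do not anticipate a genuine obstacle; the one place demanding care is item~1 (and its mirror, item~2), where the subtracted term $-\rho_r(\cdot,\cdot)$ sitting inside the outer limit forces one to track which of $\rho_t,\hat\rho_t,\rho_r,\hat\rho_r$ lands on each side and to deploy the ``middle'' inequalities $\liminf(a_n+b_n)\le\limsup a_n+\liminf b_n$ and $\limsup(a_n+b_n)\ge\liminf a_n+\limsup b_n$ rather than the cruder super-/sub-additivity bounds. Everything else is routine bookkeeping.
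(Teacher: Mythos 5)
Your proposal is correct and matches the paper's proof in essence: both reduce each item to the corresponding finitary inequality of Theorem~\ref{mcr} applied to $S\harp n$ and $T\harp n$, then pass to $\liminf$/$\limsup$ using the standard super-/sub-additivity and mixed ``middle'' inequalities, absorbing the $\lfloor n/\cdot\rfloor^{-1}$ terms as $n\to\infty$. The places you flag as needing care (items~1 and~2) are exactly where the paper also deploys the mixed inequalities, so there is nothing to add.
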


\begin{proof}
To prove the first inequality in the first statement, observe that
\begin{align*}
\rho_{r,t}(S:T) &= \displaystyle\liminf_{n \rightarrow \infty}\rho_{r,t}(S \harp n:T \harp n)\\
								&= \displaystyle\liminf_{n \rightarrow \infty}\Big[\rho_t(S \harp n) + \rho_t(T \harp n) - \rho_r(S \harp n, T \harp n)\Big]\\
								&\geq \displaystyle\liminf_{n \rightarrow \infty}\rho_t(S \harp n) + \displaystyle\liminf_{n \rightarrow \infty}\rho_t(T \harp n) + \displaystyle\liminf_{n \rightarrow \infty}-\rho_r(S \harp n, T \harp n)\\
								&= \displaystyle\liminf_{n \rightarrow \infty}\rho_t(S \harp n) + \displaystyle\liminf_{n \rightarrow \infty}\rho_t(T \harp n) - \displaystyle\limsup_{n \rightarrow \infty}\rho_r(S \harp n, T \harp n)\\
								&= \rho_t(S) + \rho_t(T) - \hat{\rho_r}(S,T).
\end{align*}
For the second inequality, observe that
\begin{align*}
&\hat{\rho}_t(S) + \hat{\rho}_t(T) - \rho_{r,t}(S:T)\\
&= \displaystyle\limsup_{n \rightarrow \infty}\rho_t(S \harp n) + \displaystyle\limsup_{n \rightarrow \infty}\rho_t(T \harp n) - \displaystyle\liminf_{n \rightarrow \infty}\rho_{r,t}(S \harp n:T \harp n)\\
&= \displaystyle\limsup_{n \rightarrow \infty}\rho_t(S \harp n) + \displaystyle\limsup_{n \rightarrow \infty}\rho_t(T \harp n) + \displaystyle\limsup_{n \rightarrow \infty}-\rho_{r,t}(S \harp n:T \harp n)\\
&\geq \displaystyle\limsup_{n \rightarrow \infty}\Big[\rho_t(S \harp n) + \rho_t(T \harp n) - \rho_{r,t}(S \harp n:T \harp n)\Big]\\
&= \displaystyle\limsup_{n \rightarrow \infty}\rho_{r}(S \harp n,T \harp n)\\
&= \hat{\rho_r}(S,T),
\end{align*}
which implies the second inequality of the first statement. The second statement has a similar proof as the first statement. The third statement follows from the first statement of Theorem \ref{mcr} and the fact that, for any $t' \in \mathbb{Z}^+$, $\displaystyle\lim_{n \rightarrow \infty}\lfloor n/t' \rfloor^{-1} = 0$. The fourth statement follows from the second statement of Theorem \ref{mcr} and the fact that, for any $r' \in \mathbb{Z}^+$, $\displaystyle\lim_{n \rightarrow \infty}2\lfloor n/r' \rfloor^{-1} = 0$. The fifth statement follows from the third statement of Theorem \ref{mcr} and the fact that, for any $r' \in \mathbb{Z}^+$, $\displaystyle\lim_{n \rightarrow \infty}\lfloor n/r' \rfloor^{-1} = 0$. The sixth statement follows from the fourth statement of Theorem \ref{mcr} and the fact that, for all $t' \in \mathbb{Z}^+$, $\displaystyle\lim_{n \rightarrow \infty}\lfloor n/t' \rfloor^{-1} = 0$. The seventh statement follows directly from the fifth statement of Theorem \ref{mcr}. Finally, to prove the eighth statement, observe that, by the sixth statement of Theorem \ref{mcr} and by the fact that, for all $t' \in \mathbb{Z}^+$, $\displaystyle\lim_{n \rightarrow \infty}3\lfloor n/t' \rfloor^{-1} = 0$,
\begin{align*}
\rho_{r,t}(S:T) - \rho_{t,r}(S:T) &= \displaystyle\liminf_{n \rightarrow \infty}\rho_{r,t}(S \harp n:T \harp n) - \displaystyle\liminf_{n \rightarrow \infty}\rho_{t,r}(S \harp n:T \harp n)\\
																	&\leq \displaystyle\limsup_{n \rightarrow \infty}\big[ \rho_{r,t}(S \harp n:T \harp n) - \rho_{t,r}(S \harp n:T \harp n) \big]\\
																	&= \displaystyle\limsup_{n \rightarrow \infty}\bigg[3\Big \lfloor \frac{n}{t'} \Big \rfloor^{-1} + e_r^k(t')\bigg ]\\
																	&= \displaystyle\limsup_{n \rightarrow \infty}\bigg[ e_r^k(t')\bigg ]\\
																	&= e_r^k(t'). \qedhere
\end{align*}
\end{proof}

We proceed to discuss the compression ratio characterization of finite-state dimension.

\begin{defn}
The \emph{lower} and \emph{upper finite-state compression ratios} of $S \in \Sigma^{\infty}$ are
\[
\rho(S) = \displaystyle\lim_{r \rightarrow \infty}\liminf_{n \rightarrow \infty}\rho_r(S \harp n)
\]
and
\[
\hat{\rho}(S) = \displaystyle\lim_{r \rightarrow \infty}\limsup_{n \rightarrow \infty}\rho_r(S \harp n),
\]
respectively.
\end{defn}
\noindent (Note that, by the monotone convergence theorem, the definitions of the lower and upper finite-state compression ratios are equal to those found in (\ref{compratio}) and (\ref{scompratio}), respectively, since $\rho_r(S \harp n)$ is bounded and decreasing in $r$.)
\begin{defn}
The \emph{lower} and \emph{upper joint finite-state compression ratios} of $S \in \Sigma^{\infty}$ are
\[
\rho(S,T) = \displaystyle\lim_{r \rightarrow \infty}\liminf_{n \rightarrow \infty}\rho_r(S \harp n,T \harp n)
\]
and
\[
\hat{\rho}(S,T) = \displaystyle\lim_{r \rightarrow \infty}\limsup_{n \rightarrow \infty}\rho_r(S \harp n,T \harp n),
\]
respectively.
\end{defn}

In the following theorem, the first equality was proven by Dai, Lathrop, Lutz, and Mayordomo in \cite{jDaLaLuMa04} and the second equality was proven by Athreya, Hitchcock, Lutz, and Mayordomo in \cite{jAHLM07}.

\begin{thm}[\cite{jDaLaLuMa04, jAHLM07}]\label{dimtorho}
For all $S,T \in \Sigma^\infty$,
\[
dim_{FS}(S) = \rho(S)
\]
and
\[
Dim_{FS}(S) = \hat{\rho}(S).
\]
\end{thm}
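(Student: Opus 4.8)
The plan is to deduce Theorem~\ref{dimtorho} from the established compressor characterizations \eqref{compratio} and \eqref{scompratio}, once we reconcile the ``$\lim_{r}\liminf_{n}$'' form of $\rho(S)$ (resp.\ the $\limsup$ form of $\hat\rho(S)$) used in the definitions above with the ``$\inf$ over all ILFSCs $C$'' form appearing in \eqref{compratio} and \eqref{scompratio}. That is, the task reduces to proving
\[
\lim_{r\to\infty}\liminf_{n\to\infty}\rho_r(S\harp n)=\inf\{\liminf_{n\to\infty}\rho_C(S\harp n):C\text{ is an ILFSC on }\Sigma\},
\]
together with the same identity with $\limsup$ in place of $\liminf$; granted this, the two displayed equations of the theorem are literally \eqref{compratio} and \eqref{scompratio}, which were proved in \cite{jDaLaLuMa04} and \cite{jAHLM07}. (Since the result is already in the literature, the shortest route is simply to cite those two papers after recording this reconciliation.)

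The ``$\le$'' direction of the identity is routine: every ILFSC $C$ has some number $r_0$ of states, and padding $C$ with finitely many unreachable dummy states exhibits it as an $r$-state ILFSC for every $r\ge r_0$, so $\rho_r(S\harp n)\le\rho_C(S\harp n)$ for all $n$. Taking $\liminf_n$ and then $\inf$ over $r$ — which agrees with $\lim_r$, since $\rho_r(S\harp n)$ is nonincreasing in $r$ and bounded below by $0$, so the monotone convergence theorem applies as noted after the definition — gives $\rho(S)\le\liminf_n\rho_C(S\harp n)$, and then $\inf$ over $C$ gives $\rho(S)\le dim_{FS}(S)$; the argument for $\hat\rho(S)\le Dim_{FS}(S)$ is identical.

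For the ``$\ge$'' direction — the substantive one, since for fixed $n$ the optimal $r$-state compressor depends on $n$ and there are infinitely many $r$-state ILFSCs — I would route both quantities through the lower (resp.\ upper) block entropy rate of $S$ and invoke Section~2. Lemma~\ref{huffc} gives $\rho_r(S\harp n)\le \ent(\pi_{(S\harp n)_{r'}}^{(r')})/(r'\log k)+1/r'$ with $r'=\lfloor\log_k r\rfloor$, while Lemma~\ref{hc}, applied to an optimal $r$-state compressor and using that $f_s^k(\ell)$ is nondecreasing in $s$ (hence $\le f_r^k(\ell)$ for $s\le r$), gives $\rho_r(S\harp n)\ge \ent(\pi_{(S\harp n)_\ell}^{(\ell)})/(\ell\log k)-\lfloor n/\ell\rfloor^{-1}-f_r^k(\ell)$ for every $\ell\le n$; the same two lemmas applied to an arbitrary fixed ILFSC yield the matching upper and lower bounds for $dim_{FS}(S)$. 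The main obstacle is coordinating the block length with the state count: in the lower bound one must choose $\ell=\ell(r)\to\infty$ slowly enough — e.g.\ $\ell(r)$ of order $(\log r)^2$ — that the error $f_r^k(\ell(r))$ still tends to $0$, while $\ell(r)\to\infty$ forces $\ent(\pi_{(S\harp n)_{\ell(r)}}^{(\ell(r))})/(\ell(r)\log k)$ to converge to the block entropy rate. Carrying this through (and checking that the lower and upper block entropy rates are genuine limits) shows that $\rho(S)$, $dim_{FS}(S)$, and the lower block entropy rate of $S$ all coincide, and symmetrically for $\hat\rho(S)$, $Dim_{FS}(S)$, and the upper block entropy rate, which is the theorem.
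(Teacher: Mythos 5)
The paper does not actually prove Theorem~\ref{dimtorho}: it imports both equalities from \cite{jDaLaLuMa04} and \cite{jAHLM07}, and the only supporting argument it offers is the parenthetical remark that, by monotone convergence in $r$, the $\lim_{r}\liminf_{n}$ form of $\rho(S)$ agrees with the $\inf$-over-ILFSC form in \eqref{compratio} (and likewise for $\hat\rho$). Your proposal goes further than the paper in two ways, both to the good. First, you correctly observe that monotone convergence only gives existence of the limit in $r$ and the easy inequality $\rho(S)\le\inf_C\liminf_n\rho_C(S\harp n)$ (via padding a fixed $C$ with dummy states, which preserves information-losslessness); the reverse inequality is genuinely nontrivial because the minimizing $r$-state compressor in $\rho_r(S\harp n)$ may change with $n$, and one cannot interchange $\min_C$ with $\liminf_n$ over an infinite family. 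Second, your route for that direction --- sandwiching both quantities against the block entropy rate via Lemmas~\ref{hc} and~\ref{huffc}, with the block length $\ell(r)$ growing like $(\log r)^2$ so that $f_r^k(\ell(r))\to 0$ --- is exactly the Sheinwald/Bourke--Hitchcock--Vinodchandran argument and is the standard way this equivalence is established; it makes the theorem essentially self-contained given Section~2, which the paper's bare citation does not.

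One step of your sketch does not work as literally stated: you say ``the same two lemmas applied to an arbitrary fixed ILFSC yield the matching upper and lower bounds for $dim_{FS}(S)$.'' Lemma~\ref{hc} does apply to an arbitrary fixed ILFSC and gives $\liminf_n\rho_C(S\harp n)\ge \ent_\ell(S)-f_s^k(\ell)$, hence the lower bound $dim_{FS}(S)\ge\ent(S)$ after $\ell\to\infty$ with $s$ fixed. But Lemma~\ref{huffc} bounds $\rho_r$, the minimum over $r$-state compressors, and the compressor realizing it, $C_{F(\ell,u)}$, is a Huffman code built from the empirical block frequencies of the particular prefix $u=S\harp n$; it is not a single fixed ILFSC, so it does not directly witness $\inf_C\liminf_n\rho_C(S\harp n)\le\ent(S)$ in the formulation \eqref{compratio}. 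To close this you need one more step: pass to a subsequence of $n$ along which $\ent(\pi_{S,n}^{(\ell)})$ approaches its $\liminf$ and the empirical distributions $\pi_{S,n}^{(\ell)}$ converge (compactness of the simplex), and build the Huffman compressor for the limiting distribution, using continuity of entropy and of expected code length. With that addition your plan is complete and, unlike the paper, constitutes an actual proof.
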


\noindent The following corollary follows directly from Theorem \ref{dimtorho}.

\begin{cor}\label{jdimtorho}
For all $S,T \in \Sigma^\infty$,
\[
dim_{FS}(S,T) = \rho(S,T)
\]
and
\[
Dim_{FS}(S,T) = \hat{\rho}(S,T).
\]
\end{cor}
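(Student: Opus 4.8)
The plan is to read Corollary~\ref{jdimtorho} as Theorem~\ref{dimtorho} applied to the paired sequence $(S,T)$, regarded as a single element of $(\Sigma\times\Sigma)^\infty$, after a short piece of normalization bookkeeping. First I would note that for every $n\in\mathbb{N}$ the string $(S,T)\harp n$ and the string $(S\harp n,\,T\harp n)$ coincide, both being equal to $(S[1],T[1])\cdots(S[n],T[n])$ over the alphabet $\Sigma\times\Sigma$. Consequently the $r$-state joint compression ratio $\rho_r(S\harp n,T\harp n)$ of Section~2 and the genuine $r$-state compression ratio $\rho_r\big((S,T)\harp n\big)$ of that product string differ only by the fixed scalar relating the normalizations $n\log k$ and $n\log k^2$; as in the proof of Lemma~\ref{low2}, $\rho_r(u,w)=2\,\rho_r((u,w))$. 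Passing to $\liminf_n$ and then $\lim_r$ (and, in parallel, to $\limsup_n$), this gives $\rho(S,T)=2\,\rho\big((S,T)\big)$ and $\hat{\rho}(S,T)=2\,\hat{\rho}\big((S,T)\big)$, where on the right $\rho$ and $\hat{\rho}$ denote the lower and upper finite-state compression ratios of the single sequence $(S,T)$ over $\Sigma\times\Sigma$.

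Next I would invoke Theorem~\ref{dimtorho}, which holds for an arbitrary sequence over an arbitrary finite alphabet, with the sequence taken to be $(S,T)\in(\Sigma\times\Sigma)^\infty$: this yields $\rho\big((S,T)\big)=dim_{FS}\big((S,T)\big)$ and $\hat{\rho}\big((S,T)\big)=Dim_{FS}\big((S,T)\big)$. Combining with the previous step, $\rho(S,T)=2\,dim_{FS}\big((S,T)\big)$ and $\hat{\rho}(S,T)=2\,Dim_{FS}\big((S,T)\big)$. The argument then closes by matching this with the definitions of $dim_{FS}(S,T)$ and $Dim_{FS}(S,T)$: the finite-state (strong) dimension of a pair is normalized by $\log|\Sigma|$ rather than $\log|\Sigma\times\Sigma|$, precisely so that it is comparable with $dim_{FS}(S)$, $dim_{FS}(T)$ and with the mutual compression ratios of Section~3, so that $dim_{FS}(S,T)=2\,dim_{FS}\big((S,T)\big)$ and $Dim_{FS}(S,T)=2\,Dim_{FS}\big((S,T)\big)$. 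The two asserted equalities follow immediately.

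I expect the only delicate point---and the step most likely to trip a reader up---to be exactly this factor-of-two normalization: the ``joint compression ratio'' of a pair is deliberately divided by $n\log k$ and not by $n\log k^2$, so it is twice the ordinary compression ratio of $(u,w)$ viewed as a string over the product alphabet. Once one checks that $dim_{FS}(S,T)$ and $Dim_{FS}(S,T)$ are read with the matching convention, there is no further content: the corollary is a direct transcription of Theorem~\ref{dimtorho}, requiring no new combinatorial or information-theoretic input beyond that citation and the trivial identity $(S,T)\harp n=(S\harp n,T\harp n)$.
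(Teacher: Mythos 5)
Your argument is exactly the one the paper intends: the corollary is stated as following directly from Theorem~\ref{dimtorho} applied to the paired sequence $(S,T)\in(\Sigma\times\Sigma)^\infty$, and your careful tracking of the factor-of-two normalization (via $\rho_r(u,w)=2\rho_r((u,w))$, as used in the proofs of Lemmas~\ref{low2} and~\ref{eq}) correctly fills in the bookkeeping the paper leaves implicit. No gaps.
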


We now present the definitions of the \emph{lower} and \emph{upper finite-state mutual dimensions} between sequences.

\begin{defn}
The \emph{lower} and \emph{upper finite-state mutual dimensions} between $S \in \Sigma^{\infty}$ and $T \in \Sigma^{\infty}$ are
\[
mdim_{FS}(S:T) = \displaystyle\lim_{r \rightarrow \infty}\lim_{t \rightarrow \infty}\rho_{r,t}(S:T)
\]
and
\[
Mdim_{FS}(S:T) = \displaystyle\lim_{r \rightarrow \infty}\lim_{t \rightarrow \infty}\hat{\rho}_{r,t}(S:T),
\]
respectively.
\end{defn}

The first limit in the definitions above exists because both $\rho_{r,t}(S:T)$ and $\hat{\rho}_{r,t}(S:T)$ are decreasing in $t$ since $\rho_t(S \harp n)$ and $\rho_t(T \harp n)$ are decreasing in $t$. The second limit also exists because both
\[
\displaystyle\lim_{t \rightarrow \infty}\rho_{r,t}(S:T) \text{ and } \displaystyle\lim_{t \rightarrow \infty}\hat{\rho}_{r,t}(S:T)
\] 
are increasing in $r$, since $-\rho_{r}(S \harp n, T \harp n)$ is increasing in $r$.

Our first theorem of this section is an important result that allows for the interchanging of the iterated limits within the definitions of the lower and upper finite-state mutual dimensions. The proof of the properties of finite-state mutual dimensions (Theorem \ref{md}) rely on this result.

\begin{thm}\label{interchange}
For all $S,T \in \Sigma^{\infty}$,
\[
mdim_{FS}(S:T) = \displaystyle\lim_{r \rightarrow \infty}\lim_{t \rightarrow \infty}\rho_{r,t}(S:T) = \displaystyle\lim_{t \rightarrow \infty}\lim_{r \rightarrow \infty}\rho_{r,t}(S:T)
\]
and
\[
Mdim_{FS}(S:T) = \displaystyle\lim_{r \rightarrow \infty}\lim_{t \rightarrow \infty}\hat{\rho}_{r,t}(S:T) = \displaystyle\lim_{t \rightarrow \infty}\lim_{r \rightarrow \infty}\hat{\rho}_{r,t}(S:T).
\]
\end{thm}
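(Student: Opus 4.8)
\noindent\emph{Proof plan.} We argue the identity for $mdim_{FS}$; the one for $Mdim_{FS}$ follows by replacing $\liminf$ with $\limsup$ throughout.

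First we note that the relevant limits all exist. Since $\rho_{r,t}(u:w)=\rho_t(u)+\rho_t(w)-\rho_r(u,w)$ and the $m$-state (ordinary or joint) compression ratios are nonincreasing in $m$, the quantity $\rho_{r,t}(S\harp n:T\harp n)$ is nondecreasing in $r$ and nonincreasing in $t$; hence so is $\rho_{r,t}(S:T)$, and it is bounded because $0\le\rho_t(S\harp n)$ and $\rho_r(S\harp n,T\harp n)$ are uniformly bounded. Writing $\underline{\rho}_r:=\inf_t\rho_{r,t}(S:T)=\lim_{t\to\infty}\rho_{r,t}(S:T)$ and $\overline{\rho}_t:=\sup_r\rho_{r,t}(S:T)=\lim_{r\to\infty}\rho_{r,t}(S:T)$, we have $mdim_{FS}(S:T)=\sup_r\underline{\rho}_r$ and $\lim_{t\to\infty}\lim_{r\to\infty}\rho_{r,t}(S:T)=\inf_t\overline{\rho}_t$. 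The inequality
\[
mdim_{FS}(S:T)=\sup_r\inf_t\rho_{r,t}(S:T)\ \le\ \inf_t\sup_r\rho_{r,t}(S:T)=\lim_{t\to\infty}\lim_{r\to\infty}\rho_{r,t}(S:T)
\]
is automatic, since $\inf_t\rho_{r_0,t}(S:T)\le\rho_{r_0,t_0}(S:T)\le\sup_r\rho_{r,t_0}(S:T)$ for every $r_0,t_0$.

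The content is the reverse inequality $\inf_t\overline{\rho}_t\le\sup_r\underline{\rho}_r$, and the plan is to trade the finite-state compressors for block entropies. By Lemma~\ref{huffc} and its $\Sigma\times\Sigma$ analogue, $\rho_t(S\harp n)$ is at most the $t'$-block entropy rate of $S\harp n$ up to $O(1/t')$ and $\rho_r(S\harp n,T\harp n)$ is at most the $r'$-block entropy rate of $(S\harp n,T\harp n)$ up to $O(1/r')$; by Lemmas~\ref{hc} and~\ref{hc2}, each of these ratios is, for \emph{every} block length $\ell$, at least the corresponding $\ell$-block entropy rate minus $\lfloor n/\ell\rfloor^{-1}+f_t^k(\ell)$, respectively $\lfloor n/\ell\rfloor^{-1}+f_r^{k^2}(\ell)$. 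Substituting these bounds into $\rho_{r,t}(S\harp n:T\harp n)=\rho_t(S\harp n)+\rho_t(T\harp n)-\rho_r(S\harp n,T\harp n)$, comparing joint and marginal block entropies by Corollary~\ref{Shac}, and taking $\liminf_n$ (which kills the $\lfloor n/\ell\rfloor^{-1}$ terms), we find that $\underline{\rho}_r$ and $\overline{\rho}_t$ are each squeezed — as $r\to\infty$ and $t\to\infty$ respectively — between $\liminf_n$ of the $\ell$-block mutual information rates $\frac1{\ell\log k}\bigl(\ent(\pi^{(\ell)}_{S^n_\ell})+\ent(\pi^{(\ell)}_{T^n_\ell})-\ent(\pi^{(\ell)}_{S^n_\ell,T^n_\ell})\bigr)$ with $\ell\to\infty$, provided that whenever the inner parameter is sent to infinity the block length $\ell$ is taken to grow faster than its logarithm, so that the relevant $f$-term vanishes. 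A subadditivity argument for block entropies (iterating Corollary~\ref{Shac}(1) over concatenated blocks, as in~\cite{jBoHiVi05}) then shows that the lower $\ell$-block mutual information rate converges as $\ell\to\infty$; both $\sup_r\underline{\rho}_r$ and $\inf_t\overline{\rho}_t$ equal this limit, proving the theorem.

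The step I expect to be hardest is reconciling the two block lengths. The individual ratios $\rho_t(\cdot)$ are naturally controlled at scale $\ell\asymp\log_k t$ and the joint ratio $\rho_r(\cdot,\cdot)$ at scale $\ell\asymp\log_k r$, and, worse, each of these scales must actually be inflated — to roughly a power of the logarithm of the state count — before the error terms $f_t^k$ and $f_r^{k^2}$ become negligible, so the two scales decouple and the two iterated limits are not visibly the same. Closing the gap forces $r$ and $t$ to infinity at very different rates and, crucially, requires interchanging the block-length limit with $\liminf_n$; this is exactly where the existence of the $\ell$-block mutual information rate limit does the real work. The approximate symmetry $\rho_{r,t}(S:T)\le\rho_{t,r}(S:T)+e_r^k(t')$ of Lemma~\ref{mcrs} can be used to streamline parts of this bookkeeping, but it does not by itself bridge the gap.
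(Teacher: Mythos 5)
Your reduction of the easy direction to the max--min inequality is correct and in fact cleaner than the paper's: since $\rho_{r,t}(S:T)$ is nondecreasing in $r$ and nonincreasing in $t$, $mdim_{FS}(S:T)=\sup_r\inf_t\rho_{r,t}(S:T)\le\inf_t\sup_r\rho_{r,t}(S:T)$ needs no lemma at all, whereas the paper routes even this through Lemma \ref{mcrs}(8).

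The hard direction, however, has a genuine gap at exactly the step you flag as hardest. Your plan rests on the claim that the lower $\ell$-block mutual information rate converges as $\ell\to\infty$, to be proved by ``iterating Corollary \ref{Shac}(1) over concatenated blocks.'' Subadditivity cannot deliver this: $\mi(\pi^{(\ell)}_{S,n};\pi^{(\ell)}_{T,n})=\ent(\pi^{(\ell)}_{S,n})+\ent(\pi^{(\ell)}_{T,n})-\ent(\pi^{(\ell)}_{S,T,n})$ is a sum of two (approximately) subadditive terms and one superadditive term, so the combination is neither, and no Fekete-type argument applies. In the paper the existence of $\lim_{\ell}\mi_{\ell}(S;T)$ is Theorem \ref{char}, and its proof \emph{uses} Theorem \ref{interchange} (it needs $\lim_{r}\rho_{r,c}(S:T)$ to approach $mdim_{FS}(S:T)$, which is precisely the interchange); so importing that convergence here is circular. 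A second, related problem is that even granting the convergence, your squeeze for $\inf_t\sup_r$ requires a lower bound on $\rho_r(S\harp n,T\harp n)$ that is uniform in $r$ at the fixed block scale $t'$, and Lemma \ref{hc2} only gives $\ent(\pi^{(\ell)}_{\cdot})/(\ell\log k)-\lfloor n/\ell\rfloor^{-1}-f_r^{k^2}(\ell)$ with $f_r^{k^2}(\ell)\sim 2\log r/(\ell\log k^2)$ unbounded in $r$; letting $\ell$ grow with $r$ changes the block scale away from $t'$, and subadditivity only controls the resulting mismatch in one direction. The paper avoids both issues by never taking a supremum over $r$ of an $r$-dependent error: it fixes a single pair of indices, first $c$ and then $d$ chosen large \emph{after} $c$, and applies the approximate symmetry $\rho_{c,d}(S:T)\le\rho_{d,c}(S:T)+e_c^k(d')$ of Lemma \ref{mcrs}(8) to that one pair, with $e_c^k(d')\le\epsilon_3$ guaranteed by the order of quantification. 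That lemma --- which you set aside as insufficient --- is the engine of the paper's argument; your proposal needs either to adopt it in that quantifier order or to supply an independent proof that $\mi_{\ell}(S;T)$ converges, and neither is currently in place.
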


\begin{proof}
Observe that the iterated limits
\[
\displaystyle\lim_{t \rightarrow \infty}\lim_{r \rightarrow \infty}\rho_{r,t}(S:T) \text{ and } \displaystyle\lim_{t \rightarrow \infty}\lim_{r \rightarrow \infty}\hat{\rho}_{r,t}(S:T)
\]
exist for the same reason the iterated limits exist in the original definitions. First, we show that
\[
mdim_{FS}(S:T) \leq \displaystyle\lim_{t \rightarrow \infty}\lim_{r \rightarrow \infty}\rho_{r,t}(S:T).
\]
By the eighth statement of Lemma \ref{mcrs},
\begin{align*}
mdim_{FS}(S:T) &= \displaystyle\lim_{r \rightarrow \infty}\lim_{t \rightarrow \infty}\rho_{r,t}(S:T)\\
							 &\leq \displaystyle\lim_{r \rightarrow \infty}\lim_{t \rightarrow \infty}\bigg[\rho_{t,r}(S:T) + e_r^k(t')\bigg ]\\
							 &=\displaystyle\lim_{r \rightarrow \infty}\lim_{t \rightarrow \infty}\rho_{t,r}(S:T)\\
							 &= \displaystyle\lim_{t \rightarrow \infty}\lim_{r \rightarrow \infty}\rho_{r,t}(S:T).
\end{align*}
Next, we show that
\[
mdim_{FS}(S:T) \geq \displaystyle\lim_{t \rightarrow \infty}\lim_{r \rightarrow \infty}\rho_{r,t}(S:T).
\]
Let $\epsilon, \epsilon_1, \epsilon_2, \epsilon_3 > 0$ such that $\epsilon = \epsilon_1 + \epsilon_2 + \epsilon_3$ and let $c \in \mathbb{Z}^+$ be large enough so that
\[
0 \leq \displaystyle\lim_{r \rightarrow \infty}\rho_{r,c}(S:T) - \displaystyle\lim_{r \rightarrow \infty}\lim_{t \rightarrow \infty}\rho_{r,t}(S:T) \leq \epsilon_1
\]
and
\[
0 \leq \lim_{t \rightarrow \infty}\lim_{r \rightarrow \infty}\rho_{r,t}(S:T) - \lim_{t \rightarrow \infty}\rho_{c,t}(S:T) \leq \epsilon_2.
\]
Now, we let $d \in \mathbb{Z}^+$ be large enough so that
\begin{align}\label{0ep1}
0 \leq \rho_{d,c}(S:T) - \displaystyle\lim_{r \rightarrow \infty}\lim_{t \rightarrow \infty}\rho_{r,t}(S:T) \leq \epsilon_1,
\end{align}
\begin{align}\label{0ep2}
0 \leq \lim_{t \rightarrow \infty}\lim_{r \rightarrow \infty}\rho_{r,t}(S:T) - \rho_{c,d}(S:T) \leq \epsilon_2,
\end{align}
and
\begin{align}\label{o3}
e_c^k(d') \leq \epsilon_3,
\end{align}
where $e_c^k(d')$ comes from the eighth statement of Lemma \ref{mcrs}. By (\ref{0ep1}), (\ref{0ep2}), (\ref{o3}), and Lemma \ref{mcrs}, we have
\begin{align*}
mdim_{FS}(S:T) - \displaystyle\lim_{t \rightarrow \infty}\lim_{r \rightarrow \infty}\rho_{r,t}(S:T) &= \displaystyle\lim_{r \rightarrow \infty}\lim_{t \rightarrow \infty}\rho_{r,t}(S:T) - \displaystyle\lim_{t \rightarrow \infty}\lim_{r \rightarrow \infty}\rho_{r,t}(S:T)\\
								 &\geq \rho_{d,c}(S:T) - \rho_{c,d}(S:T) - \epsilon_1 - \epsilon_2\\
								 &\geq -e_c^k(d') - \epsilon_1 - \epsilon_2\\
								 &\geq -\epsilon_1 - \epsilon_2 - \epsilon_3\\
								 &= -\epsilon.
\end{align*}
Since the $\epsilon$ is arbitrary, we have
\[
mdim_{FS}(S:T) \geq \displaystyle\lim_{t \rightarrow \infty}\lim_{r \rightarrow \infty}\rho_{r,t}(S:T).
\]
As identical argument can be made to prove the equivalence of the iterated limits for $Mdim_{FS}(S:T)$. \qedhere
\end{proof}

The final theorem of this section describes the basic properties of finite-state mutual dimension.

\begin{thm}[Properties of Finite-State Mutual Dimensions]\label{md}
For all $S,T \in \Sigma^\infty$,
{\footnotesize
\begin{enumerate}
\item $dim_{FS}(S) {+} dim_{FS}(T) {-} Dim_{FS}(S,T) {\leq} mdim_{FS}(S:T) {\leq} Dim_{FS}(S) {+} Dim_{FS}(T) {-} Dim_{FS}(S,T)$,
\item $dim_{FS}(S) {+} dim_{FS}(T) {-} dim_{FS}(S,T) {\leq} Mdim_{FS}(S:T) {\leq} Dim_{FS}(S) {+} Dim_{FS}(T) {-} dim_{FS}(S,T)$,
\item $mdim_{FS}(S:T) \leq \min\{dim_{FS}(S), dim_{FS}(T)\}$, $Mdim_{FS}(S:T) \leq \min\{Dim_{FS}(S),Dim_{FS}(T)\}$,
\item $0 \leq mdim_{FS}(S:T) \leq Mdim_{FS}(S:T) \leq 1$,
\item $mdim_{FS}(S:S) = dim_{FS}(S)$, $Mdim_{FS}(S:S) = Dim_{FS}(S)$, and
\item $mdim_{FS}(S:T) = mdim_{FS}(T:S)$, $Mdim_{FS}(S:T) = Mdim_{FS}(T:S)$.
\end{enumerate}
}
\end{thm}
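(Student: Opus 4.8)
The plan is to obtain each of the six items by passing to the iterated limit $\lim_{r\to\infty}\lim_{t\to\infty}$ (or, where convenient, $\lim_{t\to\infty}\lim_{r\to\infty}$ via Theorem~\ref{interchange}) in the corresponding statement of Lemma~\ref{mcrs}, and then rewriting the limiting $r$- and $t$-state compression ratios as finite-state dimensions using Theorem~\ref{dimtorho} and Corollary~\ref{jdimtorho}. I will use repeatedly that $t'=\lfloor\log_k t\rfloor\to\infty$ as $t\to\infty$ and $r'=\lfloor\log_k r\rfloor\to\infty$ as $r\to\infty$, so that an error term of the form $g_r^k(t')$ vanishes under $\lim_{t\to\infty}$ and one of the form $h_t^k(r')$ vanishes under $\lim_{r\to\infty}$; that $\lim_{t\to\infty}\rho_t(S)=\rho(S)=dim_{FS}(S)$ and $\lim_{t\to\infty}\hat\rho_t(S)=\hat\rho(S)=Dim_{FS}(S)$; and that $\lim_{r\to\infty}\rho_r(S,T)=\rho(S,T)=dim_{FS}(S,T)$ and $\lim_{r\to\infty}\hat\rho_r(S,T)=\hat\rho(S,T)=Dim_{FS}(S,T)$.

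Items~1 and~2 are the most direct. For item~1, I would start from the first statement of Lemma~\ref{mcrs},
\[
\rho_t(S)+\rho_t(T)-\hat\rho_r(S,T)\;\le\;\rho_{r,t}(S:T)\;\le\;\hat\rho_t(S)+\hat\rho_t(T)-\hat\rho_r(S,T),
\]
hold $r$ fixed and apply $\lim_{t\to\infty}$ (the two outer expressions converge, the middle limit exists by the remarks preceding Theorem~\ref{interchange}, and $\hat\rho_r(S,T)$ is constant in $t$), then apply $\lim_{r\to\infty}$, and finally translate via Theorem~\ref{dimtorho} and Corollary~\ref{jdimtorho}. Item~2 is the same argument applied to the second statement of Lemma~\ref{mcrs}. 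For item~3, I would take $\lim_{t\to\infty}$ in $\rho_{r,t}(S:T)\le\min\{\rho_t(S),\rho_t(T)\}+g_r^k(t')$; continuity of $\min$ together with $g_r^k(t')\to0$ gives $\lim_{t\to\infty}\rho_{r,t}(S:T)\le\min\{\rho(S),\rho(T)\}$, and then $\lim_{r\to\infty}$ yields $mdim_{FS}(S:T)\le\min\{dim_{FS}(S),dim_{FS}(T)\}$; the strong-dimension version is identical with hats throughout.

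The remaining items need slightly more care, because the relevant error terms carry the index that is \emph{not} driven to infinity first. For item~4, the inequality $mdim_{FS}(S:T)\le Mdim_{FS}(S:T)$ holds because $\rho_{r,t}(S:T)\le\hat\rho_{r,t}(S:T)$ for every $r,t$ and the iterated limits preserve $\le$; and $Mdim_{FS}(S:T)\le 1$ follows from item~3 together with $Dim_{FS}(T)\le 1$ (standard, and also a consequence of Lemma~\ref{huffc}). For $mdim_{FS}(S:T)\ge 0$, the fourth statement of Lemma~\ref{mcrs} gives $\rho_{r,t}(S:T)\ge -h_t^k(r')$; since $h_t^k(r')$ vanishes only as $r\to\infty$, I would fix $t$, take $\lim_{r\to\infty}$ to get $\lim_{r\to\infty}\rho_{r,t}(S:T)\ge 0$, and then take $\lim_{t\to\infty}$, using Theorem~\ref{interchange} to identify $\lim_{t\to\infty}\lim_{r\to\infty}\rho_{r,t}(S:T)$ with $mdim_{FS}(S:T)$. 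For item~5, I would combine the fifth and sixth statements of Lemma~\ref{mcrs}: from $\rho_{r,t}(S:S)\le\rho_t(S)+j_r^k(t')$ take $\lim_{t\to\infty}$ then $\lim_{r\to\infty}$ for $mdim_{FS}(S:S)\le dim_{FS}(S)$; from $\rho_{r,t}(S:S)\ge\rho_t(S)-i_t^k(r')$ take $\lim_{r\to\infty}$ (killing $i_t^k(r')$) then $\lim_{t\to\infty}$, again via Theorem~\ref{interchange}, for $mdim_{FS}(S:S)\ge dim_{FS}(S)$; the statement for $Mdim_{FS}$ and $Dim_{FS}$ is the same with hats. Item~6 is immediate: the seventh statement of Lemma~\ref{mcrs} gives $\rho_{r,t}(S:T)=\rho_{r,t}(T:S)$ and $\hat\rho_{r,t}(S:T)=\hat\rho_{r,t}(T:S)$ for all $r,t$, so the iterated limits coincide.

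The main obstacle I anticipate is the bookkeeping of error terms rather than anything conceptual: in the lower-bound halves of items~4 and~5, the vanishing quantities are $h_t^k(r')$ and $i_t^k(r')$, whose subscripts are the ``wrong'' parameter for the naive order $\lim_r\lim_t$, so one is forced to send $r\to\infty$ first and then invoke Theorem~\ref{interchange} to recover $mdim_{FS}$ (resp.\ $Mdim_{FS}$). Apart from this, everything reduces to the $\liminf$/$\limsup$ arithmetic already carried out in the proof of Lemma~\ref{mcrs}.
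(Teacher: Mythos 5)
Your proposal is correct and follows essentially the same route as the paper: each item is obtained by passing to the iterated limits in the corresponding statement of Lemma~\ref{mcrs}, and you correctly identify the one subtlety the paper also exploits, namely that the lower bounds in items~4 and~5 require taking $r\to\infty$ first (to kill $h_t^k(r')$ and $i_t^k(r')$) and then invoking Theorem~\ref{interchange}. No gaps.
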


\begin{proof}
To prove the first statement, observe that, by the first statement of Lemma \ref{mcrs},
\begin{align*}
mdim_{FS}(S:T) &= \displaystyle\lim_{r \rightarrow \infty}\lim_{t \rightarrow \infty}\rho_{r,t}(S:T)\\
							 &\leq \displaystyle\lim_{r \rightarrow \infty}\lim_{t \rightarrow \infty}[\hat{\rho}_t(S) + \hat{\rho}_t(T) - \hat{\rho}_r(S,T)]\\
							 &= \lim_{t \rightarrow \infty}\hat{\rho}_t(S) + \lim_{t \rightarrow \infty}\hat{\rho}_t(T) - \displaystyle\lim_{r \rightarrow \infty}\hat{\rho}_r(S,T)\\
							 &= Dim_{FS}(S) + Dim_{FS}(T) - Dim_{FS}(S,T).
\end{align*}
Likewise,
\begin{align*}
mdim_{FS}(S:T) &= \displaystyle\lim_{r \rightarrow \infty}\lim_{t \rightarrow \infty}\rho_{r,t}(S:T)\\
							 &\geq \displaystyle\lim_{r \rightarrow \infty}\lim_{t \rightarrow \infty}[\rho_t(S) + \rho_t(T) - \hat{\rho}_r(S,T)]\\
							 &= \lim_{t \rightarrow \infty}\rho_t(S) + \lim_{t \rightarrow \infty}\rho_t(T) - \displaystyle\lim_{r \rightarrow \infty}\hat{\rho}_r(S,T)\\
							 &= dim_{FS}(S) + dim_{FS}(T) - Dim_{FS}(S,T).
\end{align*}
The proof of the second statement is similar to the proof of the first statement. To prove the third statement, we observe that, by the third statement of Lemma \ref{mcrs},
\begin{align*}
mdim_{FS}(S:T) &= \displaystyle\lim_{r \rightarrow \infty}\lim_{t \rightarrow \infty}\rho_{r,t}(S:T)\\
							 &\leq \lim_{r \rightarrow \infty}\lim_{t \rightarrow \infty}\bigg[\rho_t(S) + g_r^k(t')\bigg]\\
							 &= dim_{FS}(S).
\end{align*}
By a similar argument, we can prove that $mdim_{FS}(S:T) \leq dim_{FS}(T)$, and thus $mdim_{FS}(S:T) \leq \min\{dim_{FS}(S),\min\{dim_{FS}(T)\}$. Using a similar argument, we can also prove that $Mdim_{FS}(S:T) \leq \min\{Dim_{FS}(S),Dim_{FS}(T)\}$. To prove the fourth statement, observe that, by the fourth statement of Lemma \ref{mcrs} and Theorem \ref{interchange}, 
\begin{align*}
mdim_{FS}(S:T) &= \displaystyle\lim_{t \rightarrow \infty}\lim_{r \rightarrow \infty}\rho_{r,t}(S:T)\\
							 &\geq \displaystyle\lim_{t \rightarrow \infty}\lim_{r \rightarrow \infty}-h_t^k(r')\\
							 &= 0.
\end{align*}
Now, since the upper finite-state dimension of a sequence is no larger than one, we have
\begin{align*}
mdim_{FS}(S:T) &\leq Mdim_{FS}(S:T)\\
							 &\leq \min\{Dim_{FS}(S),Dim_{FS}(T)\}\\
							 &\leq 1.
\end{align*}
To prove the fifth statement, observe that, by the fifth statement of Lemma \ref{mcrs} and Theorem \ref{interchange},
\begin{align*}
mdim_{FS}(S:S) &= \displaystyle\lim_{t \rightarrow \infty}\lim_{r \rightarrow \infty}\rho_{r,t}(S:S)\\
							 &\geq \displaystyle\lim_{t \rightarrow \infty}\lim_{r \rightarrow \infty}\bigg[\rho_t(S) - i_t^k(r')\bigg]\\
							 &= \displaystyle\lim_{t \rightarrow \infty}\rho_t(S)\\
							 &= dim_{FS}(S).
\end{align*}
Also, by the sixth statement of Lemma \ref{mcrs},
\begin{align*}
mdim_{FS}(S:S) &= \displaystyle\lim_{r \rightarrow \infty}\lim_{t \rightarrow \infty}\rho_{r,t}(S:S)\\
							 &\leq \displaystyle\lim_{r \rightarrow \infty}\lim_{t \rightarrow \infty}\bigg[\rho_t(S) + j_r^k(t')\bigg]\\
							 &= \displaystyle\lim_{t \rightarrow \infty}\rho_t(S)\\
							 &= dim_{FS}(S).
\end{align*}
Using a similar argument, we can prove that $Mdim_{FS}(S:S) = Dim_{FS}(S)$. Finally, to prove the sixth statement, observe that, by the seventh statement of Lemma \ref{mcrs},
\begin{align*}
mdim_{FS}(S:T) &= \displaystyle\lim_{r \rightarrow \infty}\lim_{t \rightarrow \infty}\rho_{r,t}(S:T)\\
							 &= \displaystyle\lim_{r \rightarrow \infty}\lim_{t \rightarrow \infty}\rho_{r,t}(T:S)\\
							 &= mdim_{FS}(T:S).
\end{align*}
By a similar argument, we can show that $Mdim_{FS}(S:T) = Mdim_{FS}(T:S)$. \qedhere
\end{proof}

%%%%%%%%%%%%%%%%%%%%%%%%%%%%%%%%%%%%%%%%%%%%%%%%%%%%%%%%%%%%

\section{Block Mutual Information Rates}
In this section, we introduce the notion of \emph{block mutual information rates} between sequences and prove that the lower and upper finite-state mutual dimensions can be characterized in terms of block mutual information rates.

Originally, Ziv and Lempel proved that the upper finite-state compression ratio of a sequence may be characterized in terms of the \emph{entropy rates} of \emph{non-aligned} block frequencies \cite{jZivLem78} within the sequence. Sheinwald proved a similar characterization of the upper compression ratio using the \emph{entropy rates} of \emph{aligned} block frequencies \cite{jShei94}. Later, Bourke, Hitchcock, and Vindochandran proved a characterization of the lower and upper finite-state dimensions of sequences \cite{jBoHiVi05} in terms of (aligned) block entropy rates. Kozachinskiy and Shen recently proved that the lower finite-state dimension can also be characterized using the entropy rates of non-aligned block frequencies \cite{cKozShe2019}.

We begin by discussing Shannon $\emph{mutual information}$.

\begin{defn}
Let $\alpha$ be a discrete probability measure on $\mathcal{X} \times \mathcal{X}$. The \emph{Shannon mutual information} between $\alpha_1$ and $\alpha_2$ is
\[
\mi(\alpha_1;\alpha_2) = \ent(\alpha_1) + \ent(\alpha_2) - \ent(\alpha).
\]
\end{defn}

By the properties of Shannon entropy found in Theorem \ref{Sha}, we have the following properties regarding mutual information.

\begin{thm}\label{m}
Let $\alpha$ be a probability measure on $\mathcal{X} \times \mathcal{X}$.
\begin{enumerate}
\item $\emph{\mi}(\alpha_1;\alpha_2) \geq 0$.
\item $\emph{\mi}(\alpha_1;\alpha_2) \leq \min\{\emph{\ent}(\alpha_1),\emph{\ent}(\alpha_2)\}$.
\item If $\sum_{a \in \Sigma}\alpha(a,a) = 1$, then $\emph{\mi}(\alpha_1;\alpha_2) = \emph{\ent}(\alpha_1) = \emph{\ent}(\alpha_2) = \emph{\ent}(\alpha)$.
\item $\emph{\mi}(\alpha_1;\alpha_2) = \emph{\mi}(\alpha_2;\alpha_1)$.
\end{enumerate}
\end{thm}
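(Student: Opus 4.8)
The plan is to derive all four statements directly from the corresponding facts about Shannon entropy recorded in Theorem \ref{Sha}, using nothing beyond the defining identity $\mi(\alpha_1;\alpha_2) = \ent(\alpha_1) + \ent(\alpha_2) - \ent(\alpha)$. No new estimates are needed; each item is a one-line algebraic rearrangement of an inequality or equality already available.

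For statement 1, I would rewrite the desired inequality $\mi(\alpha_1;\alpha_2) \geq 0$ as $\ent(\alpha) \leq \ent(\alpha_1) + \ent(\alpha_2)$, which is exactly Theorem \ref{Sha}(1). For statement 2, note that $\mi(\alpha_1;\alpha_2) \leq \ent(\alpha_1)$ is equivalent to $\ent(\alpha_2) \leq \ent(\alpha)$, and $\mi(\alpha_1;\alpha_2) \leq \ent(\alpha_2)$ is equivalent to $\ent(\alpha_1) \leq \ent(\alpha)$; both of these are contained in the bound $\max\{\ent(\alpha_1),\ent(\alpha_2)\} \leq \ent(\alpha)$ of Theorem \ref{Sha}(2), so taking the minimum gives the claim.

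For statement 3, I would invoke Theorem \ref{Sha}(3) under the hypothesis $\sum_{a \in \Sigma}\alpha(a,a) = 1$ to conclude $\ent(\alpha) = \ent(\alpha_1) = \ent(\alpha_2)$, and then substitute into the definition: $\mi(\alpha_1;\alpha_2) = \ent(\alpha_1) + \ent(\alpha_2) - \ent(\alpha) = \ent(\alpha_1)$, and symmetrically this common value also equals $\ent(\alpha_2)$ and $\ent(\alpha)$. For statement 4, I would observe that $\mi(\alpha_2;\alpha_1)$ is the quantity obtained from the joint measure $\alpha'$ on $\mathcal{X}\times\mathcal{X}$ defined by $\alpha'(x,y) = \alpha(y,x)$, whose first and second marginals are $\alpha_2$ and $\alpha_1$ and whose Shannon entropy satisfies $\ent(\alpha') = \ent(\alpha)$ since entropy depends only on the multiset of probability values; hence $\mi(\alpha_2;\alpha_1) = \ent(\alpha_2) + \ent(\alpha_1) - \ent(\alpha') = \ent(\alpha_1) + \ent(\alpha_2) - \ent(\alpha) = \mi(\alpha_1;\alpha_2)$ by commutativity of addition.

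There is essentially no obstacle in this proof; the only points deserving a word of care are, in statement 3, checking all three equalities rather than just one, and in statement 4, making explicit that the interpretation of $\mi(\alpha_2;\alpha_1)$ requires the relabeling invariance $\ent(\alpha') = \ent(\alpha)$ (equivalently, that the defining formula is already symmetric in the two marginals).
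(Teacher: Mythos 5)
Your proposal is correct and matches the paper's intent exactly: the paper states that Theorem \ref{m} follows from the Shannon entropy properties in Theorem \ref{Sha}, and your argument is precisely that derivation spelled out, including the (worthwhile) remarks on the symmetry in item 4 and on verifying all three equalities in item 3.
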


Let $\ell,n \in \mathbb{Z}^+$ such that $n$ is a multiple of $\ell$ and let $u,w \in \Sigma^n$. By applying Theorem \ref{m} to $\pi_{u,w}^{(\ell)}$, where $\alpha = \pi_{u,w}^{(\ell)}$, $\alpha_1 = \pi_u^{(\ell)}$, and $\alpha_2 = \pi_w^{(\ell)}$, we obtain the following corollary.

\begin{cor}\label{pmi}
For every $n,\ell \in \mathbb{Z}^+$ such that $n$ is a multiple of $\ell$ and all $u \in \Sigma^n$ and $w \in \Sigma^n$,
\begin{enumerate}
\item $\emph{\mi}(\pi_u^{(\ell)};\pi_w^{(\ell)}) \geq 0$,
\item $\emph{\mi}(\pi_u^{(\ell)};\pi_w^{(\ell)}) \leq \min\{\emph{\ent}(\pi_u^{(\ell)}),\emph{\ent}(\pi_w^{(\ell)})\}$,
\item $\emph{\mi}(\pi_u^{(\ell)};\pi_u^{(\ell)}) = \emph{\ent}(\pi_u^{(\ell)})$, and
\item $\emph{\mi}(\pi_u^{(\ell)};\pi_w^{(\ell)}) = \emph{\mi}(\pi_w^{(\ell)};\pi_u^{(\ell)})$.
\end{enumerate}
\end{cor}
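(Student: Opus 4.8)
The plan is to derive the corollary as a direct instantiation of Theorem \ref{m}. First I would invoke the two structural facts already established above: that $\pi_{u,w}^{(\ell)}$ is a discrete probability measure on $\Sigma^{\ell} \times \Sigma^{\ell}$, and that its first and second marginal probability measures are precisely $\pi_u^{(\ell)}$ and $\pi_w^{(\ell)}$. Taking $\mathcal{X} = \Sigma^{\ell}$, $\alpha = \pi_{u,w}^{(\ell)}$, $\alpha_1 = \pi_u^{(\ell)}$, and $\alpha_2 = \pi_w^{(\ell)}$, parts 1, 2, and 4 of Theorem \ref{m} translate verbatim into parts 1, 2, and 4 of the corollary, since $\mi(\pi_u^{(\ell)};\pi_w^{(\ell)})$ is by definition $\ent(\pi_u^{(\ell)}) + \ent(\pi_w^{(\ell)}) - \ent(\pi_{u,w}^{(\ell)})$ and $\ent$ is already known (Corollary \ref{Shac}) to behave as required on these measures.

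For part 3 the only extra point is to check that the diagonal hypothesis of Theorem \ref{m}(3) is met when $w = u$, i.e.\ that $\sum_{x \in \Sigma^{\ell}} \pi_{u,u}^{(\ell)}(x,x) = 1$. This is immediate from the definition of block occurrences: a block occurrence of $(x,y)$ in $(u,u)$ at position $m$ forces the $m$-th aligned $\ell$-block of $u$ to equal both $x$ and $y$, so $\#_\Box((x,y),(u,u)) = 0$ whenever $x \neq y$ and $\#_\Box((x,x),(u,u)) = \#_\Box(x,u)$; hence $\pi_{u,u}^{(\ell)}$ is supported on the diagonal with total mass $\sum_{x \in \Sigma^{\ell}} \pi_u^{(\ell)}(x) = 1$. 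Applying Theorem \ref{m}(3) with $\Sigma$ replaced by $\Sigma^{\ell}$ then yields $\mi(\pi_u^{(\ell)};\pi_u^{(\ell)}) = \ent(\pi_u^{(\ell)})$, which is part 3.

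I do not anticipate a genuine obstacle here: the entire mathematical content is carried by Theorem \ref{m}, and the sole place requiring a line of justification is the support-on-the-diagonal observation for part 3, which exactly mirrors the reasoning already used to obtain Corollary \ref{Shac}(3).
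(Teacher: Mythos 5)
Your proposal is correct and is exactly the paper's argument: the corollary is obtained by applying Theorem \ref{m} with $\alpha = \pi_{u,w}^{(\ell)}$, $\alpha_1 = \pi_u^{(\ell)}$, and $\alpha_2 = \pi_w^{(\ell)}$, the marginals having already been identified in Section 2. Your explicit verification that $\pi_{u,u}^{(\ell)}$ is supported on the diagonal is the one step the paper leaves implicit (as it also did for Corollary \ref{Shac}(3)), and your justification of it is right.
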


We now proceed to prove several lemmas which provide bounds on the difference of the normalized mutual information between the block frequencies of two strings and the mutual compression ratio between the same two strings. These lemmas will be needed to prove the main theorem of this section.

\begin{lem}\label{mitomc}
For all $r,t \in \mathbb{Z}^+$ and $u,w \in \Sigma^n$ such that $n \geq r'$,
\[
\frac{\emph{\mi}(\pi_{u_{r'}}^{(r')};\pi_{w_{r'}}^{(r')})}{r' \log k} - \rho_{r,t}(u:w) \leq 2\Big\lfloor \frac{n}{r'} \Big\rfloor^{-1} + q_t^k(r'),
\]
where $r' = \lfloor \log_k r \rfloor$, $u_{r'} = u \harp \Big \lfloor \frac{n}{r'}  \Big \rfloor \cdot r'$, $w_{r'} = w \harp \Big \lfloor \frac{n}{r'}  \Big \rfloor \cdot r'$, and $\displaystyle\lim_{m \rightarrow \infty}q_t^k(m) = 0$.
\end{lem}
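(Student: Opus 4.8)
The plan is to expand both terms on the left-hand side using the definitions of Shannon mutual information and of the $r,t$-state mutual compression ratio, and then estimate the resulting pieces separately. Write $v = u_{r'}$ and $z = w_{r'}$, so that $(u,w)_{r'} = (v,z)$ and the joint block-frequency measure $\pi_{v,z}^{(r')} = \pi_{(v,z)}^{(r')}$ on $\Sigma^{r'}\times\Sigma^{r'}$ has first and second marginals $\pi_v^{(r')}$ and $\pi_z^{(r')}$. Unfolding $\mi(\pi_v^{(r')};\pi_z^{(r')}) = \ent(\pi_v^{(r')}) + \ent(\pi_z^{(r')}) - \ent(\pi_{v,z}^{(r')})$ and $\rho_{r,t}(u:w) = \rho_t(u) + \rho_t(w) - \rho_r(u,w)$, the quantity to be bounded becomes
\[
\left[\frac{\ent(\pi_v^{(r')})}{r'\log k} - \rho_t(u)\right] + \left[\frac{\ent(\pi_z^{(r')})}{r'\log k} - \rho_t(w)\right] + \left[\rho_r(u,w) - \frac{\ent(\pi_{v,z}^{(r')})}{r'\log k}\right],
\]
so it suffices to bound each of the three bracketed expressions by $\lfloor n/r'\rfloor^{-1}$ plus a quantity vanishing as $r'\to\infty$.

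For the first two brackets I would apply Lemma \ref{hc}. By the definition of $\rho_t$ there is an ILFSC $C_u$ on $\Sigma$ with exactly $t$ states attaining $\rho_{C_u}(u) = \rho_t(u)$; Lemma \ref{hc} with $s = t$ and $\ell = r'$ (legitimate since $n \ge r'$) then gives $\frac{\ent(\pi_v^{(r')})}{r'\log k} - \rho_t(u) \le \lfloor n/r'\rfloor^{-1} + f_t^k(r')$, and the same argument applied to $w$ bounds the second bracket by $\lfloor n/r'\rfloor^{-1} + f_t^k(r')$.

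For the third bracket I would argue exactly as in the proof of Lemma \ref{1to2}: since $\rho_r(u,w) = 2\rho_r((u,w))$ and $(u,w)$ is a string over $\Sigma\times\Sigma$, Lemma \ref{huffc} gives $\rho_r((u,w)) \le \frac{\ent(\pi_{(v,z)}^{(r')})}{r'\log k^2} + \frac{1}{r'}$, hence $\rho_r(u,w) - \frac{\ent(\pi_{v,z}^{(r')})}{r'\log k} \le \frac{2}{r'}$. Adding the three estimates yields the lemma with $q_t^k(r') = 2f_t^k(r') + \frac{2}{r'}$, and $\lim_{m\to\infty}q_t^k(m) = 0$ because $\lim_{m\to\infty}f_t^k(m) = 0$.

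I do not expect a substantive obstacle; the main difficulty is just assembling the earlier lemmas in the right order. The points needing a little care are bookkeeping ones: keeping the $\log k$ versus $\log k^2$ normalizations straight when converting between $\rho_r(u,w)$, $\rho_r((u,w))$ and $\ent(\pi_{v,z}^{(r')}) = \ent(\pi_{(v,z)}^{(r')})$; using that the marginals of $\pi_{(v,z)}^{(r')}$ are $\pi_v^{(r')}$ and $\pi_z^{(r')}$; and invoking Lemma \ref{hc} for the particular optimal $t$-state compressor of $u$ (respectively $w$) rather than for an arbitrary ILFSC.
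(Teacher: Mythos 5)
Your proof is correct and takes essentially the same route as the paper's: both expand the definitions of $\mi$ and $\rho_{r,t}$, bound the two marginal entropy terms by $\rho_t(u)$ and $\rho_t(w)$ via Lemma \ref{hc} (each contributing $\lfloor n/r'\rfloor^{-1} + f_t^k(r')$), and bound $\rho_r(u,w) = 2\rho_r((u,w))$ via Lemma \ref{huffc} so that the joint entropy term cancels, yielding the same $q_t^k(r') = 2f_t^k(r') + \frac{2}{r'}$.
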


\begin{proof}
By Lemmas \ref{hc} and \ref{huffc},
\begin{align*}
&\frac{\mi(\pi_{u_{r'}}^{(r')};\pi_{w_{r'}}^{(r')})}{r' \log k} - \rho_{r,t}(u:w)\\
&= \frac{\ent(\pi_{u_{r'}}^{(r')})}{r' \log k} + \frac{\ent(\pi_{w_{r'}}^{(r')})}{r' \log k} - \frac{\ent(\pi_{u_{r'},w_{r'}}^{(r')})}{r' \log k} - \rho_t(u) - \rho_t(w) + \rho_r(u,w)\\
&= \frac{\ent(\pi_{u_{r'}}^{(r')})}{r' \log k} + \frac{\ent(\pi_{w_{r'}}^{(r')})}{r' \log k} - \frac{\ent(\pi_{u_{r'},w_{r'}}^{(r')})}{r' \log k} - \rho_t(u) - \rho_t(w) + 2\rho_r((u,w))\\
&\leq \rho_t(u) + \rho_t(w) - \frac{\ent(\pi_{u_{r'},w_{r'}}^{(r')})}{r' \log k} - \rho_t(u) - \rho_t(w) + 2\bigg(\frac{\ent\big(\pi_{(u_{r'},w_{r'})}^{(r')}\big)}{r' \log k^2} + \frac{1}{r'}\bigg) + 2\Big\lfloor \frac{n}{r'} \Big\rfloor^{-1} + 2f_t^k(r')\\
&= \rho_t(u) + \rho_t(w) - \frac{\ent(\pi_{u_{r'},w_{r'}}^{(r')})}{r' \log k} - \rho_t(u) - \rho_t(w) + \frac{\ent(\pi_{u_{r'},w_{r'}}^{(r')})}{r' \log k} + \frac{2}{r'} + 2\Big\lfloor \frac{n}{r'} \Big\rfloor^{-1} + 2f_t^k(r')\\
&= \frac{2}{r'} + 2\Big\lfloor \frac{n}{r'} \Big\rfloor^{-1} + 2f_t^k(r')\\
&= 2\Big\lfloor \frac{n}{r'} \Big\rfloor^{-1} + q_t^k(r'),
\end{align*}
where $q_t^k(r') = \frac{2}{r'} + 2f_t^k(r')$. \qedhere
\end{proof}

\begin{lem}\label{mctomi}
For all $r,t \in \mathbb{Z}^+$ and $u,w \in \Sigma^n$ such that $n \geq r'$,
\[
\rho_{t,r}(u:w) - \frac{\emph{\mi}(\pi_{u_{r'}}^{(r')};\pi_{w_{r'}}^{(r')})}{r'\log k} \leq \Big\lfloor \frac{n}{r'} \Big\rfloor^{-1} + p_t^k(r'),
\]
where $r' = \lfloor \log_k r \rfloor$, $u_{r'} = u \harp \Big \lfloor \frac{n}{r'}  \Big \rfloor \cdot r'$, $w_{r'} = w \harp \Big \lfloor \frac{n}{r'}  \Big \rfloor \cdot r'$, and $\displaystyle\lim_{m \rightarrow \infty}p_t^k(m) = 0$.
\end{lem}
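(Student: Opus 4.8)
The plan is to run the argument of Lemma~\ref{mitomc} in the opposite direction: instead of bounding a mutual-information quantity above by a mutual-compression quantity, I bound the mutual-compression quantity $\rho_{t,r}(u:w) = \rho_r(u) + \rho_r(w) - \rho_t(u,w)$ above by $\mi(\pi_{u_{r'}}^{(r')};\pi_{w_{r'}}^{(r')})/(r'\log k)$ up to a vanishing error. The one point that makes the telescoping work is to use the \emph{same} block length $\ell = r'$ in every estimate, so that the three entropies that appear are $\ent(\pi_{u_{r'}}^{(r')})$, $\ent(\pi_{w_{r'}}^{(r')})$, and $\ent(\pi_{u_{r'},w_{r'}}^{(r')})$ --- the latter having the first two as its marginals --- and hence combine into exactly $\mi(\pi_{u_{r'}}^{(r')};\pi_{w_{r'}}^{(r')})$.

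First I would apply Lemma~\ref{huffc} to $u$ and to $w$, obtaining $\rho_r(u) \le \ent(\pi_{u_{r'}}^{(r')})/(r'\log k) + 1/r'$ and the analogous bound for $w$. Next I would obtain a matching lower bound on the joint compression ratio: applying Lemma~\ref{hc2} with block length $\ell = r'$ and $s = t$ states (valid since $n \ge r'$) yields, for \emph{every} ILFSC $C$ on $\Sigma \times \Sigma$ with $t$ states, $\rho_C(u,w) \ge \ent(\pi_{u_{r'},w_{r'}}^{(r')})/(r'\log k) - \lfloor n/r'\rfloor^{-1} - f_t^{k^2}(r')$; minimizing over such $C$ gives the same lower bound for $\rho_t(u,w)$ itself.

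Finally I would substitute the two upper bounds and the one lower bound into $\rho_{t,r}(u:w) = \rho_r(u) + \rho_r(w) - \rho_t(u,w)$. The joint entropy cancels against the two marginal entropies to leave $\mi(\pi_{u_{r'}}^{(r')};\pi_{w_{r'}}^{(r')})/(r'\log k)$, and the leftover error $1/r' + 1/r' + \lfloor n/r'\rfloor^{-1} + f_t^{k^2}(r')$ gives the claimed bound with $p_t^k(r') = 2/r' + f_t^{k^2}(r')$, which tends to $0$ as $r' \to \infty$ since $f_t^{k^2}(m) \to 0$. There is no genuine obstacle here --- the lemma is the ``dual'' of Lemma~\ref{mitomc} --- and the only thing to watch is that both Lemma~\ref{huffc} and Lemma~\ref{hc2} are invoked with block length $r'$ (not $t'$), so that the entropy bookkeeping is exact.
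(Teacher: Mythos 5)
Your proposal is correct and matches the paper's proof: both bound $\rho_r(u)$ and $\rho_r(w)$ from above via Lemma~\ref{huffc} at block length $r'$, bound $\rho_t(u,w)$ from below via Lemma~\ref{hc2} at block length $r'$, and cancel the entropies to obtain exactly $p_t^k(r') = \frac{2}{r'} + f_t^{k^2}(r')$. No differences worth noting.
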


\begin{proof}
By Lemmas \ref{hc2} and \ref{huffc},
\begin{align*}
&\rho_{t,r}(u:w) - \frac{\mi(\pi_{u_{r'}}^{(r')};\pi_{w_{r'}}^{(r')})}{r'\log k}\\
&= \rho_r(u) + \rho_r(w) - \rho_t(u,w) - \frac{\ent(\pi_{u_{r'}}^{(r')})}{r'\log k} - \frac{\ent(\pi_{w_{r'}}^{(r')})}{r'\log k} + \frac{\ent(\pi_{u_{r'},w_{r'}}^{(r')})}{r'\log k}\\
&\leq \frac{\ent(\pi_{u_{r'}}^{(r')})}{r'\log k} + \frac{\ent(\pi_{w_{r'}}^{(r')})}{r'\log k} - \frac{\ent(\pi_{u_{r'},w_{r'}}^{(r')})}{r'\log k} - \frac{\ent(\pi_{u_{r'}}^{(r')})}{r'\log k} - \frac{\ent(\pi_{w_{r'}}^{(r')})}{r'\log k} + \frac{\ent(\pi_{u_{r'},w_{r'}}^{(r')})}{r'\log k} + \frac{2}{r'} + \Big\lfloor \frac{n}{r'} \Big\rfloor^{-1} + f_t^{k^2}(r') \\
&= \frac{2}{r'} + \Big\lfloor \frac{n}{r'} \Big\rfloor^{-1} + f_t^{k^2}(r')\\
&= \Big\lfloor \frac{n}{r'} \Big\rfloor^{-1} + p_t^k(r'),
\end{align*}
where $p_t^k(r') = \frac{2}{r'} + f_t^{k^2}(r')$. \qedhere
\end{proof}

We now discuss the $\ell^{th}$ block entropy rates of sequences. For any $n,m \in \mathbb{Z}^+$, $x,y \in \Sigma^m$, and $S,T \in \Sigma^{\infty}$, we denote the $n^{th}$ \emph{block frequency} of $x$ in $S$ by the function $\pi_{S,n}: \Sigma^* \rightarrow \mathbb{Q}_{[0,1]}$, defined by
\[
\pi_{S,n}(x) = \pi_{S \harp nm}(x) = \frac{\#_\Box(x,S \harp nm)}{n},
\]
and the $n^{th}$ \emph{joint block frequency} of $x$ in $S$ and $y$ in $T$ by the function $\pi_{S,T,n}: \Sigma^* \times \Sigma^* \rightarrow \mathbb{Q}_{[0,1]}$, defined by
\[
\pi_{S,T,n}(x,y) = \pi_{S \harp nm,T \harp nm}(x) = \frac{\#_\Box((x,y),(S,T) \harp nm)}{n}.
\]
As before, for each $\ell \in \mathbb{Z}^+$, we denote the restriction of $\pi_{S,n}$ to the strings in $\Sigma^\ell$ by $\pi_{S,n}^{(\ell)}$ and the restriction of $\pi_{S,T,n}$ to the pairs of strings in $\Sigma^\ell \times \Sigma^\ell$ by $\pi_{S,T,n}^{(\ell)}$.

\begin{defn}
Let $\ell \in \mathbb{Z}^+$ and $S \in \Sigma^\infty$. The $\ell^{th}$ \emph{lower } and \emph{upper block entropy rates of} $S$ are
\[
\ent_{\ell}(S) = \frac{1}{\ell \log k}\displaystyle\liminf_{n \rightarrow \infty}\ent(\pi_{S,n}^{(\ell)})
\]
and
\[
\hat{\ent}_{\ell}(S) = \frac{1}{\ell \log k}\displaystyle\limsup_{n \rightarrow \infty}\ent(\pi_{S,n}^{(\ell)})
\]
respectively.
\end{defn}

\begin{defn}
Let $\ell \in \mathbb{Z}^+$ and $S \in \Sigma^\infty$. The $\ell^{th}$ \emph{lower} and \emph{upper joint block entropy rates of} $S$ and $T$ are
\[
\ent_{\ell}(S,T) = \frac{1}{\ell \log k}\displaystyle\liminf_{n \rightarrow \infty}\ent(\pi_{S,T,n}^{(\ell)})
\]
and
\[
\hat{\ent}_{\ell}(S,T) = \frac{1}{\ell \log k}\displaystyle\limsup_{n \rightarrow \infty}\ent(\pi_{S,T,n}^{(\ell)})
\]
respectively.
\end{defn}

We make note that the $\ell^{th}$ lower and upper block entropy rates $\ent_{\ell}((S,T))$ and $\hat{\ent}_{\ell}((S,T))$ of $(S,T) \in (\Sigma \times \Sigma)^\infty$ are normalized by $\ell \log k^2$ and the $\ell^{th}$ lower and upper joint block entropy rates $\ent_{\ell}(S,T)$ and $\hat{\ent}_{\ell}(S,T)$ of $S \in \Sigma^\infty$ and $T \in \Sigma^\infty$ are normalized by $\ell \log k$.

\begin{defn}
Let $\ell \in \mathbb{Z}^+$ and $S,T \in \Sigma^{\infty}$. The $\ell^{th}$ \emph{lower} and \emph{upper block mutual information rates between} $S$ and $T$ are
\[
\mi_{\ell}(S;T) = \frac{1}{\ell \log k}\displaystyle\liminf_{n \rightarrow \infty}\mi(\pi_{S,n}^{(\ell)};\pi_{T,n}^{(\ell)})
\]
and
\[
\hat{\mi}_{\ell}(S;T) = \frac{1}{\ell \log k}\displaystyle\limsup_{n \rightarrow \infty}\mi(\pi_{S,n}^{(\ell)};\pi_{T,n}^{(\ell)})
\]
respectively.
\end{defn}

The following theorem regarding the properties of block mutual information rates between sequences follows directly from Corollary \ref{pmi} and the definitions of $\liminf$ and $\limsup$.

\begin{lem}[Properties of $\ell^{th}$ Block Mutual Information Rates between Sequences]\label{bmur}
Let $\ell \in \mathbb{Z}^+$ and $S,T \in \Sigma^\infty$.
\begin{enumerate}
\item $\emph{\mi}_{\ell}(S;T) \geq 0$, $\hat{\emph{\mi}}_{\ell}(S;T) \geq 0$.
\item $\emph{\ent}_{\ell}(S) + \emph{\ent}_{\ell}(T) - \hat{\emph{\ent}}_{\ell}(S,T) \leq \emph{\mi}_{\ell}(S;T) \leq \hat{\emph{\ent}}_{\ell}(S) + \hat{\emph{\ent}}_{\ell}(T) - \hat{\emph{\ent}}_{\ell}(S,T)$.
\item $\emph{\ent}_{\ell}(S) + \emph{\ent}_{\ell}(T) - \emph{\ent}_{\ell}(S,T) \leq \hat{\emph{\mi}}_{\ell}(S;T) \leq \hat{\emph{\ent}}_{\ell}(S) + \hat{\emph{\ent}}_{\ell}(T) - \emph{\ent}_{\ell}(S,T)$.
\item $\emph{\mi}_{\ell}(S;T) \leq \min\{\emph{\ent}_{\ell}(S),\emph{\ent}_{\ell}(T)\}$, $\hat{\emph{\mi}}_{\ell}(S;T) \leq \min\{\hat{\emph{\ent}}_{\ell}(S),\hat{\emph{\ent}}_{\ell}(T)\}$.
\item $\emph{\mi}_{\ell}(S;S) = \emph{\ent}_{\ell}(S)$, $\hat{\emph{\mi}}_{\ell}(S;S) = \hat{\emph{\ent}}_{\ell}(S)$.
\item $\emph{\mi}_{\ell}(S;T) = \emph{\mi}_{\ell}(T;S)$, $\hat{\emph{\mi}}_{\ell}(S;T) = \hat{\emph{\mi}}_{\ell}(T;S)$.
\end{enumerate}
\end{lem}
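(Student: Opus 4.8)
The plan is to derive every clause from the pointwise (finite-$n$) facts already established in Corollary \ref{pmi}, transported to sequences by taking $\liminf$ and $\limsup$. The key preliminary observation is that, for each $n$, the marginals of the joint block-frequency measure $\pi_{S,T,n}^{(\ell)}$ on $\Sigma^\ell \times \Sigma^\ell$ are $\pi_{S,n}^{(\ell)}$ and $\pi_{T,n}^{(\ell)}$ --- this is the sequence analogue of the marginal fact recorded for $\pi_{u,w}^{(\ell)}$ in Section 2, applied with $u = S\harp n\ell$ and $w = T\harp n\ell$ --- so that by the definition of Shannon mutual information
\[
\mi(\pi_{S,n}^{(\ell)};\pi_{T,n}^{(\ell)}) = \ent(\pi_{S,n}^{(\ell)}) + \ent(\pi_{T,n}^{(\ell)}) - \ent(\pi_{S,T,n}^{(\ell)})
\]
for every $n$, and the inequalities and identities of Corollary \ref{pmi} hold verbatim with $u = S\harp n\ell$, $w = T\harp n\ell$.

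Given this, I would proceed clause by clause. For (1), nonnegativity of $\mi(\pi_{S,n}^{(\ell)};\pi_{T,n}^{(\ell)})$ for every $n$ (Corollary \ref{pmi}(1)) is inherited by both $\liminf_n$ and $\limsup_n$, and dividing by $\ell\log k$ gives $\mi_\ell(S;T)\ge 0$ and $\hat\mi_\ell(S;T)\ge 0$. For (2) and (3), I would plug the displayed identity into the definitions of $\mi_\ell$ and $\hat\mi_\ell$ and apply the standard one-sided estimates for limits of sums and differences --- namely $\liminf a_n + \liminf b_n - \limsup c_n \le \liminf(a_n+b_n-c_n) \le \limsup a_n + \limsup b_n - \limsup c_n$, together with its $\limsup$ counterpart $\liminf a_n + \liminf b_n - \liminf c_n \le \limsup(a_n+b_n-c_n) \le \limsup a_n + \limsup b_n - \liminf c_n$ --- with $a_n = \ent(\pi_{S,n}^{(\ell)})$, $b_n = \ent(\pi_{T,n}^{(\ell)})$, $c_n = \ent(\pi_{S,T,n}^{(\ell)})$; dividing by $\ell\log k$ yields exactly (2) and (3). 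For (4), Corollary \ref{pmi}(2) gives $\mi(\pi_{S,n}^{(\ell)};\pi_{T,n}^{(\ell)}) \le \ent(\pi_{S,n}^{(\ell)})$ and $\le \ent(\pi_{T,n}^{(\ell)})$ for each $n$; taking $\liminf_n$ and $\limsup_n$ and normalizing gives $\mi_\ell(S;T)\le\min\{\ent_\ell(S),\ent_\ell(T)\}$ and the hatted analogue. For (5), Corollary \ref{pmi}(3) gives the pointwise identity $\mi(\pi_{S,n}^{(\ell)};\pi_{S,n}^{(\ell)}) = \ent(\pi_{S,n}^{(\ell)})$, which passes to $\liminf_n$ and $\limsup_n$. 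For (6), Corollary \ref{pmi}(4) gives pointwise symmetry $\mi(\pi_{S,n}^{(\ell)};\pi_{T,n}^{(\ell)}) = \mi(\pi_{T,n}^{(\ell)};\pi_{S,n}^{(\ell)})$, again preserved by $\liminf_n$ and $\limsup_n$.

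There is no genuine obstacle here; the only point demanding care is the bookkeeping in (2)--(3), where one must invoke the correct direction of the $\liminf$/$\limsup$ inequalities for sums and for the negated term $-c_n$, and must avoid tacitly assuming that any of the three entropy limits exists. Everything else is immediate from Corollary \ref{pmi} and the definitions, which is why the statement follows directly.
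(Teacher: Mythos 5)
Your proposal is correct and is exactly the argument the paper intends: the paper states that Lemma \ref{bmur} ``follows directly from Corollary \ref{pmi} and the definitions of $\liminf$ and $\limsup$,'' and your clause-by-clause derivation (pointwise application of Corollary \ref{pmi} with $u = S\harp n\ell$, $w = T\harp n\ell$, followed by the one-sided $\liminf$/$\limsup$ estimates for sums and the negated joint-entropy term) simply makes that outline explicit. The bookkeeping in clauses (2)--(3) is the right direction in each inequality, and boundedness of the entropies rules out any $\infty-\infty$ issues, so nothing is missing.
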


We now make an observation that will be used to prove two lemmas that provide upper-bounds on the difference of the block mutual information and mutual compression ratio between two sequences.

\begin{obs}\label{obs3}
For any $\ell \in \mathbb{Z}^+$ and $S,T \in \Sigma^\infty$,
\[
\mi_{\ell}(S;T) = \frac{1}{\ell \log k}\displaystyle\liminf_{n \rightarrow \infty}\mi(\pi_{S,\lfloor\frac{n}{\ell}\rfloor}^{(\ell)};\pi_{T,\lfloor\frac{n}{\ell}\rfloor}^{(\ell)})
\]
and
\[
\hat{\mi}_{\ell}(S;T) = \frac{1}{\ell \log k}\displaystyle\limsup_{n \rightarrow \infty}\mi(\pi_{S,\lfloor\frac{n}{\ell}\rfloor}^{(\ell)};\pi_{T,\lfloor\frac{n}{\ell}\rfloor}^{(\ell)}).
\]
\end{obs}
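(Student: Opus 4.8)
The plan is to recognize the two right-hand sides as the $\liminf$ and $\limsup$ of a \emph{stuttered} version of the very sequence that defines $\mi_\ell(S;T)$ and $\hat{\mi}_\ell(S;T)$, and then to invoke the elementary fact that repeating each term of a real sequence a fixed finite number of times changes neither its $\liminf$ nor its $\limsup$. Concretely, write $b_m = \mi(\pi_{S,m}^{(\ell)};\pi_{T,m}^{(\ell)})$ for $m \in \mathbb{Z}^+$, so that by definition $\mi_\ell(S;T) = \frac{1}{\ell\log k}\liminf_{m\to\infty} b_m$ and $\hat{\mi}_\ell(S;T) = \frac{1}{\ell\log k}\limsup_{m\to\infty} b_m$, and for $n \ge \ell$ set $a_n = \mi(\pi_{S,\lfloor n/\ell\rfloor}^{(\ell)};\pi_{T,\lfloor n/\ell\rfloor}^{(\ell)}) = b_{\lfloor n/\ell\rfloor}$.

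First I would record the trivial observation about the floor function: for every $m \in \mathbb{Z}^+$ one has $\lfloor n/\ell\rfloor = m$ exactly for the $\ell$ consecutive integers $n \in \{m\ell,\, m\ell+1,\, \dots,\, m\ell+\ell-1\}$. Hence $(a_n)_{n\ge\ell}$ is precisely the sequence $(b_m)_{m\ge 1}$ with each term written $\ell$ times in a row. The remaining indices $1 \le n < \ell$, for which $\lfloor n/\ell\rfloor = 0$ and the block frequency over the empty prefix is undefined, are finite in number and therefore irrelevant to both $\liminf$ and $\limsup$. Next I would verify the stuttering fact itself: for any real sequence $(b_m)$ and any fixed $\ell\in\mathbb{Z}^+$, its $\ell$-fold stutter $(a_n)$ satisfies $\liminf_n a_n = \liminf_m b_m$ and $\limsup_n a_n = \limsup_m b_m$. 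This follows from the tail-extremum characterization: if the index $N$ falls within the block of copies of $b_m$, then the tail $\{a_n : n \ge N\}$ consists of the leftover copies of $b_m$ together with all copies of $b_{m+1}, b_{m+2}, \dots$, so $\inf_{n\ge N} a_n = \inf_{j\ge m} b_j$ and $\sup_{n\ge N} a_n = \sup_{j\ge m} b_j$; letting $N \to \infty$ forces $m \to \infty$, which yields the two equalities. Multiplying through by the constant $\frac{1}{\ell\log k}$ gives both displayed equations of the observation.

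I do not anticipate any real obstacle, since the argument is purely a reindexing: the only point requiring a word of care is the degenerate range $n < \ell$, and that is disposed of by the standing principle that altering or deleting finitely many terms of a sequence leaves its $\liminf$ and $\limsup$ unchanged.
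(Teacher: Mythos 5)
Your proof is correct, and it is exactly the justification the paper leaves implicit: the statement is given as an unproved Observation, and the intended reasoning is precisely your reindexing — $n\mapsto\lfloor n/\ell\rfloor$ turns the defining sequence into its $\ell$-fold stutter, which preserves $\liminf$ and $\limsup$, with the finitely many indices $n<\ell$ being irrelevant. Nothing is missing.
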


\begin{lem}\label{mitomc2}
For all $r,t \in \mathbb{Z}^+$ and $S,T \in \Sigma^\infty$,
\[
\emph{\mi}_{r'}(S;T) - \rho_{r,t}(S:T) \leq q_t^k(r')
\]
and
\[
\hat{\emph{\mi}}_{r'}(S;T) - \hat{\rho}_{r,t}(S:T) \leq q_t^k(r')
\]
where $r' = \lfloor \log_k r \rfloor$ and $\displaystyle\lim_{m \rightarrow \infty}q_t^k(m) = 0$.
\end{lem}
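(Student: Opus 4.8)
The plan is to read off the sequence-level statement from the finite-string estimate of Lemma~\ref{mitomc}, applied to the prefixes $u = S \harp n$ and $w = T \harp n$, and then let $n \to \infty$. First I would record the bookkeeping identity that, for every $n \geq r'$, the prefix $u_{r'} = (S\harp n)_{r'}$ equals $S \harp (\lfloor n/r'\rfloor \cdot r')$, so that directly from the definitions of the block-frequency functions $\pi_{u_{r'}}^{(r')} = \pi_{S,\lfloor n/r'\rfloor}^{(r')}$ and likewise $\pi_{w_{r'}}^{(r')} = \pi_{T,\lfloor n/r'\rfloor}^{(r')}$. Substituting these into Lemma~\ref{mitomc} gives, for every $n \geq r'$,
\[
\frac{\mi(\pi_{S,\lfloor n/r'\rfloor}^{(r')};\pi_{T,\lfloor n/r'\rfloor}^{(r')})}{r'\log k} \leq \rho_{r,t}(S\harp n : T\harp n) + 2\Big\lfloor \frac{n}{r'}\Big\rfloor^{-1} + q_t^k(r').
\]

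Next I would take $\liminf_{n\to\infty}$ (respectively $\limsup_{n\to\infty}$) of both sides. By Observation~\ref{obs3} the corresponding limit of the left-hand side is $\mi_{r'}(S;T)$ (respectively $\hat{\mi}_{r'}(S;T)$); by definition the corresponding limit of $\rho_{r,t}(S\harp n : T\harp n)$ is $\rho_{r,t}(S:T)$ (respectively $\hat{\rho}_{r,t}(S:T)$); and $2\lfloor n/r'\rfloor^{-1} + q_t^k(r') \to q_t^k(r')$ as $n \to \infty$. Using the elementary facts that $\liminf(a_n + b_n) = \liminf a_n + \lim b_n$ and $\limsup(a_n + b_n) = \limsup a_n + \lim b_n$ whenever $(b_n)$ converges, together with monotonicity of $\liminf$ and $\limsup$ under the displayed pointwise inequality, I obtain $\mi_{r'}(S;T) \leq \rho_{r,t}(S:T) + q_t^k(r')$ and $\hat{\mi}_{r'}(S;T) \leq \hat{\rho}_{r,t}(S:T) + q_t^k(r')$, which are exactly the two claimed inequalities; the fact that $\lim_{m\to\infty}q_t^k(m) = 0$ is inherited verbatim from Lemma~\ref{mitomc}.

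I do not expect a genuine obstacle here: all of the analytic content already resides in Lemma~\ref{mitomc}, and the present lemma is merely its repackaging for infinite sequences. The only place calling for a little care is the bookkeeping in the first step — aligning the prefix notation $u_{r'}$ with the block-frequency notation $\pi_{S,n}^{(\ell)}$ and invoking Observation~\ref{obs3} so that the floor $\lfloor n/r'\rfloor$ appearing in the index of the block frequencies is harmless when passing to the limit.
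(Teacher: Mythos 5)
Your proposal is correct and matches the paper's own argument: both apply Lemma \ref{mitomc} to the prefixes $S\harp n$, $T\harp n$, use Observation \ref{obs3} to identify the block frequencies, and pass to the $\liminf$/$\limsup$, differing only in the trivial choice of which elementary limit inequality carries the pointwise bound to the limit. No gaps.
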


\begin{proof}
By Lemma \ref{mitomc} and Observation \ref{obs3},
\begin{align*}
&\mi_{r'}(S;T) - \rho_{r,t}(S:T)\\
&=\frac{1}{r' \log k}\displaystyle\liminf_{n \rightarrow \infty}\mi(\pi_{S,\lfloor\frac{n}{r'}\rfloor}^{(r')};\pi_{T,\lfloor\frac{n}{r'}\rfloor}^{(r')}) -\displaystyle\liminf_{n \rightarrow \infty}\rho_{r,t}(S \harp n:T \harp n)\\
&=\displaystyle\liminf_{n \rightarrow \infty}\frac{\mi(\pi_{S,\lfloor\frac{n}{r'}\rfloor}^{(r')};\pi_{T,\lfloor\frac{n}{r'}\rfloor}^{(r')})}{r' \log k} -\displaystyle\liminf_{n \rightarrow \infty}\rho_{r,t}(S \harp n:T \harp n)\\
&\leq \displaystyle\limsup_{n \rightarrow \infty}\bigg[\frac{\mi(\pi_{S,\lfloor\frac{n}{r'}\rfloor}^{(r')};\pi_{T,\lfloor\frac{n}{r'}\rfloor}^{(r')})}{r' \log k} -\rho_{r,t}(S \harp n:T \harp n)\bigg]\\
&\leq \displaystyle\limsup_{n \rightarrow \infty}\bigg[ 2\Big\lfloor \frac{n}{r'} \Big\rfloor^{-1} + q_t^k(r') \bigg]\\
&= q_t^k(r').
\end{align*}
An identical argument can be given to prove the second inequality. \qedhere
\end{proof}

\begin{lem}\label{mctomi2}
For all $r,t \in \mathbb{Z}^+$ and $S,T \in \Sigma^\infty$,
\[
\rho_{t,r}(S:T) - \emph{\mi}_{r'}(S;T) \leq p_t^k(r')
\]
and
\[
\hat{\rho}_{t,r}(S:T) - \hat{\emph{\mi}}_{r'}(S;T) \leq p_t^k(r')
\]
where $r' = \lfloor \log_k r \rfloor$ and $\displaystyle\lim_{m \rightarrow \infty}p_t^k(m) = 0$.
\end{lem}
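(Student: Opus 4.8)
The plan is to mirror the proof of Lemma~\ref{mitomc2}, replacing the appeal to Lemma~\ref{mitomc} by an appeal to its companion, the finite-string bound of Lemma~\ref{mctomi}. Fix $r,t \in \mathbb{Z}^+$ and $S,T \in \Sigma^\infty$, and write $r' = \lfloor \log_k r \rfloor$. First I would unwind the definitions: by Observation~\ref{obs3},
\[
\mi_{r'}(S;T) = \frac{1}{r'\log k}\liminf_{n \rightarrow \infty}\mi\big(\pi_{S,\lfloor n/r'\rfloor}^{(r')};\pi_{T,\lfloor n/r'\rfloor}^{(r')}\big),
\]
while $\rho_{t,r}(S:T) = \liminf_{n \rightarrow \infty}\rho_{t,r}(S \harp n : T \harp n)$ directly from the definition. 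The bookkeeping point is that for $u = S \harp n$ the truncation $u_{r'} = u \harp (\lfloor n/r'\rfloor \cdot r')$ satisfies $\pi_{u_{r'}}^{(r')} = \pi_{S,\lfloor n/r'\rfloor}^{(r')}$, and similarly for $T$; thus applying Lemma~\ref{mctomi} with $u = S \harp n$ and $w = T \harp n$ yields, for every sufficiently large $n$,
\[
\rho_{t,r}(S \harp n : T \harp n) - \frac{\mi\big(\pi_{S,\lfloor n/r'\rfloor}^{(r')};\pi_{T,\lfloor n/r'\rfloor}^{(r')}\big)}{r'\log k} \leq \Big\lfloor \frac{n}{r'} \Big\rfloor^{-1} + p_t^k(r').
\]

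Next I would invoke the elementary inequality $\liminf_{n} a_n - \liminf_{n} b_n \leq \limsup_{n}(a_n - b_n)$, taken with $a_n = \rho_{t,r}(S \harp n : T \harp n)$ and $b_n = \mi(\pi_{S,\lfloor n/r'\rfloor}^{(r')};\pi_{T,\lfloor n/r'\rfloor}^{(r')})/(r'\log k)$. Combining this with the displayed bound and with $\lim_{n \rightarrow \infty}\lfloor n/r'\rfloor^{-1} = 0$ (note $p_t^k(r')$ does not depend on $n$), I obtain
\[
\rho_{t,r}(S:T) - \mi_{r'}(S;T) \leq \limsup_{n \rightarrow \infty}\bigg[\Big\lfloor \frac{n}{r'} \Big\rfloor^{-1} + p_t^k(r')\bigg] = p_t^k(r'),
\]
which is the first inequality. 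For the second inequality the identical argument works with every $\liminf$ replaced by $\limsup$, using instead the companion fact $\limsup_{n} a_n - \limsup_{n} b_n \leq \limsup_{n}(a_n - b_n)$ and the $\limsup$ forms of $\hat{\rho}_{t,r}(S:T)$ and $\hat{\mi}_{r'}(S;T)$ from Observation~\ref{obs3}. That $\lim_{m \rightarrow \infty}p_t^k(m) = 0$ is inherited verbatim from Lemma~\ref{mctomi}.

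I do not expect any real obstacle here: the proof is a routine $\liminf/\limsup$ manipulation, structurally identical to that of Lemma~\ref{mitomc2}. The only step that requires a moment's care is the identification of the finite-string block-frequency measures $\pi_{u_{r'}}^{(r')}$, $\pi_{w_{r'}}^{(r')}$ appearing in Lemma~\ref{mctomi} with the sequence block-frequency measures $\pi_{S,\lfloor n/r'\rfloor}^{(r')}$, $\pi_{T,\lfloor n/r'\rfloor}^{(r')}$ appearing in the definition of $\mi_{r'}$, which is precisely what Observation~\ref{obs3} is set up to supply.
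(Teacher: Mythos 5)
Your proposal is correct and follows the paper's own proof essentially verbatim: both rewrite $\mi_{r'}(S;T)$ via Observation~\ref{obs3}, apply the finite-string bound of Lemma~\ref{mctomi} to $u = S \harp n$, $w = T \harp n$, and pass to the limit with the inequality $\liminf_n a_n - \liminf_n b_n \leq \limsup_n (a_n - b_n)$ (and its $\limsup$ analogue for the second claim). No gaps.
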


\begin{proof}
By Lemma \ref{mctomi} and Observation \ref{obs3},
\begin{align*}
&\rho_{t,r}(S:T) - \mi_{r'}(S;T)\\
&= \displaystyle\liminf_{n \rightarrow \infty}\rho_{t,r}(S \harp n:T \harp n) - \frac{1}{r'\log k}\displaystyle\liminf_{n \rightarrow \infty}\mi(\pi_{S,\lfloor\frac{n}{r'}\rfloor}^{(r')};\pi_{T,\lfloor\frac{n}{r'}\rfloor}^{(r')})\\
&= \displaystyle\liminf_{n \rightarrow \infty}\rho_{t,r}(S \harp n:T \harp n) - \displaystyle\liminf_{n \rightarrow \infty}\frac{\mi(\pi_{S,\lfloor\frac{n}{r'}\rfloor}^{(r')};\pi_{T,\lfloor\frac{n}{r'}\rfloor}^{(r')})}{r'\log k}\\
&\leq \displaystyle\limsup_{n \rightarrow \infty}\bigg[\rho_{t,r}(S \harp n:T \harp n) - \frac{\mi(\pi_{S,\lfloor\frac{n}{r'}\rfloor}^{(r')};\pi_{T,\lfloor\frac{n}{r'}\rfloor}^{(r')})}{r'\log k} \bigg]\\
&\leq \displaystyle\limsup_{n \rightarrow \infty}\bigg[ \Big\lfloor \frac{n}{r'} \Big\rfloor^{-1} + p_t^k(r') \bigg]\\
&= p_t^k(r').
\end{align*}
An identical argument can be given to prove the second inequality. \qedhere
\end{proof}

We now discuss the block entropy rates and joint block entropy rates of sequences and introduce block mutual information rates between two sequences.

\begin{defn}
The \emph{lower} and \emph{upper block entropy rates of} $S$ are
\[
\ent(S) = \displaystyle\lim_{\ell \rightarrow \infty}\ent_{\ell}(S)
\]
and
\[
\hat{\ent}(S) = \displaystyle\lim_{\ell \rightarrow \infty}\hat{\ent}_{\ell}(S),
\]
respectively.
\end{defn}

\begin{defn}
The \emph{lower} and \emph{upper joint block entropy rates of} $S \in \Sigma^\infty$ and $T \in \Sigma^\infty$ are
\[
\ent(S,T) = \displaystyle\lim_{\ell \rightarrow \infty}\ent_{\ell}(S,T)
\]
and
\[
\hat{\ent}(S,T) = \displaystyle\lim_{\ell \rightarrow \infty}\hat{\ent}_{\ell}(S,T),
\]
respectively.
\end{defn}

Using the frameworks developed in \cite{jZivLem78} and \cite{jDaLaLuMa04}, Bourke, Hitchcock, and Vinodchandran proved the following theorem in \cite{jBoHiVi05}.

\begin{thm}[\cite{jBoHiVi05}]\label{dimenteq}
For every $S \in \Sigma^\infty$,
\[
dim_{FS}(S) = \emph{\ent}(S)
\]
and
\[
Dim_{FS}(S) = \hat{\emph{\ent}}(S).
\]
\end{thm}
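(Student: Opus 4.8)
The plan is to derive Theorem~\ref{dimenteq} directly from the finite-length estimates already established, Lemma~\ref{hc} and Lemma~\ref{huffc}, by passing to iterated limits; by Theorem~\ref{dimtorho} it is enough to show $\rho(S)=\ent(S)$ and $\hat{\rho}(S)=\hat{\ent}(S)$. I will repeatedly use two routine facts: for fixed $\ell$ and $u=S\harp n$ we have $u_\ell = S\harp(\lfloor n/\ell\rfloor\cdot\ell)$, hence $\pi_{u_\ell}^{(\ell)}=\pi_{S,\lfloor n/\ell\rfloor}^{(\ell)}$; and since $n\mapsto\lfloor n/\ell\rfloor$ takes every nonnegative integer value (each $\ell$ times), replacing $n$ by $\lfloor n/\ell\rfloor$ inside a $\liminf_n$ or $\limsup_n$ leaves its value unchanged. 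This is the single-sequence analogue of Observation~\ref{obs3}, so in particular $\ent_\ell(S)=\frac{1}{\ell\log k}\liminf_n\ent(\pi_{S,\lfloor n/\ell\rfloor}^{(\ell)})$ and likewise with $\limsup$ for $\hat{\ent}_\ell(S)$.

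\emph{Upper bounds.} Fix $r$ with $r'=\lfloor\log_k r\rfloor\ge1$. For all sufficiently large $n$, Lemma~\ref{huffc} gives $\rho_r(S\harp n)\le\frac{\ent(\pi_{S,\lfloor n/r'\rfloor}^{(r')})}{r'\log k}+\frac1{r'}$. Taking $\liminf_n$ (respectively $\limsup_n$) yields $\liminf_n\rho_r(S\harp n)\le\ent_{r'}(S)+\frac1{r'}$ (respectively $\le\hat{\ent}_{r'}(S)+\frac1{r'}$). Since $\rho_r(S\harp n)$ is nonincreasing in $r$, the left side is at least $\rho(S)$ (respectively $\hat{\rho}(S)$), so $\rho(S)\le\ent_{r'}(S)+\frac1{r'}$ for every $r$; letting $r\to\infty$, so that $r'\to\infty$ through all positive integers, and using $\ent(S)=\lim_\ell\ent_\ell(S)$ from the definition, we obtain $\rho(S)\le\ent(S)$, and identically $\hat{\rho}(S)\le\hat{\ent}(S)$.

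\emph{Lower bounds.} Fix $r$ and $\ell$. For each $n$, the ILFSC $C$ attaining the minimum in the definition of $\rho_r(S\harp n)$ has $r$ states, and the error term $f_r^k(\ell)$ in Lemma~\ref{hc} depends only on $r,k,\ell$, not on $C$ or $n$, so Lemma~\ref{hc} gives, uniformly in $n$, $\rho_r(S\harp n)\ge\frac{\ent(\pi_{S,\lfloor n/\ell\rfloor}^{(\ell)})}{\ell\log k}-\lfloor n/\ell\rfloor^{-1}-f_r^k(\ell)$. Taking $\liminf_n$ (respectively $\limsup_n$) and discarding the vanishing $\lfloor n/\ell\rfloor^{-1}$ term gives $\liminf_n\rho_r(S\harp n)\ge\ent_\ell(S)-f_r^k(\ell)$ (respectively $\ge\hat{\ent}_\ell(S)-f_r^k(\ell)$), for every $\ell$. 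Holding $r$ fixed and letting $\ell\to\infty$ — so $f_r^k(\ell)\to0$ and $\ent_\ell(S)\to\ent(S)$ — gives $\liminf_n\rho_r(S\harp n)\ge\ent(S)$ for every $r$, whence $\rho(S)=\lim_r\liminf_n\rho_r(S\harp n)\ge\ent(S)$; similarly $\hat{\rho}(S)\ge\hat{\ent}(S)$. Combining the two halves with Theorem~\ref{dimtorho} completes the proof.

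\emph{Main obstacle.} There is no deep step here: all the real content sits in Lemmas~\ref{hc} and~\ref{huffc}, i.e.\ in the Generalized Kraft Inequality and the Huffman-coding bound, and what remains is disciplined limit-taking. The one point requiring genuine care is the order of limits in the lower bound: $f_r^k(\ell)\to0$ only when the block length $\ell\to\infty$ with the state count $r$ fixed, so one must push $\ell$ to infinity first and only then let $r$ grow; doing it the other way around leaves an uncontrolled error. Two smaller checks worth making explicit are that the $\liminf/\limsup$ over $n$ is insensitive to the floor inside the block-frequency index, and that the minimizer in the definition of $\rho_r(S\harp n)$ may depend on $n$ without harming the estimate, since the bound in Lemma~\ref{hc} is uniform over all $r$-state ILFSCs.
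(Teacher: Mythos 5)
Your proposal is correct. Note that the paper does not actually prove Theorem~\ref{dimenteq}: it imports it as a known result of Bourke, Hitchcock, and Vinodchandran, so there is no internal proof to compare against. What you have written is a sound self-contained derivation from the paper's own machinery, and it is in fact exactly the single-sequence specialization of the argument the paper does carry out for the mutual analogue (Theorem~\ref{char}, via Lemmas~\ref{mitomc2} and~\ref{mctomi2}, which rest on the same Lemmas~\ref{hc} and~\ref{huffc} you invoke). The two delicate points are handled properly: the error term $f_s^k(\ell)$ in Lemma~\ref{hc} depends only on the state count $s=r$, the alphabet, and the block length, so it is uniform over the $n$-dependent minimizers in the definition of $\rho_r(S\harp n)$; and in the lower bound you correctly send $\ell\to\infty$ with $r$ fixed before letting $r$ grow, which is forced by the fact that $f_r^k(\ell)\to 0$ only in that order (the paper makes the analogous move in Theorem~\ref{char} by first fixing the compressor parameter and then choosing the block length). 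The replacement of $n$ by $\lfloor n/\ell\rfloor$ inside the $\liminf$/$\limsup$ is the single-sequence version of Observation~\ref{obs3} and is harmless for the reason you state. The only implicit assumption worth flagging is the existence of the limit $\ent(S)=\lim_\ell\ent_\ell(S)$, which the paper also takes for granted in its definition; your two inequalities would otherwise still sandwich $\rho(S)$ between $\liminf_\ell$ and $\limsup_\ell$ of the block entropy rates, so nothing essential is at stake.
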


The following corollary follows directly from Theorem \ref{dimenteq}.

\begin{cor}\label{dimenteqcor}
For every $S,T \in \Sigma^\infty$,
\[
dim_{FS}(S,T) = \emph{\ent}(S,T)
\]
and
\[
Dim_{FS}(S,T) = \hat{\emph{\ent}}(S,T).
\]
\end{cor}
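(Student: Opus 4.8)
The plan is to obtain Corollary~\ref{dimenteqcor} by applying Theorem~\ref{dimenteq} to the single sequence $(S,T)$, regarded as a sequence over the $k^2$-symbol alphabet $\Sigma\times\Sigma$, exactly as Corollary~\ref{jdimtorho} was obtained from Theorem~\ref{dimtorho}. Applied with $\Sigma$ replaced by $\Sigma\times\Sigma$, Theorem~\ref{dimenteq} asserts $dim_{FS}((S,T)) = \ent((S,T))$ and $Dim_{FS}((S,T)) = \hat\ent((S,T))$, where these quantities are normalized by $\log k^2 = 2\log k$ rather than $\log k$. The only real content is to reconcile this factor of $2$ with the ``joint'' quantities $dim_{FS}(S,T)$, $\ent(S,T)$, etc., all of which carry the $\log k$ normalization; I expect this bookkeeping to be the entirety of the argument, with no genuine obstacle.

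First I would record two elementary normalization identities. On the compression side, for every ILFSC $C$ on $\Sigma\times\Sigma$ one has $\rho_C(u,w) = 2\,\rho_C((u,w))$ (the identity already used in the proof of Lemma~\ref{low2}); minimizing over $r$-state ILFSCs gives $\rho_r(u,w) = 2\,\rho_r((u,w))$ for all $r$ and all $u,w \in \Sigma^n$, and passing to $\liminf_n$ and then $\lim_r$ (resp.\ $\limsup_n$) yields $\rho(S,T) = 2\,\rho((S,T))$ and $\hat\rho(S,T) = 2\,\hat\rho((S,T))$. On the entropy side, the identity $\pi_{(u,w)}^{(\ell)}((x,y)) = \pi_{u,w}^{(\ell)}(x,y)$ gives $\ent(\pi_{(S,T),n}^{(\ell)}) = \ent(\pi_{S,T,n}^{(\ell)})$, so comparing the defining formulas (which differ only by the normalizing constant $\log k^2$ versus $\log k$) gives $\ent_\ell(S,T) = 2\,\ent_\ell((S,T))$ and $\hat\ent_\ell(S,T) = 2\,\hat\ent_\ell((S,T))$ for every $\ell$; letting $\ell\to\infty$ gives $\ent(S,T) = 2\,\ent((S,T))$ and $\hat\ent(S,T) = 2\,\hat\ent((S,T))$.

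Then I would assemble the chain of equalities, invoking Corollary~\ref{jdimtorho}, Theorem~\ref{dimtorho} applied to the sequence $(S,T)$, and Theorem~\ref{dimenteq} applied to $(S,T)$:
\[
dim_{FS}(S,T) = \rho(S,T) = 2\,\rho((S,T)) = 2\,dim_{FS}((S,T)) = 2\,\ent((S,T)) = \ent(S,T),
\]
and the same chain with $\hat\rho$, $\hat\ent$, $Dim_{FS}$, and $\limsup$ throughout gives $Dim_{FS}(S,T) = \hat\ent(S,T)$. The only point that deserves explicit comment in the write-up is the pair of normalization identities $\rho(S,T) = 2\,\rho((S,T))$ and $\ent(S,T) = 2\,\ent((S,T))$; once these are in hand the corollary is immediate from Theorem~\ref{dimenteq} and Corollary~\ref{jdimtorho}.
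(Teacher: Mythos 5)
Your proposal is correct and matches the paper's intent exactly: the paper offers no written proof, asserting only that the corollary ``follows directly from Theorem~\ref{dimenteq},'' and the direct route it has in mind is precisely yours --- apply Theorem~\ref{dimenteq} to $(S,T)$ over $\Sigma\times\Sigma$ and reconcile the $\log k^2$ versus $\log k$ normalizations via $\rho(S,T)=2\,\rho((S,T))$ and $\ent(S,T)=2\,\ent((S,T))$ (the latter factor of $2$ being exactly the normalization remark the paper makes just after defining the joint block entropy rates). Your write-up simply supplies the bookkeeping the authors left implicit.
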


\begin{defn}
The \emph{lower} and \emph{upper block mutual information rates between} $S \in \Sigma^\infty$ and $T \in \Sigma^\infty$ are
\begin{align}\label{lmir}
\mi(S;T) = \displaystyle\lim_{\ell \rightarrow \infty}\mi_{\ell}(S;T)
\end{align}
and
\begin{align}\label{umir}
\hat{\mi}(S;T) = \displaystyle\lim_{\ell \rightarrow \infty}\hat{\mi}_{\ell}(S;T),
\end{align}
respectively.
\end{defn}

We now present the main theorem of this section, which states that the lower and upper block mutual information rates coincide with the lower and upper finite-state mutual dimensions, respectively.

\begin{thm}\label{char}
For all $S,T \in \Sigma^\infty$,
\[
mdim_{FS}(S:T) = \emph{\mi}(S;T)
\]
and
\[
Mdim_{FS}(S:T) = \hat{\emph{\mi}}(S;T).
\]
\end{thm}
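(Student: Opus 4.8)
The plan is to establish the lower-dimension identity $mdim_{FS}(S:T)=\mi(S;T)$ by sandwiching the sequence $\big(\mi_\ell(S;T)\big)_{\ell\in\mathbb Z^+}$ between two quantities that both converge to $mdim_{FS}(S:T)$, using Lemmas~\ref{mitomc2} and~\ref{mctomi2} together with the interchange theorem (Theorem~\ref{interchange}). The upper-dimension identity $Mdim_{FS}(S:T)=\hat{\mi}(S;T)$ then follows by the same argument with every $\liminf_n$ replaced by $\limsup_n$ and every unhatted symbol by its hatted counterpart, since Lemmas~\ref{mitomc2} and~\ref{mctomi2}, Theorem~\ref{interchange}, and the monotonicity of $\rho_{r,t}(S:T)$ in $r$ and in $t$ all have verbatim hatted versions. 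Throughout I will use that, with $r'=\lfloor\log_k r\rfloor$, the map $r\mapsto r'$ is nondecreasing and tends to $\infty$, so the values $\mi_{r'}(S;T)$ run cofinally through $\big(\mi_\ell(S;T)\big)_\ell$; hence $\limsup_{r\to\infty}\mi_{r'}(S;T)=\limsup_{\ell\to\infty}\mi_\ell(S;T)$, and likewise for $\liminf$ and with $t'$ in place of $r'$.

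For the bound $\limsup_{\ell\to\infty}\mi_\ell(S;T)\le mdim_{FS}(S:T)$: Lemma~\ref{mitomc2} gives $\mi_{r'}(S;T)\le\rho_{r,t}(S:T)+q_t^k(r')$ for all $r,t$. Fix $t$ and let $r\to\infty$. Since $\rho_{r,t}(S:T)$ is nondecreasing and bounded in $r$ (as noted before Theorem~\ref{interchange}), the limit $\lim_{r\to\infty}\rho_{r,t}(S:T)$ exists, and $q_t^k(r')\to 0$, so the right-hand side converges to $\lim_{r\to\infty}\rho_{r,t}(S:T)$; taking $\limsup$ in $r$ on the left yields $\limsup_{\ell\to\infty}\mi_\ell(S;T)\le\lim_{r\to\infty}\rho_{r,t}(S:T)$. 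Now let $t\to\infty$ and invoke Theorem~\ref{interchange}, which says $\lim_{t\to\infty}\lim_{r\to\infty}\rho_{r,t}(S:T)=mdim_{FS}(S:T)$; this gives $\limsup_{\ell\to\infty}\mi_\ell(S;T)\le mdim_{FS}(S:T)$.

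For the reverse bound $\liminf_{\ell\to\infty}\mi_\ell(S;T)\ge mdim_{FS}(S:T)$: read Lemma~\ref{mctomi2} with the names $r$ and $t$ interchanged (it is stated for all $r,t$), obtaining $\rho_{r,t}(S:T)\le\mi_{t'}(S;T)+p_r^k(t')$, that is, $\mi_{t'}(S;T)\ge\rho_{r,t}(S:T)-p_r^k(t')$, for all $r,t$. Fix $r$ and let $t\to\infty$; since $\rho_{r,t}(S:T)$ is nonincreasing and bounded below in $t$ the limit $\lim_{t\to\infty}\rho_{r,t}(S:T)$ exists, and $p_r^k(t')\to 0$, so $\liminf_{\ell\to\infty}\mi_\ell(S;T)\ge\lim_{t\to\infty}\rho_{r,t}(S:T)$. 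Letting $r\to\infty$ and using that $\lim_{t\to\infty}\rho_{r,t}(S:T)$ is nondecreasing in $r$ gives $\liminf_{\ell\to\infty}\mi_\ell(S;T)\ge\lim_{r\to\infty}\lim_{t\to\infty}\rho_{r,t}(S:T)=mdim_{FS}(S:T)$. Combining the two bounds forces $mdim_{FS}(S:T)\le\liminf_\ell\mi_\ell(S;T)\le\limsup_\ell\mi_\ell(S;T)\le mdim_{FS}(S:T)$, so all four quantities are equal; in particular the limit defining $\mi(S;T)$ exists and equals $mdim_{FS}(S:T)$.

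The main obstacle here is not any single hard estimate — all the quantitative work is already packaged in Lemmas~\ref{mitomc2} and~\ref{mctomi2} — but the bookkeeping: one must pair the two directions of those lemmas with the two halves of the sandwich correctly (note that it is the $r$–$t$-swapped reading of Lemma~\ref{mctomi2}, pairing $\mi_{t'}$ with $\rho_{r,t}$, that is needed for the lower bound), keep straight the two reindexings ($r$ versus $r'$, $t$ versus $t'$), and verify that the iterated limits in sight exist via the monotonicity of $\rho_{r,t}(S:T)$ in each argument. The one genuinely essential input beyond the lemmas is Theorem~\ref{interchange}: the upper bound naturally produces $\lim_{t\to\infty}\lim_{r\to\infty}\rho_{r,t}(S:T)$ while the lower bound produces $\lim_{r\to\infty}\lim_{t\to\infty}\rho_{r,t}(S:T)$, and it is precisely the interchangeability of these iterated limits that lets the two one-sided bounds close up.
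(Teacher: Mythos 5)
Your proposal is correct and takes essentially the same route as the paper's proof: both rest on Lemma~\ref{mitomc2} for the upper bound, the $r$--$t$-swapped reading of Lemma~\ref{mctomi2} for the lower bound, the monotonicity of $\rho_{r,t}(S:T)$ in each index, and Theorem~\ref{interchange} to close the sandwich (the paper merely phrases the argument with explicit $\epsilon_1,\epsilon_2$ and specific indices $c,d$ instead of iterated $\liminf$/$\limsup$).
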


\begin{proof}
Let $\epsilon > 0$, $\epsilon_1 > 0$, and $\epsilon_2 > 0$ such that $\epsilon = \epsilon_1 + \epsilon_2$. First, let $c \in \mathbb{Z}^+$ such that,
\begin{align}\label{limeq1}
\big|\displaystyle\lim_{r \rightarrow \infty}\rho_{r,c}(S:T) - mdim_{FS}(S:T) \big| < \epsilon_1
\end{align}
and
\begin{align}\label{limeq2}
\big|\displaystyle\lim_{t \rightarrow \infty}\rho_{c,t}(S:T) - mdim_{FS}(S:T) \big| < \epsilon_1.
\end{align}
We then choose a constant $D \in \mathbb{Z}^+$ such that, for all $d > D$,
\begin{align}\label{closer}
q_c^k(d') < \epsilon_2 \text{\,\,\,\, and \,\,\,\,} p_c^k(d') < \epsilon_2
\end{align}
where $d' = \lfloor \log_k d \rfloor$ and $q_c^k(d')$ and $p_c^k(d')$ are from Lemmas \ref{mitomc2} and \ref{mctomi2}, respectively. By Lemma \ref{mitomc2}, (\ref{limeq1}), and (\ref{closer}),
\begin{align*}
\mi_{d'}(S;T) - mdim_{FS}(S:T) &\leq \rho_{d,c}(S:T) - mdim_{FS}(S:T) + q_c^k(d')\\
															 &\leq \displaystyle\lim_{r \rightarrow \infty}\rho_{r,c}(S:T) - mdim_{FS}(S:T) + q_c^k(d')\\
															 &\leq \epsilon_1 + \epsilon_2\\
															 &= \epsilon.
\end{align*}
Likewise, by Lemma \ref{mctomi2}, (\ref{limeq2}), and (\ref{closer}),
\begin{align*}
\mi_{d'}(S;T) - mdim_{FS}(S:T) &\geq \rho_{c,d}(S:T) - mdim_{FS}(S:T) - p_c^k(d')\\
														 &\geq \displaystyle\lim_{t \rightarrow \infty}\rho_{c,t}(S:T) - mdim_{FS}(S:T) - p_c^k(d')\\
														 &\geq -\epsilon_1 - \epsilon_2\\
														 &= -\epsilon.
\end{align*}
Therefore, for every $\epsilon > 0$, there exists a constant $D \in \mathbb{Z}^+$ such that, for all $d > D$,
\[
|\mi_{d'}(S;T) - mdim_{FS}(S:T)| < \epsilon,
\]
which proves that the limit from the definition of $\mi(S;T)$ exists and is equal to $mdim_{FS}(S:T)$. An identical argument can be given to prove that the limit from the definition of $\hat{\mi}(S;T)$ exists and is equal to $Mdim_{FS}(S:T)$. \qedhere
\end{proof}

The following theorem regarding the properties of block-mutual information rates between sequences follows from Theorem \ref{md}, Theorem \ref{dimenteq}, Corollary \ref{dimenteqcor}, and Theorem \ref{char}. This theorem may also be proven using the properties listed in Lemma \ref{bmur}.

\begin{thm}[Properties of Block Mutual Information Rates between Sequences]\label{mirp}
For all $S,T \in \Sigma^\infty$,
\begin{enumerate}
\item $\emph{\ent}(S) + \emph{\ent}(T) - \hat{\emph{\ent}}(S,T) \leq \emph{\mi}(S;T) \leq \hat{\emph{\ent}}(S) + \hat{\emph{\ent}}(T) - \hat{\emph{\ent}}(S,T)$
\item $\emph{\ent}(S) + \emph{\ent}(T) - \emph{\ent}(S,T) \leq \hat{\emph{\mi}}(S;T) \leq \hat{\emph{\ent}}(S) + \hat{\emph{\ent}}(T) - \emph{\ent}(S,T)$
\item $\emph{\mi}(S;T) \leq \min\{\emph{\ent}(S),\emph{\ent}(T)\}$, $\hat{\emph{\mi}}(S;T) \leq \min\{\hat{\emph{\ent}}(S),\hat{\emph{\ent}}(T)\}$,
\item $0 \leq \emph{\mi}(S;T) \leq \hat{\emph{\mi}}(S;T) \leq 1$,
\item $\emph{\mi}(S;S) = \emph{\ent}(S)$, $\hat{\emph{\mi}}(S;S) = \hat{\emph{\ent}}(S)$, and
\item $\emph{\mi}(S;T) = \emph{\mi}(T;S)$, $\hat{\emph{\mi}}(S;T) = \hat{\emph{\mi}}(T;S)$.
\end{enumerate}
\end{thm}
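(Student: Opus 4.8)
The plan is to derive all six items mechanically from results already established, by using the characterization theorems to rewrite every quantity in the statement as a finite-state (mutual) dimension. Concretely, Theorem~\ref{char} gives $\mi(S;T) = mdim_{FS}(S:T)$ and $\hat{\mi}(S;T) = Mdim_{FS}(S:T)$; Theorem~\ref{dimenteq} gives $\ent(S) = dim_{FS}(S)$, $\hat{\ent}(S) = Dim_{FS}(S)$ (and likewise for $T$); and Corollary~\ref{dimenteqcor} gives $\ent(S,T) = dim_{FS}(S,T)$, $\hat{\ent}(S,T) = Dim_{FS}(S,T)$. Substituting these identities into the six items of Theorem~\ref{mirp} turns them, one for one, into the six items of Theorem~\ref{md}: item~(1) here is the double inequality of Theorem~\ref{md}(1), item~(2) is Theorem~\ref{md}(2), item~(3) is the pair of $\min$-bounds in Theorem~\ref{md}(3), item~(4) is Theorem~\ref{md}(4), item~(5) is Theorem~\ref{md}(5), and item~(6) is Theorem~\ref{md}(6). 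So I would simply invoke the matching item of Theorem~\ref{md} in each case.

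As an alternative --- which the remark preceding the statement also notes --- one can argue directly at the level of $\ell^{th}$ rates and take the limit $\ell \to \infty$ in Lemma~\ref{bmur}. For this I would first record that all the relevant limits exist: the limits defining $\mi(S;T)$, $\hat{\mi}(S;T)$ exist by Theorem~\ref{char}, and those defining $\ent(S)$, $\hat{\ent}(S)$, $\ent(S,T)$, $\hat{\ent}(S,T)$ exist (this is implicit in, and needed for, Theorem~\ref{dimenteq} and Corollary~\ref{dimenteqcor}). Given existence, every (in)equality of Lemma~\ref{bmur} survives passage to the limit, using that $\leq$ is preserved, that $\lim(a_\ell \pm b_\ell) = \lim a_\ell \pm \lim b_\ell$ when the summands converge, and that $\min$ is continuous; for instance Lemma~\ref{bmur}(2) yields item~(1) and Lemma~\ref{bmur}(3) yields item~(2).

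There is no real obstacle here --- the whole argument is bookkeeping. The only point requiring a little care, in the second route, is that $\ent_\ell(\cdot)$ is defined with a $\liminf$ over $n$ while $\hat{\ent}_\ell(\cdot)$ uses a $\limsup$, so in the ``mixed'' bounds of Lemma~\ref{bmur}(2)--(3) (with $\hat{\ent}_\ell(S,T)$ on the left in one inequality and $\ent_\ell(S,T)$ on the left in the other) one must keep track of which quantity appears where as $\ell \to \infty$. The first route via Theorem~\ref{md} avoids this bookkeeping entirely, since the analogous issue was already handled in proving Theorem~\ref{md}, so that is the route I would present.
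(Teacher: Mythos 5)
Your proposal is correct and follows exactly the paper's route: the paper states that Theorem \ref{mirp} follows from Theorem \ref{md} together with the characterizations in Theorem \ref{char}, Theorem \ref{dimenteq}, and Corollary \ref{dimenteqcor}, and notes the alternative via Lemma \ref{bmur}, just as you do. The item-by-item substitution you describe is the intended argument, so nothing further is needed.
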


%%%%%%%%%%%%%%%%%%%%%%%%%%%%%%%%%%%%%%%%%%%%%%%%%%%%%%%%%%%%

\section{Finite-State Mutual Dimension and Independence}
In this section we explore some of the relationships between finite-state mutual dimension and normal sequences. More specifically, we provide necessary and sufficient conditions for when two normal sequences achieve finite-state mutual dimension zero.

Becher, Carton, and Heiber provided a notion of \emph{finite-state independence} using the \emph{conditional compression ratio} of a sequence \emph{given} another sequence. Specifically, they define two sequences $S \in \Sigma^{\infty}$ and $T \in \Sigma^{\infty}$ to be \emph{finite-state independent} if the conditional compression ratio $\rho(S\,|\,T)$ of $S$ given $T$ is equal to the compression ratio $\rho(S)$ of $S$, the conditional compression ratio $\rho(T\,|\,S)$ of $T$ given $S$ is equal to the compression ratio $\rho(T)$ of $T$, and both $\rho(S)$ and $\rho(T)$ are greater than zero. In their investigation they showed that, for any two normal sequences $R_1 \in \Sigma^{\infty}$ and $R_2 \in \Sigma^{\infty}$, if $R_1$ and $R_2$ are finite-state independent, then $(R_1,R_2)$ is normal. However, they also showed that the converse does not hold, i.e., there are two normal sequences $R_1$ and $R_2$ such that $(R_1,R_2)$ is normal and not finite-state independent \cite{jBeCaHe18}. Alvarez, Becher, and Carton also proved several characterizations of finite-state independence using various kinds of B\"uchi automata \cite{jAlBeCa19}.

We now proceed to discuss the concept of normality and its relationship to finite-state dimension.

\begin{defn}
Let $\alpha$ be a probability measure on $\Sigma$, $S \in \Sigma^\infty$, and $\ell \in \mathbb{Z}^+$.
\begin{enumerate}
\item $S$ is $\alpha$-$\ell$-\emph{normal} if, for all $x \in \Sigma^\ell$,
\[
\displaystyle\lim_{n \rightarrow \infty}\pi_{S,n}(x) = \alpha(x).
\]
\item $S$ is $\alpha$-\emph{normal} if $S$ is $\alpha$-$\ell$-normal for all $\ell \in \mathbb{Z}^+$.
\item $S$ is \emph{normal} if $S$ is $\mu$-normal, where $\mu$ is the uniform probability measure on $\Sigma$.
\item $S$ has \emph{asymptotic frequency} $\alpha$, and we write $S \in FREQ^{\alpha}$, if $S$ is $\alpha$-1-normal.
\end{enumerate}
\end{defn}
In \cite{jLutz11}, Lutz explored the \emph{lower} and \emph{upper finite-state} $\beta$-\emph{dimensions} $dim_{FS}^{\beta}(S)$ and $Dim_{FS}^{\beta}(S)$ of a sequence $S \in \Sigma^{\infty}$, where $\beta$ is a probability measure on $\Sigma$. These quantities are essentially finite-state versions of \emph{Billingsley dimension} and \emph{strong Billingsley dimension}, respectively \cite{jBill60}. We will need to use these concepts to prove our main theorem.

Let $\beta$ be a probability measure on $\Sigma$. The \emph{Shannon self-information} of a string $w \in \Sigma^*$ with respect to $\beta$ on $\Sigma$ is
\[
\ell_{\beta}(w) = \displaystyle\sum_{i=0}^{|w|-1}\log\frac{1}{\beta(w[i])}.
\]
The $\beta$-\emph{compression ratio of} $u \in \Sigma^*$ attained by an ILFSC $C$ on $\Sigma$ is
\[
\rho^\beta_C(u) = \frac{|C(u)|}{\ell_{\beta}(u)}.
\]

\begin{defn}
Let $\beta$ be a probability measure on $\Sigma$. The $r$-\emph{state} $\beta$-\emph{compression ratio of} $u \in \Sigma^*$ is
\[
\rho^\beta_r(u) = \min\big\{\rho^\beta_C(u) \,\bigg|\, C \text{ is an ILFSC on } \Sigma \text{ that has } r \text{ states} \big\}
\]
\end{defn}

\begin{defn}
Let $\beta$ be a probability measure on $\Sigma$. The \emph{lower} and \emph{upper finite-state} $\beta$-\emph{dimensions of} $S \in \Sigma^\infty$ are
\[
dim^\beta_{FS}(S) = \displaystyle\lim_{r \rightarrow \infty}\displaystyle\liminf_{n \rightarrow \infty}\rho^\beta_r(S \harp n)
\]
and
\[
Dim^\beta_{FS}(S) = \displaystyle\lim_{r \rightarrow \infty}\displaystyle\limsup_{n \rightarrow \infty}\rho^\beta_r(S \harp n),
\]
respectively.
\end{defn}

Schnorr and Stimm proved a characterization of normal sequences in terms of finite-state gamblers \cite{jSchSti72}. Later, Dai, Lathrop, Lutz, and Mayordomo showed that any normal sequence achieves finite-state dimension one \cite{jDaLaLuMa04}, while Bourke, Hitchcock, and Vinodchandran showed that any sequence that achieves finite-state dimension one is normal. \cite{jBoHiVi05}. This result can easily be generalized to $\alpha$-normal sequences.

\begin{thm}[\cite{jSchSti72,jBoHiVi05}]\label{normalchar}
For each probability measure $\alpha$ on $\Sigma$ and each $R \in \Sigma^{\infty}$, $R$ is $\alpha$-normal if and only if $dim^{\alpha}_{FS}(R) = 1$.
\end{thm}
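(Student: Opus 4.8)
The plan is to reduce to the uniform case $\alpha=\mu$ (Schnorr--Stimm, Bourke--Hitchcock--Vinodchandran) by relativizing the block-entropy machinery already in place. Throughout I assume $\alpha$ has full support, so $\ell_\alpha(w)$ is finite and $h:=\ent(\alpha)>0$; the remaining cases are vacuous (if $\alpha(a)=0$ and $R$ contains $a$ then $\ell_\alpha(R\harp n)=\infty$) or reduce to the support of $\alpha$. Let $\alpha^{(\ell)}$ denote the $\ell$-fold product measure on $\Sigma^\ell$, which is precisely the value written $\alpha(x)$ for $x\in\Sigma^\ell$ in the definition of $\alpha$-$\ell$-normality, and note $\ent(\alpha^{(\ell)})=\ell h$. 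The elementary identity underlying everything is that for all $S$, $r$, $n$,
\[
\rho^\alpha_r(S\harp n)=\frac{n\log k}{\ell_\alpha(S\harp n)}\,\rho_r(S\harp n),
\]
since the scalar $n\log k/\ell_\alpha(S\harp n)$ is independent of the compressor.

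\emph{Forward implication.} Suppose $R$ is $\alpha$-normal. Then $R\in FREQ^\alpha$, so $\ell_\alpha(R\harp n)/n=\sum_a\pi_{R,n}(a)\log(1/\alpha(a))\to h>0$; substituting this genuine limit into the identity and taking $\liminf_n$ then $\lim_r$ gives $dim^\alpha_{FS}(R)=(\log k/h)\,dim_{FS}(R)$ via Theorem~\ref{dimtorho}. By Theorem~\ref{dimenteq}, $dim_{FS}(R)=\ent(R)=\lim_\ell(\ell\log k)^{-1}\liminf_n\ent(\pi^{(\ell)}_{R,n})$; since $R$ is $\alpha$-$\ell$-normal, $\pi^{(\ell)}_{R,n}\to\alpha^{(\ell)}$ pointwise, so $\ent(\pi^{(\ell)}_{R,n})\to\ent(\alpha^{(\ell)})=\ell h$ by continuity of Shannon entropy on the simplex, hence $\ent_\ell(R)=h/\log k$ for every $\ell$ and $\ent(R)=h/\log k$. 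Therefore $dim^\alpha_{FS}(R)=(\log k/h)(h/\log k)=1$. (One can alternatively get $dim^\alpha_{FS}(R)\ge 1$ straight from Lemmas~\ref{hc} and~\ref{kraft}, which force $\rho^\alpha_C(R\harp n)\ge 1-(\log k/h)\,f_s^k(\ell)-o(1)$ for every $s$-state ILFSC $C$ and every $\ell$, then let $\ell\to\infty$.)

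\emph{Converse (contrapositive).} Suppose $R$ is not $\alpha$-normal; I show $dim^\alpha_{FS}(R)<1$. Fix $\ell$ with $R$ not $\alpha$-$\ell$-normal, so some coordinate of $\pi^{(\ell)}_{R,n}$ stays bounded away from $\alpha^{(\ell)}$ along an increasing sequence $(n_j)$; by compactness of the simplex on $\Sigma^\ell$, a further subsequence converges, $\pi^{(\ell)}_{R,n_j}\to\tau$ with $\tau\ne\alpha^{(\ell)}$, hence $\delta:=D(\tau\,\|\,\alpha^{(\ell)})>0$. Pick an integer $m\ge 3/\delta$ and pass to a further subsequence so that the empirical length-$(m\ell)$ block distribution of $R\harp n_j$ converges to some $\sigma$ on $\Sigma^{m\ell}$. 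Since $\tau$ is the average of the $m$ aligned length-$\ell$ submarginals of $\sigma$, a short computation gives $\ent(\sigma)\le m\,\ent(\tau)$ (subadditivity and concavity of entropy) and $\sum_y\sigma(y)\log(1/\alpha^{(m\ell)}(y))=m(\ent(\tau)+\delta)$. Now set $\sigma_\epsilon=(1-\epsilon)\sigma+\epsilon\mu^{(m\ell)}$ and let $C$ be the ILFSC that reads its input in length-$(m\ell)$ blocks, emits for a block $y$ a prefix-free codeword of length $\lceil\log(1/\sigma_\epsilon(y))\rceil$ (valid since $\sum_y\sigma_\epsilon(y)=1$), and appends a fixed encoding of the fewer than $m\ell$ trailing symbols; $C$ is information-lossless with a number of states $r_0$ depending only on $m,\ell,k$. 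Standard counting gives $|C(S\harp n)|\le(n/m\ell)\big[\sum_y\widehat\pi(y)\log(1/\sigma_\epsilon(y))+1\big]+O(m\ell)$ and $\ell_\alpha(S\harp n)=(n/m\ell)\sum_y\widehat\pi(y)\log(1/\alpha^{(m\ell)}(y))+O(m\ell)$, where $\widehat\pi$ is the empirical $(m\ell)$-block distribution of $S\harp n$; so along $(n_j)$
\[
\limsup_j\rho^\alpha_C(R\harp n_j)\le\frac{\ent(\sigma)+D(\sigma\,\|\,\sigma_\epsilon)+1}{m(\ent(\tau)+\delta)}\le\frac{m\,\ent(\tau)+D(\sigma\,\|\,\sigma_\epsilon)+1}{m(\ent(\tau)+\delta)}.
\]
As $D(\sigma\,\|\,\sigma_\epsilon)\le\log\frac{1}{1-\epsilon}\to 0$, choosing $\epsilon$ with $\log\frac{1}{1-\epsilon}\le m\delta/3$ makes the numerator at most $m(\ent(\tau)+2\delta/3)$, so the ratio is $<1$. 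Hence $\liminf_n\rho^\alpha_{r_0}(R\harp n)\le\liminf_n\rho^\alpha_C(R\harp n)\le\limsup_j\rho^\alpha_C(R\harp n_j)<1$, whence $dim^\alpha_{FS}(R)=\inf_r\liminf_n\rho^\alpha_r(R\harp n)<1$.

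The hard part is the construction in the converse: producing a single finite-state machine that provably compresses $R$ below ratio $1$ as soon as its aligned $\ell$-block statistics drift away from $\alpha^{(\ell)}$. The quantitative core is merely that $\ent(\tau)/(\ent(\tau)+D(\tau\,\|\,\alpha^{(\ell)}))<1$ whenever $D(\tau\,\|\,\alpha^{(\ell)})>0$; the nuisances — the $+1$ from the codeword-length ceiling, the $O(m\ell)$ from the final partial block, and the need for a full-support target measure — are all absorbed by enlarging the super-block length $m\ell$ (pushing the per-symbol overhead below $\delta$) and by smoothing $\sigma\mapsto\sigma_\epsilon$. Everything else is routine given Theorems~\ref{dimtorho} and~\ref{dimenteq} and the Generalized Kraft Inequality (Lemma~\ref{kraft}).
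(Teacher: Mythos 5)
Your proof is essentially correct, but be aware that the paper does not prove this theorem at all: it is cited as known (Schnorr--Stimm for one direction, Bourke--Hitchcock--Vinodchandran for the other), with only the remark that the uniform case ``can easily be generalized'' to $\alpha$-normality. So you are supplying an argument where the paper supplies none, and yours is a legitimate way to do it. The forward direction is clean: the scaling identity $\rho^\alpha_r(S\harp n)=\frac{n\log k}{\ell_\alpha(S\harp n)}\rho_r(S\harp n)$ is immediate from the definitions and commutes with the minimum over $r$-state compressors; the Frequency Divergence Lemma (Lemma~\ref{fdl} with $\beta=\alpha$) gives $\ell_\alpha(R\harp n)=\ent(\alpha)n+o(n)$; and Theorem~\ref{dimenteq} together with continuity of entropy on the simplex gives $dim_{FS}(R)=\ent(\alpha)/\log k$, so the product is $1$. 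The converse, which is where the content lies, also checks out: the identities $\sum_y\sigma(y)\log(1/\sigma_\epsilon(y))=\ent(\sigma)+D(\sigma\,\|\,\sigma_\epsilon)$ and $\sum_y\sigma(y)\log(1/\alpha^{(m\ell)}(y))=m\bigl(\ent(\tau)+D(\tau\,\|\,\alpha^{(\ell)})\bigr)$ (the latter because $\alpha^{(m\ell)}$ is a product and $\tau$ is the average of the aligned $\ell$-submarginals of $\sigma$), the bound $\ent(\sigma)\le m\,\ent(\tau)$ from concavity, and the absorption of the $+1$ ceiling cost and the smoothing cost $\log\frac{1}{1-\epsilon}$ by taking $m\ge 3/\delta$ are all correct, and the block Shannon-code machine is a bona fide ILFSC with finitely many states. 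This is in effect a Billingsley-normalized rerun of the BHV block-entropy argument, which is presumably what the authors had in mind by ``easily generalized.''

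One caveat: your dispatch of non-positive $\alpha$ as ``vacuous or reduce to the support'' is not quite right. If $\alpha(a)=0$ but $R$ contains $a$ finitely often (and is otherwise $\alpha$-normal on the support of $\alpha$), then $R$ \emph{is} $\alpha$-normal under the paper's definition, yet $\ell_\alpha(R\harp n)=\infty$ for all large $n$ forces $\rho^\alpha_C(R\harp n)=0$ and hence $dim^\alpha_{FS}(R)=0$, contradicting the forward implication. This is arguably a defect of the statement rather than of your proof --- the paper only ever applies Theorem~\ref{normalchar} to positive measures, as in Theorem~\ref{independent} --- but you should make positivity an explicit hypothesis rather than claim the general case reduces to it.
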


\noindent The main theorem of this section provides a similar characterization for pairs of normal sequences that achieve finite-state mutual dimension zero.
\begin{thm}\label{independent}
Let $\alpha_1$ and $\alpha_2$ be positive probability measures on $\Sigma$. If $R_1$ is $\alpha_1$-normal and $R_2$ is $\alpha_2$-normal, then $(R_1,R_2)$ is $(\alpha_1 \times \alpha_2)$-normal if and only if $Mdim_{FS}(R_1:R_2) = 0$.
\end{thm}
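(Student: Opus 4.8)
The plan is to reduce the statement to a single scaling identity that expresses $Mdim_{FS}(R_1:R_2)$ in terms of the finite-state $(\alpha_1\times\alpha_2)$-dimension of the paired sequence $(R_1,R_2)$, and then to read off the equivalence from Theorem~\ref{normalchar}. Throughout write $\beta = \alpha_1\times\alpha_2$, a positive probability measure on $\Sigma\times\Sigma$ with $\ent(\beta) = \ent(\alpha_1)+\ent(\alpha_2) > 0$ (we may assume $k\ge 2$, the case $k=1$ being trivial).

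First I would record what normality buys for the individual sequences. Since $R_i$ is $\alpha_i$-normal, for every $\ell$ the aligned $\ell$-block frequencies satisfy $\pi_{R_i,n}^{(\ell)}(x) \to \prod_{j=1}^{\ell}\alpha_i(x[j])$ as $n\to\infty$, so $\ent(\pi_{R_i,n}^{(\ell)}) \to \ell\,\ent(\alpha_i)$; hence the $\ell^{th}$ lower and upper block entropy rates of $R_i$ both equal $\ent(\alpha_i)/\log k$, and by Theorem~\ref{dimenteq}, $dim_{FS}(R_i) = Dim_{FS}(R_i) = \ent(\alpha_i)/\log k$. Because $dim_{FS}(R_i) = Dim_{FS}(R_i)$ for $i=1,2$, the two-sided estimate in Theorem~\ref{md}(2) collapses to the exact identity
\[
Mdim_{FS}(R_1:R_2) \;=\; dim_{FS}(R_1) + dim_{FS}(R_2) - dim_{FS}(R_1,R_2) \;=\; \frac{\ent(\beta)}{\log k} - dim_{FS}(R_1,R_2).
\]

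Next I would rewrite $dim_{FS}(R_1,R_2)$ as a $\beta$-dimension. The elementary observation is that for any string $v$ over an $m$-letter alphabet and any ILFSC $C$ on it, $\rho^\beta_C(v) = \big(|v|\log m/\ell_\beta(v)\big)\rho_C(v)$, and the prefactor does not depend on $C$, so $\rho^\beta_r(v) = \big(|v|\log m/\ell_\beta(v)\big)\rho_r(v)$. Apply this with $v = (R_1,R_2)\harp n$ over the $k^2$-letter alphabet $\Sigma\times\Sigma$. Since $\ell_\beta((R_1,R_2)\harp n) = \ell_{\alpha_1}(R_1\harp n) + \ell_{\alpha_2}(R_2\harp n)$ and, by $\alpha_i$-$1$-normality, $\ell_{\alpha_i}(R_i\harp n)/n = \sum_{a\in\Sigma}\pi_{R_i,n}(a)\log(1/\alpha_i(a)) \to \ent(\alpha_i)$, we get $\ell_\beta((R_1,R_2)\harp n)/n \to \ent(\beta)$, a positive constant; multiplying a bounded sequence by one that converges to a positive constant commutes with $\liminf_n$, so after taking $\liminf_n$, then $\lim_r$, and invoking Theorem~\ref{dimtorho} for the sequence $(R_1,R_2)$ over $\Sigma\times\Sigma$, we obtain $dim^\beta_{FS}((R_1,R_2)) = (\log k^2/\ent(\beta))\,dim_{FS}((R_1,R_2))$. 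Finally, because the paper normalizes the joint compression ratio by $n\log k$ rather than $n\log k^2$, one has $\rho_r(u,w) = 2\rho_r((u,w))$, hence $dim_{FS}(R_1,R_2) = 2\,dim_{FS}((R_1,R_2)) = (\ent(\beta)/\log k)\,dim^\beta_{FS}((R_1,R_2))$ (cf.\ Corollary~\ref{jdimtorho}). Substituting into the displayed identity yields
\[
Mdim_{FS}(R_1:R_2) \;=\; \frac{\ent(\beta)}{\log k}\Big(1 - dim^\beta_{FS}\big((R_1,R_2)\big)\Big),
\]
and since $\ent(\beta) > 0$ the left-hand side vanishes iff $dim^\beta_{FS}((R_1,R_2)) = 1$, which by Theorem~\ref{normalchar} (over the alphabet $\Sigma\times\Sigma$ with the positive measure $\beta$) holds iff $(R_1,R_2)$ is $(\alpha_1\times\alpha_2)$-normal; this is the claimed equivalence. (The identity is consistent because $dim^\beta_{FS}((R_1,R_2))\le 1$ follows from $Mdim_{FS}\ge 0$, i.e.\ Theorem~\ref{md}(4).)

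I expect the main obstacle to be the change-of-normalization bookkeeping in the third step: keeping straight the two rescalings — from $n\log k^2$ to $\ell_\beta$, and the factor $2$ between $\rho_r(u,w)$ and $\rho_r((u,w))$ — and verifying carefully that $\liminf_n$ and then $\lim_r$ interact correctly with the convergent prefactor $n\log k^2/\ell_\beta((R_1,R_2)\harp n)$. It is worth remarking on the alternative, more hands-on route to the forward direction: the Kullback--Leibler chain rule
\[
D\big(\pi_{R_1,R_2,n}^{(\ell)}\,\|\,\beta^{(\ell)}\big) = \mi\big(\pi_{R_1,n}^{(\ell)};\pi_{R_2,n}^{(\ell)}\big) + D\big(\pi_{R_1,n}^{(\ell)}\,\|\,\alpha_1^{(\ell)}\big) + D\big(\pi_{R_2,n}^{(\ell)}\,\|\,\alpha_2^{(\ell)}\big)
\]
(where $\beta^{(\ell)},\alpha_i^{(\ell)}$ denote the corresponding product measures on length-$\ell$ blocks) shows that $\beta$-normality forces $\hat{\mi}_\ell(R_1;R_2) = 0$ for every $\ell$, and hence $Mdim_{FS}(R_1:R_2) = \hat{\mi}(R_1;R_2) = 0$ by Theorem~\ref{char}. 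The reason for routing the converse through $\beta$-dimension rather than through this identity is that $\hat{\mi}_\ell$ need not be monotone in $\ell$, so $\lim_\ell\hat{\mi}_\ell = 0$ does not by itself recover $\hat{\mi}_\ell = 0$ for each fixed $\ell$ (the analogous non-monotonicity of block entropy rates already illustrates the point); delegating the ``all $\ell$'' content to Theorem~\ref{normalchar} sidesteps this entirely.
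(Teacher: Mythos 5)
Your proof is correct and follows essentially the same route as the paper's: normality of the marginals forces $dim_{FS}(R_i)=Dim_{FS}(R_i)=\ent(\alpha_i)/\log k$ and collapses the two-sided bound of Theorem~\ref{md}(2) to $Mdim_{FS}(R_1:R_2)=\frac{\ent(\alpha_1)+\ent(\alpha_2)}{\log k}-dim_{FS}(R_1,R_2)$ (the content of the paper's Lemma~\ref{normmdim0}), the joint dimension is then converted to $dim^{\alpha_1\times\alpha_2}_{FS}((R_1,R_2))$ via $\ell_{\alpha_1\times\alpha_2}((R_1,R_2)\harp n)=(\ent(\alpha_1)+\ent(\alpha_2))n+o(n)$ (the paper's Lemma~\ref{product}), and Theorem~\ref{normalchar} finishes. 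The only difference is organizational: you package everything into a single exact scaling identity and read off both implications from it, whereas the paper proves the forward direction separately through Theorem~\ref{normmut} and only uses this machinery for the converse.
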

\noindent Note that, in the above theorem, the \emph{product probability measure} $(\alpha_1 \times \alpha_2)$ on $\Sigma \times \Sigma$ is defined by
\[
(\alpha_1 \times \alpha_2)(a,b) = \alpha_1(a)\alpha_2(b),
\]
for all $a,b \in \Sigma$. We present the proof of Theorem \ref{independent} at the end of this section. Thus finite-state mutual dimension provides a mechanism in which to reason about the degree to which two sequences are independent of one another at the finite-state level.

First, we make the following observation regarding $\alpha$-normal sequences over $\Sigma \times \Sigma$.

\begin{obs}\label{obs4}
Let $\alpha$ be a probability measure on $\Sigma \times \Sigma$. If a sequence $(R_1,R_2) \in (\Sigma \times \Sigma)^\infty$ is $\alpha$-normal, then $R_1$ is $\alpha_1$-normal and $R_2$ is $\alpha_2$-normal.
\end{obs}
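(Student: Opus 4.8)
The plan is to transfer $\alpha$-normality of $(R_1,R_2)$ down to $\alpha_1$-normality of $R_1$ one block-length at a time, using the map that sends an aligned block of $(R_1,R_2)$ to its first coordinate, and then to obtain $\alpha_2$-normality of $R_2$ by the symmetric computation.

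First I would fix $\ell \in \mathbb{Z}^+$ and a string $x \in \Sigma^\ell$, and record the counting identity
\[
\#_\Box\!\left(x,\, R_1 \harp n\ell\right) \;=\; \sum_{y \in \Sigma^\ell}\#_\Box\!\left((x,y),\, (R_1,R_2)\harp n\ell\right),
\]
valid for every $n \in \mathbb{Z}^+$. This holds because, for each $m$, the $m^{th}$ aligned length-$\ell$ block of $R_1 \harp n\ell$ is the first coordinate of the $m^{th}$ aligned length-$\ell$ block of $(R_1,R_2)\harp n\ell$; hence that block of $R_1$ equals $x$ exactly when the corresponding block of $(R_1,R_2)$ equals $(x,y)$ for the unique $y \in \Sigma^\ell$ built from the $R_2$-coordinates at those positions. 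Dividing by $n$ gives
\[
\pi_{R_1,n}^{(\ell)}(x) \;=\; \sum_{y \in \Sigma^\ell}\pi_{(R_1,R_2),n}^{(\ell)}\!\big((x,y)\big).
\]

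Next, since $(R_1,R_2)$ is $\alpha$-normal and $\Sigma^\ell$ is finite, I would pass the limit $n \to \infty$ through this finite sum, using $\lim_{n \to \infty}\pi_{(R_1,R_2),n}^{(\ell)}((x,y)) = \prod_{i=1}^{\ell}\alpha(x[i],y[i])$, and then factor the resulting product-sum by the distributive law:
\[
\lim_{n \to \infty}\pi_{R_1,n}^{(\ell)}(x) \;=\; \sum_{y \in \Sigma^\ell}\ \prod_{i=1}^{\ell}\alpha(x[i],y[i]) \;=\; \prod_{i=1}^{\ell}\ \sum_{b \in \Sigma}\alpha(x[i],b) \;=\; \prod_{i=1}^{\ell}\alpha_1(x[i]).
\]
This is exactly the statement that $R_1$ is $\alpha_1$-$\ell$-normal; since $\ell$ and $x$ were arbitrary, $R_1$ is $\alpha_1$-normal. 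Summing over $x \in \Sigma^\ell$ in the first coordinate instead (so that the surviving product is $\prod_{i=1}^{\ell}\sum_{a\in\Sigma}\alpha(a,y[i]) = \prod_{i=1}^{\ell}\alpha_2(y[i])$) yields in the same way that $R_2$ is $\alpha_2$-normal.

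There is no serious obstacle here: the proof is essentially the distributive law together with the commutation of a limit with a finite sum. The only points needing a little care are making the block bookkeeping precise — that the first-coordinate projection respects the block index $m$, so no occurrence is double counted or missed — and observing that in the definition of $\alpha$-normality the value $\alpha(z)$ for a length-$\ell$ block $z$ means the product $\prod_i \alpha(z[i])$, which is what licenses the factorization in the displayed computation.
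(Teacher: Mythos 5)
Your proof is correct, and the paper in fact states this Observation without any proof, so you are supplying exactly the routine argument the authors omit: project the aligned block count of $(R_1,R_2)$ onto its first coordinate, pass the limit through the finite sum over $y \in \Sigma^\ell$, and factor via the distributive law to recover the marginal product $\prod_i \alpha_1(x[i])$. The two points you flag as needing care (the block bookkeeping and the convention that $\alpha(z)$ for a block means $\prod_i \alpha(z[i])$) are indeed the only subtleties, and you handle both correctly.
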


\noindent Lutz proved the following theorem about $\alpha$-normal sequences \cite{jLutz11}.

\begin{thm}[\cite{jLutz11}]\label{norm}
If $\alpha$ is a probability measure on $\Sigma$, then, for every $\alpha$-normal sequence $R \in \Sigma^\infty$,
\[
dim_{FS}(R) = Dim_{FS}(R) = \frac{\emph{\ent}(\alpha)}{\log k}.
\]
\end{thm}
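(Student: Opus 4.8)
The plan is to reduce the statement to the Bourke--Hitchcock--Vinodchandran characterization $dim_{FS}(R)=\ent(R)$ and $Dim_{FS}(R)=\hat{\ent}(R)$ (Theorem~\ref{dimenteq}), and then to observe that $\alpha$-normality completely determines the block entropy rates of $R$. The mechanism is that $\alpha$-normality forces every aligned block frequency of $R$ to converge to the corresponding value of the product measure built from $\alpha$, while Shannon entropy of a product measure is additive over coordinates.

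First I would fix $\ell\in\mathbb{Z}^+$ and let $\alpha^{(\ell)}$ denote the probability measure on $\Sigma^{\ell}$ defined by $\alpha^{(\ell)}(x)=\prod_{i=0}^{\ell-1}\alpha(x[i])$; note that $\alpha$-$\ell$-normality of $R$ is exactly the assertion that $\pi_{R,n}^{(\ell)}(x)\to\alpha^{(\ell)}(x)$ as $n\to\infty$ for every $x\in\Sigma^{\ell}$, i.e.\ $\pi_{R,n}^{(\ell)}\to\alpha^{(\ell)}$ in the probability simplex over the finite set $\Sigma^{\ell}$. Since $\ent$ is continuous on that compact simplex (under the convention $0\log\frac{1}{0}=0$), this yields $\ent(\pi_{R,n}^{(\ell)})\to\ent(\alpha^{(\ell)})$, and a routine coordinatewise factorization of $\alpha^{(\ell)}$ (cf.\ \cite{bCovTho06}) gives $\ent(\alpha^{(\ell)})=\ell\,\ent(\alpha)$. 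Because $R$ is $\alpha$-normal this holds for every $\ell$, so both $\liminf_{n}\ent(\pi_{R,n}^{(\ell)})$ and $\limsup_{n}\ent(\pi_{R,n}^{(\ell)})$ equal $\ell\,\ent(\alpha)$, whence
\[
\ent_{\ell}(R)=\hat{\ent}_{\ell}(R)=\frac{\ell\,\ent(\alpha)}{\ell\log k}=\frac{\ent(\alpha)}{\log k}
\]
for all $\ell\in\mathbb{Z}^+$.

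Finally, since $\ent_{\ell}(R)$ and $\hat{\ent}_{\ell}(R)$ are constant in $\ell$, letting $\ell\to\infty$ gives $\ent(R)=\hat{\ent}(R)=\ent(\alpha)/\log k$, and Theorem~\ref{dimenteq} then delivers $dim_{FS}(R)=Dim_{FS}(R)=\ent(\alpha)/\log k$. The only points needing care are bookkeeping: matching the paper's definition of $\alpha$-$\ell$-normality (phrased via the aligned frequencies $\pi_{R,n}$) to convergence against the $\ell$-fold product of $\alpha$, and using continuity of $\ent$ on the finite simplex to carry that convergence over to entropies --- so there is no genuinely hard step. (One could instead avoid Theorem~\ref{dimenteq} by feeding the same two facts into the sandwich provided by Lemma~\ref{hc} and Lemma~\ref{huffc}: for fixed $\ell$ and $r$, letting $n\to\infty$ shows that $\rho_r(R)$ and $\hat{\rho}_r(R)$ differ from $\ent(\alpha)/\log k$ by an amount that tends to $0$ as $\ell\to\infty$, and then $r\to\infty$ together with Theorem~\ref{dimtorho} gives the same conclusion.)
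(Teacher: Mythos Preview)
The paper does not give its own proof of this theorem; it is quoted from \cite{jLutz11} as an external result. Your argument is correct: $\alpha$-normality is precisely the statement that the aligned block frequencies $\pi_{R,n}^{(\ell)}$ converge to the $\ell$-fold product $\alpha^{(\ell)}$, continuity of Shannon entropy on the finite simplex carries this to $\ent(\pi_{R,n}^{(\ell)})\to\ell\,\ent(\alpha)$, and Theorem~\ref{dimenteq} then yields the conclusion. The one notational point worth flagging is that in the paper's definition of $\alpha$-$\ell$-normality, the symbol $\alpha(x)$ for $x\in\Sigma^{\ell}$ tacitly denotes the product $\prod_{i=0}^{\ell-1}\alpha(x[i])$, which is exactly your $\alpha^{(\ell)}(x)$; you handled this correctly.

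For comparison, the proof in \cite{jLutz11} goes through the finite-state $\beta$-dimension machinery rather than block entropy rates. From the Frequency Divergence Lemma (Lemma~\ref{fdl} here) with $\beta=\alpha$ one gets $\ell_{\alpha}(R\upharpoonright n)=\ent(\alpha)\,n+o(n)$ whenever $R\in FREQ^{\alpha}$, so the ordinary and $\alpha$-compression ratios are asymptotically related by the constant factor $\ent(\alpha)/\log k$; combining this with $dim_{FS}^{\alpha}(R)=1$ for $\alpha$-normal $R$ (Theorem~\ref{normalchar}) gives $dim_{FS}(R)=Dim_{FS}(R)=\ent(\alpha)/\log k$. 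Your route is more self-contained relative to the present paper, since it uses only Theorem~\ref{dimenteq} and elementary properties of entropy, and it avoids introducing the $\beta$-dimension apparatus altogether; the original route, on the other hand, is what makes the connection to Theorem~\ref{normalchar} and Lemma~\ref{fdl} explicit, which is the viewpoint the paper leans on later in the proof of Theorem~\ref{independent}.
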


\noindent The following corollary follows from Theorem \ref{norm}.

\begin{cor}\label{jointnorm}
If $\alpha$ is a probability measure on $\Sigma \times \Sigma$, then, for every $\alpha$-normal sequence $(R_1,R_2) \in (\Sigma \times \Sigma)^\infty$,
\[
dim_{FS}(R_1,R_2) = Dim_{FS}(R_1,R_2) = \frac{\emph{\ent}(\alpha)}{\log k}.
\]
\end{cor}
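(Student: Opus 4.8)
The plan is to deduce this directly from Theorem~\ref{norm} applied with the alphabet $\Sigma \times \Sigma$ (which has $k^2$ symbols) in place of $\Sigma$, together with the bookkeeping that relates the \emph{joint} finite-state dimensions $dim_{FS}(S,T)$ and $Dim_{FS}(S,T)$ to the ordinary finite-state dimensions of the paired sequence $(S,T)$ viewed as a single sequence over $\Sigma \times \Sigma$.

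First I would record the normalization identity. For any ILFSC $C$ on $\Sigma \times \Sigma$ and any $u,w \in \Sigma^n$ we have
\[
\rho_C(u,w) = \frac{|C((u,w))|}{n \log k} = 2\,\rho_C((u,w)),
\]
since the compression ratio of the string $(u,w) \in (\Sigma \times \Sigma)^n$, computed over the alphabet $\Sigma \times \Sigma$, is normalized by $n \log k^2 = 2n\log k$ rather than by $n \log k$ (this is exactly the identity already used in the proof of Lemma~\ref{low2}). Taking minima over $r$-state ILFSCs on $\Sigma \times \Sigma$ gives $\rho_r(u,w) = 2\,\rho_r((u,w))$ for every $r$, and then passing to the $\liminf$, the $\limsup$, and $\lim_{r\to\infty}$ yields $\rho(S,T) = 2\,\rho((S,T))$ and $\hat{\rho}(S,T) = 2\,\hat{\rho}((S,T))$. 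Invoking Corollary~\ref{jdimtorho} together with Theorem~\ref{dimtorho} applied over the alphabet $\Sigma \times \Sigma$, this says
\[
dim_{FS}(S,T) = 2\,dim_{FS}((S,T)) \quad\text{and}\quad Dim_{FS}(S,T) = 2\,Dim_{FS}((S,T)),
\]
where on the right the pair $(S,T)$ is treated as one sequence over the $k^2$-symbol alphabet $\Sigma \times \Sigma$.

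Next I would apply Theorem~\ref{norm} to this $k^2$-symbol alphabet: since $(R_1,R_2)$ is $\alpha$-normal for the probability measure $\alpha$ on $\Sigma \times \Sigma$, the theorem gives
\[
dim_{FS}((R_1,R_2)) = Dim_{FS}((R_1,R_2)) = \frac{\ent(\alpha)}{\log k^2} = \frac{\ent(\alpha)}{2\log k}.
\]
Feeding this into the identities of the previous paragraph gives $dim_{FS}(R_1,R_2) = Dim_{FS}(R_1,R_2) = \ent(\alpha)/\log k$, which is the claim.

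The only point requiring care is the factor-of-two normalization convention: the joint compression ratios and joint dimensions are normalized as though the alphabet were $\Sigma$ (dividing by $\log k$), whereas the intrinsic finite-state dimension of a sequence over $\Sigma \times \Sigma$ divides by $\log k^2$. Keeping these straight, and observing that Theorem~\ref{norm} (like the compressor characterizations it rests on) holds verbatim over any finite alphabet and so legitimately applies to $\Sigma \times \Sigma$, is all that is needed; the rest is a direct substitution.
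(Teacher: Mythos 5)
Your proposal is correct and is exactly the derivation the paper intends: the paper gives no proof beyond asserting that the corollary ``follows from Theorem~\ref{norm},'' and your argument supplies the natural details, namely applying Theorem~\ref{norm} over the $k^2$-symbol alphabet $\Sigma\times\Sigma$ and tracking the factor-of-two normalization $\rho_r(u,w)=2\rho_r((u,w))$ already used in Lemmas~\ref{low2} and~\ref{1to2}. Nothing further is needed.
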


\noindent Our first theorem of this section is a ``mutual'' version of Theorem \ref{norm}.

\begin{thm}\label{normmut}
If $\alpha$ is a probability measure on $\Sigma \times \Sigma$, then, for every $\alpha$-normal sequence $(R_1,R_2) \in (\Sigma \times \Sigma)^\infty$,
\[
mdim_{FS}(R_1:R_2) = Mdim_{FS}(R_1:R_2) = \frac{\emph{\mi}(\alpha_1;\alpha_2)}{\log k}.
\]
\end{thm}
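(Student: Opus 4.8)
The plan is to sandwich the lower and upper block mutual information rates of $R_1$ and $R_2$ between two expressions that both evaluate to $\mi(\alpha_1;\alpha_2)/\log k$, and then pass to finite-state mutual dimension via Theorem \ref{char}.

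First I would extract the marginal normality. By Observation \ref{obs4}, since $(R_1,R_2)$ is $\alpha$-normal, the sequence $R_1$ is $\alpha_1$-normal and $R_2$ is $\alpha_2$-normal. Applying Theorem \ref{norm} to $R_1$ and to $R_2$, and then Theorem \ref{dimenteq}, yields $\ent(R_1) = \hat\ent(R_1) = \ent(\alpha_1)/\log k$ and $\ent(R_2) = \hat\ent(R_2) = \ent(\alpha_2)/\log k$. Applying Corollary \ref{jointnorm} to $(R_1,R_2)$, and then Corollary \ref{dimenteqcor}, yields $\ent(R_1,R_2) = \hat\ent(R_1,R_2) = \ent(\alpha)/\log k$. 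The role of normality is exactly to force these six block-entropy rates to be honest limits, so that each lower rate equals the corresponding upper rate.

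Next I would invoke Theorem \ref{mirp}. Part (1) gives $\ent(R_1) + \ent(R_2) - \hat\ent(R_1,R_2) \le \mi(R_1;R_2) \le \hat\ent(R_1) + \hat\ent(R_2) - \hat\ent(R_1,R_2)$; substituting the values just computed, both ends equal $\bigl(\ent(\alpha_1) + \ent(\alpha_2) - \ent(\alpha)\bigr)/\log k$, which by the definition of Shannon mutual information is $\mi(\alpha_1;\alpha_2)/\log k$. Hence $\mi(R_1;R_2) = \mi(\alpha_1;\alpha_2)/\log k$. Part (2) gives $\ent(R_1) + \ent(R_2) - \ent(R_1,R_2) \le \hat\mi(R_1;R_2) \le \hat\ent(R_1) + \hat\ent(R_2) - \ent(R_1,R_2)$, and the same substitution collapses both ends to $\mi(\alpha_1;\alpha_2)/\log k$, so $\hat\mi(R_1;R_2) = \mi(\alpha_1;\alpha_2)/\log k$. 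Finally, Theorem \ref{char} identifies $mdim_{FS}(R_1:R_2) = \mi(R_1;R_2)$ and $Mdim_{FS}(R_1:R_2) = \hat\mi(R_1;R_2)$, completing the proof.

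The point worth care here is not a real obstacle but bookkeeping: one must apply Theorem \ref{norm} to the marginal measures $\alpha_1$ and $\alpha_2$ (not to $\alpha$) and confirm that Observation \ref{obs4} applies (it does, being stated for an arbitrary measure on $\Sigma \times \Sigma$). An alternative, more hands-on route avoids the dimension-theoretic lemmas entirely: $\alpha$-normality of $(R_1,R_2)$ means $\pi_{R_1,R_2,n}^{(\ell)}$ converges, as $n \to \infty$, to the $\ell$-fold product of $\alpha$ on $\Sigma^\ell \times \Sigma^\ell$, with marginals the $\ell$-fold products of $\alpha_1$ and $\alpha_2$; by continuity of Shannon entropy on the finite simplex and additivity of mutual information over independent coordinates, $\mi(\pi_{R_1,n}^{(\ell)};\pi_{R_2,n}^{(\ell)}) \to \ell\,\mi(\alpha_1;\alpha_2)$, so $\mi_\ell(R_1;R_2) = \hat\mi_\ell(R_1;R_2) = \mi(\alpha_1;\alpha_2)/\log k$ for every $\ell$, and the limits in (\ref{lmir}) and (\ref{umir}) give the same value; one then applies Theorem \ref{char}. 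Either way the argument is short.
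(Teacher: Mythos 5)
Your proof is correct and is essentially the same argument as the paper's: both extract marginal normality via Observation \ref{obs4}, pin down all the individual and joint quantities using Theorem \ref{norm} and Corollary \ref{jointnorm}, and then squeeze the mutual quantity between the two sandwich inequalities so that both bounds collapse to $(\ent(\alpha_1)+\ent(\alpha_2)-\ent(\alpha))/\log k$. The only (cosmetic) difference is that you phrase the sandwich in terms of block mutual information rates via Theorem \ref{mirp} and then convert with Theorem \ref{char}, whereas the paper applies the corresponding inequalities of Theorem \ref{md} directly to $mdim_{FS}$ and $Mdim_{FS}$.
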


\begin{proof}
If $(R_1,R_2)$ is $\alpha$-normal, then, by Observation \ref{obs4}, $R_1$ is $\alpha_1$-normal and $R_2$ is $\alpha_2$-normal. By the properties of finite-state mutual dimension listed in Theorem \ref{md}, we have
\[
dim_{FS}(R_1) + dim_{FS}(R_2) - dim_{FS}(R_1,R_2) \leq mdim_{FS}(S:T)
\]
and
\[
mdim_{FS}(S:T) \leq Dim_{FS}(R_1) + Dim_{FS}(R_2) - dim_{FS}(R_1,R_2).
\]
Furthermore, by applying Theorem \ref{norm} and Corollary \ref{jointnorm}, we obtain
\[
\frac{\ent(\alpha_1)}{\log k} + \frac{\ent(\alpha_2)}{\log k} - \frac{\ent(\alpha)}{\log k} \leq mdim_{FS}(R_1:R_2) \leq \frac{\ent(\alpha_1)}{\log k} + \frac{\ent(\alpha_2)}{\log k} - \frac{\ent(\alpha)}{\log k}.
\]
Finally, by the definition of Shannon mutual information,
\[
\frac{\mi(\alpha_1;\alpha_2)}{\log k} \leq mdim_{FS}(R_1:R_2) \leq \frac{\mi(\alpha_1;\alpha_2)}{\log k}.
\]
A similar argument can be given to show that $Mdim_{FS}(R_1:R_2) = \frac{\mi(\alpha_1;\alpha_2)}{\log k}$. \qedhere
\end{proof}

\begin{defn}
Let $\alpha$ and $\beta$ be probability measures on $\Sigma$. The \emph{Kullback-Leibler divergence} between $\alpha$ and $\beta$ is
\[
\emph{D}(\alpha\,||\,\beta) = \displaystyle\sum_{a \in \Sigma}\alpha(a)\frac{\alpha(a)}{\beta(a)}.
\]
\end{defn}
\noindent Lutz also proved the following lemma regarding the Shannon self-information of a sequence with respect to a probability measure.
\begin{lem}[Frequency Divergence Lemma \cite{jLutz11}]\label{fdl}
If $\alpha$ and $\beta$ are positive probability measure on $\Sigma$, then, for all $S \in FREQ^{\alpha}$,
\[
\ell_{\beta}(S \upharpoonright n) = (\textnormal{\textrm{H}}(\alpha) + D(\alpha\,||\,\beta))n + o(n).
\]
\end{lem}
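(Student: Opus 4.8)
The plan is to compute $\ell_{\beta}(S \uh n)$ directly from its definition by sorting the positions of $S \uh n$ according to the symbol occurring there, then feed in the frequency hypothesis $S \in FREQ^{\alpha}$, and finish with a one-line entropy identity.

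Concretely, write $\#_\Box(a, S \uh n)$ for the number of positions $i$ of $S \uh n$ with $S[i] = a$. Regrouping the defining sum $\ell_{\beta}(S \uh n) = \sum_{i}\log\frac{1}{\beta(S[i])}$ by symbol gives
\[
\ell_{\beta}(S \uh n) = \sum_{a \in \Sigma}\#_\Box(a, S \uh n)\,\log\frac{1}{\beta(a)}.
\]
Since $\beta$ is positive, each coefficient $\log\frac{1}{\beta(a)}$ is a finite constant, and since $|\Sigma| = k$ this is a sum of finitely many terms. The hypothesis $S \in FREQ^{\alpha}$ means precisely that $S$ is $\alpha$-$1$-normal, i.e. $\pi_{S,n}(a) = \#_\Box(a, S \uh n)/n \to \alpha(a)$ for each $a \in \Sigma$, equivalently $\#_\Box(a, S \uh n) = \alpha(a)n + o(n)$. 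Substituting and absorbing the $k$-many $o(n)$ contributions (each scaled by a bounded constant) into a single $o(n)$ term yields
\[
\ell_{\beta}(S \uh n) = \left(\sum_{a \in \Sigma}\alpha(a)\log\frac{1}{\beta(a)}\right)n + o(n).
\]

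It then remains only to identify the coefficient. Splitting $\log\frac{1}{\beta(a)} = \log\frac{1}{\alpha(a)} + \log\frac{\alpha(a)}{\beta(a)}$ termwise, which is legitimate because $\alpha$ and $\beta$ are both positive, gives
\[
\sum_{a \in \Sigma}\alpha(a)\log\frac{1}{\beta(a)} = \sum_{a \in \Sigma}\alpha(a)\log\frac{1}{\alpha(a)} + \sum_{a \in \Sigma}\alpha(a)\log\frac{\alpha(a)}{\beta(a)} = \textrm{H}(\alpha) + D(\alpha\,||\,\beta),
\]
where $D(\alpha\,||\,\beta) = \sum_{a \in \Sigma}\alpha(a)\log\frac{\alpha(a)}{\beta(a)}$ is the Kullback--Leibler divergence. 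Combining this with the previous display yields the claimed asymptotics. There is no real obstacle in this argument: the only points needing care are that positivity of $\alpha$ and $\beta$ keeps every logarithm finite and makes the divergence well defined, and that finiteness of $\Sigma$ permits the several $o(n)$ error terms to be pooled into one, so that only the single-symbol frequency convergence guaranteed by $\alpha$-$1$-normality is required.
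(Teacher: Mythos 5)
Your proof is correct, and since the paper states this lemma as a cited result from \cite{jLutz11} without reproducing a proof, there is nothing to compare against: your regrouping-by-symbol argument is exactly the standard one. One small point worth flagging: you use the standard Kullback--Leibler divergence $D(\alpha\,||\,\beta) = \sum_{a \in \Sigma}\alpha(a)\log\frac{\alpha(a)}{\beta(a)}$, whereas the paper's displayed definition omits the $\log$ (evidently a typo); your identification of the coefficient as $\textrm{H}(\alpha) + D(\alpha\,||\,\beta)$ is the correct one and implicitly corrects that slip.
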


\begin{lem}\label{product}
If $\alpha_1$ and $\alpha_2$ are positive probability measures on $\Sigma$, then, for all $S \in FREQ^{\alpha_1}$ and $T \in FREQ^{\alpha_2}$,
\[
\ell_{\alpha_1 \times \alpha_2}((S,T) \upharpoonright n) = (\textnormal{\textrm{H}}(\alpha_1) + \textnormal{\textrm{H}}(\alpha_2))n + o(n).
\]
\end{lem}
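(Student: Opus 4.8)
The plan is to reduce the statement to two applications of the Frequency Divergence Lemma (Lemma~\ref{fdl}), after observing that self-information with respect to a product measure splits additively over the two coordinates.

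First I would unfold the definition of $\ell_{\alpha_1 \times \alpha_2}$. The $i$-th symbol of $(S,T)$ is the pair $(S[i],T[i]) \in \Sigma \times \Sigma$, and $(\alpha_1 \times \alpha_2)(a,b) = \alpha_1(a)\alpha_2(b)$, so since $\log$ turns the product in the denominator into a sum,
\[
\ell_{\alpha_1 \times \alpha_2}((S,T) \upharpoonright n) = \sum_{i=0}^{n-1} \log \frac{1}{\alpha_1(S[i])\alpha_2(T[i])} = \ell_{\alpha_1}(S \upharpoonright n) + \ell_{\alpha_2}(T \upharpoonright n).
\]
Here positivity of $\alpha_1$ and $\alpha_2$ is used so that no term is infinite and so that $\alpha_1 \times \alpha_2$ is again a positive measure on $\Sigma \times \Sigma$.

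Next I would apply Lemma~\ref{fdl} twice. Since $S \in FREQ^{\alpha_1}$ means exactly that $S$ is $\alpha_1$-$1$-normal, taking $\alpha = \beta = \alpha_1$ in Lemma~\ref{fdl} and using $D(\alpha_1 \,||\, \alpha_1) = 0$ gives $\ell_{\alpha_1}(S \upharpoonright n) = \textrm{H}(\alpha_1)\,n + o(n)$; likewise $\ell_{\alpha_2}(T \upharpoonright n) = \textrm{H}(\alpha_2)\,n + o(n)$. Adding these two estimates to the identity above, and noting that the sum of two $o(n)$ terms is again $o(n)$, yields
\[
\ell_{\alpha_1 \times \alpha_2}((S,T) \upharpoonright n) = (\textrm{H}(\alpha_1) + \textrm{H}(\alpha_2))\,n + o(n),
\]
which is the claim. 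The argument is entirely routine; there is no genuine obstacle. The only points that warrant care are checking that $FREQ^{\alpha_i}$ is precisely the hypothesis ($\alpha_i$-$1$-normality) needed to invoke Lemma~\ref{fdl}, and tracking the positivity assumption so that the cited lemma's hypotheses transfer to each coordinate.
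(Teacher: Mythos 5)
Your proof is correct and follows essentially the same route as the paper's: split $\ell_{\alpha_1\times\alpha_2}((S,T)\upharpoonright n)$ into $\ell_{\alpha_1}(S\upharpoonright n)+\ell_{\alpha_2}(T\upharpoonright n)$ via $\log\frac{1}{\alpha_1(a)\alpha_2(b)}=\log\frac{1}{\alpha_1(a)}+\log\frac{1}{\alpha_2(b)}$, then apply the Frequency Divergence Lemma to each coordinate with $\beta=\alpha_i$ so the divergence terms vanish. No gaps.
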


\begin{proof}
Assume the hypothesis, then,
\begin{align*}
&\ell_{\alpha_1 \times \alpha_2}((S,T) \upharpoonright n)\\
&= \displaystyle\sum^{n-1}_{i=0}\log\frac{1}{(\alpha_1 \times \alpha_2)((S,T)[i])}\\
&= \displaystyle\sum^{n-1}_{i=0}-\log(\alpha_1 \times \alpha_2)((S,T)[i])\\
&= \displaystyle\sum^{n-1}_{i=0}-\log(\alpha_1(S[i])\alpha_2(T[i]))\\
&= \displaystyle\sum^{n-1}_{i=0}-\big[\log \alpha_1(S[i]) + \log \alpha_2(T[i]) \big]\\
&= \displaystyle\sum^{n-1}_{i=0}-\log \alpha_1(S[i]) + \displaystyle\sum^{n-1}_{i=0}-\log \alpha_2(T[i])\\
&= \displaystyle\sum^{n-1}_{i=0}\frac{1}{\log \alpha_1(S[i])} + \displaystyle\sum^{n-1}_{i=0}\frac{1}{\log \alpha_2(T[i])}\\
&= \ell_{\alpha_1}(S \upharpoonright n) + \ell_{\alpha_2}(T \upharpoonright n)
\end{align*}
By the above equality and the Frequency Divergence Lemma, we have
\begin{align*}
&\ell_{\alpha_1 \times \alpha_2}((S,T) \upharpoonright n)\\
&=(\textnormal{\textrm{H}}(\alpha_1) + \textnormal{\textrm{H}}(\alpha_2) + D(\alpha_1\,||\,\alpha_1) +  D(\alpha_2\,||\,\alpha_2))n + o(n)\\
&= (\textnormal{\textrm{H}}(\alpha_1) + \textnormal{\textrm{H}}(\alpha_2))n + o(n). \qedhere
\end{align*}
\end{proof}

\begin{lem}\label{normmdim0}
Let $\alpha_1$ and $\alpha_2$ be positive probability measures on $\Sigma$. If $R_1$ is $\alpha_1$-normal, $R_2$ is $\alpha_2$-normal, and
\[
Mdim_{FS}(R_1:R_2) = 0,
\]
then
\[
dim_{FS}(R_1,R_2) = \frac{\emph{\ent}(\alpha_1) + \emph{\ent}(\alpha_1)}{\log k}.
\]
\end{lem}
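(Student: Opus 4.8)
The plan is to obtain the lemma directly from two facts already established: the sandwiching of finite-state mutual dimension between finite-state dimensions in statement~2 of Theorem~\ref{md}, and the exact value of the finite-state dimension of a normal sequence in Theorem~\ref{norm}. Note first that $dim_{FS}(R_1,R_2)$ is the finite-state dimension of the \emph{pair} $(R_1,R_2)$ regarded as a sequence over $\Sigma\times\Sigma$, and we are \emph{not} assuming $(R_1,R_2)$ is $\beta$-normal for any $\beta$ on $\Sigma\times\Sigma$, so Theorem~\ref{norm} cannot be applied to it directly; the hypothesis $Mdim_{FS}(R_1:R_2)=0$ is precisely what lets us pin down its value.

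Concretely, I would apply statement~2 of Theorem~\ref{md} with $S=R_1$ and $T=R_2$:
\[
dim_{FS}(R_1) + dim_{FS}(R_2) - dim_{FS}(R_1,R_2) \leq Mdim_{FS}(R_1:R_2) \leq Dim_{FS}(R_1) + Dim_{FS}(R_2) - dim_{FS}(R_1,R_2).
\]
Substituting $Mdim_{FS}(R_1:R_2)=0$, the left inequality rearranges to $dim_{FS}(R_1,R_2) \geq dim_{FS}(R_1) + dim_{FS}(R_2)$ and the right inequality rearranges to $dim_{FS}(R_1,R_2) \leq Dim_{FS}(R_1) + Dim_{FS}(R_2)$. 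Then I would invoke Theorem~\ref{norm} twice: since $R_1$ is $\alpha_1$-normal and $R_2$ is $\alpha_2$-normal,
\[
dim_{FS}(R_1)=Dim_{FS}(R_1)=\frac{\ent(\alpha_1)}{\log k}
\quad\text{and}\quad
dim_{FS}(R_2)=Dim_{FS}(R_2)=\frac{\ent(\alpha_2)}{\log k}.
\]
Because the lower and upper finite-state dimensions of each $R_i$ coincide, the two sides of the sandwich collapse to the common value $\tfrac{\ent(\alpha_1)+\ent(\alpha_2)}{\log k}$, forcing $dim_{FS}(R_1,R_2) = \tfrac{\ent(\alpha_1)+\ent(\alpha_2)}{\log k}$ (the second summand in the statement should read $\ent(\alpha_2)$, not $\ent(\alpha_1)$).

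There is no real obstacle; the proof is a two-line deduction. The only points meriting attention are (i) that one must use \emph{both} halves of statement~2 of Theorem~\ref{md} — the inequality $Mdim_{FS}\geq\cdots$ yields the ``$\geq$'' direction and $Mdim_{FS}\leq\cdots$ the ``$\leq$'' direction — and (ii) that Theorem~\ref{norm} is what makes $dim_{FS}(R_i)=Dim_{FS}(R_i)$, so the bounds meet. If a more self-contained argument were preferred, one could instead work through the block-entropy characterization of Corollary~\ref{dimenteqcor}: the upper bound would follow from subadditivity of joint block entropy (Corollary~\ref{Shac}(1)) together with the convergence of block frequencies (hence of block entropies) for normal sequences, and the lower bound from the same convergence combined with $Mdim_{FS}(R_1:R_2)=0$ in the form $\hat{\mi}(R_1;R_2)=0$ via Theorem~\ref{char}; but this route is strictly longer, so I would present the first one.
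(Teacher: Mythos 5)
Your proof is correct and is essentially identical to the paper's: both apply statement~2 of Theorem~\ref{md} with $Mdim_{FS}(R_1:R_2)=0$ to sandwich $dim_{FS}(R_1,R_2)$ between $dim_{FS}(R_1)+dim_{FS}(R_2)$ and $Dim_{FS}(R_1)+Dim_{FS}(R_2)$, and then collapse the bounds via Theorem~\ref{norm}. You are also right that the lemma's statement contains a typo and should read $\ent(\alpha_1)+\ent(\alpha_2)$.
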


\begin{proof}
By Theorem \ref{md},
\[
dim_{FS}(R_1) + dim_{FS}(R_2) - dim_{FS}(R_1,R_2) \leq 0 \leq Dim_{FS}(R_1) + Dim_{FS}(R_2) - dim_{FS}(R_1,R_2).
\]
Therefore, by Theorem \ref{norm},
\begin{align*}
dim_{FS}(R_1,R_2) &\geq dim_{FS}(R_1) + dim_{FS}(R_2)\\
									&= \frac{\ent(\alpha_1) + \ent(\alpha_2)}{\log k}
\end{align*}
and
\begin{align*}
dim_{FS}(R_1,R_2) &\leq Dim_{FS}(R_1) + Dim_{FS}(R_2)\\
									&= \frac{\ent(\alpha_1) + \ent(\alpha_2)}{\log k}.
\end{align*}
Thus,
\[
dim_{FS}(R_1,R_2) = \frac{\ent(\alpha_1) + \ent(\alpha_2)}{\log k}. \qedhere
\]
\end{proof}

We now prove the main theorem of this section.

\begin{proof}[Proof of Theorem \ref{independent}]
Assume the hypothesis and that $(R_1,R_2)$ is $(\alpha_1 \times \alpha_2)$-normal. By Theorem \ref{normmut}, we have
\begin{align*}
Mdim_{FS}(R_1:R_2) &= \frac{\mi(\alpha_1;\alpha_2)}{\log k}\\
									 &= \frac{\ent(\alpha_1)}{\log k} + \frac{\ent(\alpha_2)}{\log k} - \frac{\ent(\alpha_1 \times \alpha_2)}{\log k}\\
									 &= 0.
\end{align*}
Now, we prove the converse. Assume that
\[
Mdim_{FS}(R_1:R_2) = 0.
\]
By Lemma \ref{normmdim0}, we have
\[
dim_{FS}(R_1,R_2) = \frac{\ent(\alpha_1) + \ent(\alpha_2)}{\log k}.
\]
By the above inequality and since $\rho_r(R_1,R_2)$ decreases in $r$, we know that, for all $r \in \mathbb{Z}^+$,
\begin{align*}
\rho_r(R_1,R_2) &> dim_{FS}(R_1,R_2)\\
						&= \frac{\ent(\alpha_1) + \ent(\alpha_2)}{\log k},
\end{align*}
which implies that
\begin{align}\label{ctoh}
C_r((R_1,R_2) \harp n) &\geq \frac{(\ent(\alpha_1) + \ent(\alpha_2))n\log k}{\log k} \nonumber\\
									 &= (\ent(\alpha_1) + \ent(\alpha_2))n,
\end{align}
for sufficiently large $n \in \mathbb{N}$. By Lemma \ref{product} and (\ref{ctoh}),
\begin{align*}
dim^{\alpha_1 \times \alpha_2}_{FS}((R_1,R_2)) &= \displaystyle\lim_{r \rightarrow \infty}\displaystyle\liminf_{n \rightarrow \infty}\rho^{\alpha_1 \times \alpha_2}_r((R_1,R_2))\\
																					 &= \displaystyle\lim_{r \rightarrow \infty}\displaystyle\liminf_{n \rightarrow \infty}\frac{C_r((R_1,R_2)\harp n)}{\ell_{\alpha_1 \times \alpha_2}((R_1,R_2) \harp n)}\\
																					 &\geq \displaystyle\lim_{r \rightarrow \infty}\displaystyle\liminf_{n \rightarrow \infty}\frac{(\ent(\alpha_1) + \ent(\alpha_2))n}{(\ent(\alpha_1) + \ent(\alpha_2))n + o(n)}\\
																					 &=\displaystyle\lim_{r \rightarrow \infty}\displaystyle\liminf_{n \rightarrow \infty}\frac{\ent(\alpha_1) + \ent(\alpha_2)}{\ent(\alpha_1) + \ent(\alpha_2) + o(1)}\\
																					 &= 1.
\end{align*}
Since the finite-state dimension of a sequence cannot exceed one, then, by Theorem \ref{normalchar}, we know that $(R_1,R_2)$ is $(\alpha_1 \times \alpha_2)$-normal. \qedhere
\end{proof}

%%%%%%%%%%%%%%%%%%%%%%%%%%%%%%%%%%%%%%%%%%%%%%%%%%%%%%%%%%%%

\bibliographystyle{plain}
\bibliography{Master}

\end{document}